\documentclass[11pt]{article}
\usepackage{amsmath,amssymb,amsthm,amsfonts,latexsym,xspace,enumerate,graphicx,color,eucal,upgreek,mathtools}
\usepackage[backref,colorlinks,bookmarks=true]{hyperref}

\setlength{\textwidth}{6.5 in}
\setlength{\textheight}{9in}
\setlength{\oddsidemargin}{0in}
\setlength{\topmargin}{0in}
\addtolength{\voffset}{-.5in}

\newtheorem{theorem}{Theorem}[section]
\newtheorem*{namedtheorem}{\theoremname}
\newcommand{\theoremname}{testing}

\newtheorem{lemma}[theorem]{Lemma}

\newtheorem{claim}[theorem]{Claim}

\newtheorem{observation}{Observation}
\newtheorem{corollary}[theorem]{Corollary}

\newtheorem{question}[theorem]{Question}
\newtheorem{cons}{Constraints}

\newtheorem{vars}{Variables}
\theoremstyle{definition}
\newtheorem{definition}[theorem]{Definition}

\theoremstyle{plain}
\newtheorem{Alg}{Algorithm}

\renewenvironment{proof}{\noindent{\textbf{Proof:}}} {$\blacksquare$\vskip \belowdisplayskip}

\newcommand{\ignore}[1]{}




\newcommand{\poly}{\mathrm{poly}}





\newcommand{\calC}{{\mathcal{C}}}

\newcommand{\calH}{\mathcal{H}}

\newcommand{\calR}{\mathcal{R}}




\newcommand{\norm}[1]{\left\lVert #1 \right\rVert}

\newcommand{\set}[1]{\left\{ #1 \right\}}


\title{Computing a Nonnegative Matrix Factorization \--- Provably}
\author{Sanjeev Arora \thanks{Princeton University, Computer Science Department and Center for Computational Intractability.
Email: {\tt arora@cs.princeton.edu}. This work is supported by the NSF grants CCF-0832797 and CCF-1117309.} \and
Rong Ge \thanks{Princeton University, Computer Science Department and Center for Computational Intractability.
Email: {\tt rongge@cs.princeton.edu}. This work is supported by the NSF grants CCF-0832797 and CCF-1117309.} \and
Ravi Kannan \thanks{Microsoft Research labs., India. Email:{\tt kannan@microsoft.com}} \and
Ankur Moitra \thanks{Institute for Advanced Study, School of Mathematics.
Email: {\tt moitra@ias.edu}.
Research supported in part 
by NSF grant No.
DMS-0835373 and by an NSF Computing and Innovation Fellowship.}
}

\begin{document}
\maketitle

\begin{abstract}
In the {\em Nonnegative Matrix Factorization} (NMF) problem we are given an $n \times m$ nonnegative matrix $M$ and an integer $r > 0$. Our goal is to express $M$ as $A W$ where $A$ and $W$ are nonnegative matrices of size $n \times r$ and $r \times m$ respectively. In some applications, it makes sense to ask instead for the product $AW$ to approximate $M$ -- i.e. (approximately) minimize $\norm{M - AW}_F$
where $\norm{}_F$ denotes the Frobenius norm; we refer to this as  {\em Approximate NMF}. 

This problem has a rich history spanning quantum mechanics, probability theory, data analysis, polyhedral combinatorics, communication complexity, demography, chemometrics, etc. In the past decade NMF has become enormously popular in machine learning, where $A$ and $W$ are computed using a variety of local search heuristics. Vavasis recently proved that this problem is NP-complete. (Without the restriction that $A$ and $W$ be nonnegative, both the exact and approximate problems can be solved optimally via the singular value decomposition.) 

We initiate a study of when this problem is solvable in polynomial time.
Our results are the following:
\begin{enumerate}
\item We give a polynomial-time algorithm for exact and approximate NMF for every constant $r$.
Indeed NMF is most interesting in applications precisely when $r$ is small. 
\item We complement this with a hardness result, that if exact $NMF$ can be solved in time $(nm)^{o(r)}$, $3$-SAT has a sub-exponential time algorithm. This rules out substantial improvements to the above algorithm. 
\item We give an algorithm that runs in time polynomial in $n$, $m$ and $r$ under the {\em separablity} condition identified by Donoho and Stodden in 2003. The algorithm may be practical since it is simple and noise tolerant (under benign assumptions). Separability is believed to hold in many practical settings.
\end{enumerate}
To the best of our knowledge, this last result is the first example of a polynomial-time algorithm that provably works under a non-trivial condition on the input and we believe that this will be an interesting and important direction for future work. 

\end{abstract}

\setcounter{page}{0} \thispagestyle{empty}
\newpage

\section{Introduction}

In the {\em Nonnegative Matrix Factorization} (NMF) problem we are given an $n \times m$ matrix $M$ with nonnegative real entries (such a matrix will be henceforth called ``nonnegative'') and an integer $r > 0$. Our goal is to express $M$ as $A W$ where $A$ and $W$ are nonnegative matrices of size $n \times r$ and $r \times m$ respectively. We refer to $r$ as the {\em inner-dimension} of the factorization and the smallest value of $r$ for which there is such a factorization as the {\em nonnegative rank} of $M$. An equivalent formulation is that our goal is to write $M$ as the sum of $r$ nonnegative rank-one matrices.\footnotemark[1] We note that $r$ must be at least the rank of $M$ in order for such a factorization to exist. In some applications, it makes sense to instead ask for $AW$ to be a good approximation to $M$ in some suitable matrix norm. We refer to the problem of finding a nonnegative $A$ and $W$ of inner-dimension $r$ that (approximately) minimizes $\norm{M - AW}_F$ as {\em Approximate NMF}, where $\norm{}_F$ denotes the Frobenius norm. Without the restriction that $A$ and $W$ be nonnegative, the problem can be solved exactly via singular value decomposition \cite{GV}. 

\footnotetext[1]{It is a common misconception that  since the real rank is the maximum number of linearly independent columns, the nonnegative rank must be the size of the largest set of columns in which no column can be written as a nonnegative combination of the rest. This is false, and has been the source of many incorrect proofs demonstrating a gap between rank and nonnegative rank. A correct proof finally follows from the results of Fiorini et al~\cite{FRT}.}

NMF is a fundamental problem that has been independently introduced in a number of different contexts and applications. 
 Many interesting heuristics and local search algorithms (including the familiar Expectation Maximization or EM) have been proposed to find such factorizations. One compelling family of applications is data analysis, where a nonnegative factorization is computed in order to extract certain latent relationships in the data and has been applied to image segmentation \cite{LS99}, \cite{LS00} information retrieval \cite{Hof} and document clustering \cite{XLG}. NMF also has applications in fields such as chemometrics \cite{LS} (where the problem has a long history of study under the name {\em self modeling curve resolution}) and biology (e.g. in vision research \cite{BB}): in some cases, the underlying physical model for a system has natural restrictions that force a corresponding matrix factorization to be nonnegative. In demography  (see e.g., \cite{Hen}), NMF is used to model the dynamics of marriage through a mechanism similar to the chemical laws of mass action. In combinatorial optimization, Yannakakis \cite{Yan} characterized the number of extra variables needed to succinctly describe a given polytope as the nonnegative rank of an appropriate matrix (called the ``slack matrix''). In communication complexity, Aho et al \cite{AUY} showed that the log of the nonnegative rank of a Boolean matrix is polynomially related to its deterministic communication complexity - and hence the famous Log-Rank Conjecture of Lovasz and Saks \cite{LOG} is equivalent to showing a quasi-polynomial relationship  between real rank
and nonnegative rank for  Boolean matrices. In complexity theory, Nisan used nonnegative rank to prove lower bounds for non-commutative models of computation~\cite{Nisan}. Additionally, the 1993 paper of Cohen and Rothblum~\cite{CR93} gives a long list of other applications in statistics and quantum mechanics. That paper also gives an exact algorithm that runs in exponential time. 

\begin{question} \label{qs:1}
Can a nonnegative matrix factorization be computed efficiently when the inner-dimension, $r$, is small? 
\end{question}
Vavasis recently proved that the NMF problem is $NP$-hard when $r$ is large
\cite{Vav}, but this only rules out an algorithm whose running time is polynomial in $n$, $m$ and $r$. Arguably, in most significant applications, $r$ is small.
Usually the algorithm designer posits a two-level generative model for the data and uses NMF to compute ``hidden'' variables that explain the data. This explanation is only interesting when the number of hidden variables ($r$) is much smaller than the number of examples ($m$) or the number of observations per example ($n$).  In information retrieval, we often take $M$ to be a ``term-by-document'' matrix where the $(i, j)^{th}$ entry in $M$ is the frequency of occurrence of the $i^{th}$ term in the $j^{th}$ document in the database. In this context, a NMF computes $r$ ``topics'' which are each a distribution on words (corresponding to the $r$ columns of $A$) and each document (a column in $M$) can be expressed as a distribution on topics given by the corresponding column of $W$ \cite{Hof}. This example will be a useful metaphor for thinking about nonnegative factorization. In particular it justifies the assertion  $r$ should be small -- the number of topics should be much smaller than the total number of documents in order for this representation to be meaningful. See Section~\ref{sec:appendix:remarks} for more details.

Focusing on applications, and the overwhelming empirical evidence that heuristic algorithms do find good-enough factorizations in practice, motivates our next question.

\begin{question} \label{qs:2}
Can we design very efficient algorithms for NMF if we make reasonable assumptions about $M$? 
\end{question}

\subsection{Our Results}

Here we largely resolve Question~\ref{qs:1}. We give both an algorithm for accomplishing this algorithmic task that runs in polynomial time for any constant value of $r$ and we complement this with an intractability result which states that assuming the Exponential Time Hypothesis \cite{IP} no algorithm can solve the exact NMF problem in time $(nm)^{o(r)}$. 

\begin{theorem}
There is an algorithm for the Exact NMF problem (where $r$ is the target inner-dimension) that runs in time $O( (nm)^{r^2 2^r})$. 
\end{theorem}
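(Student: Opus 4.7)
My plan is to reformulate exact NMF as a decision question in the existential theory of the reals and then invoke Renegar's algorithm (or the Basu--Pollack--Roy procedure), which decides satisfiability of $s$ polynomial (in)equalities of degree at most $d$ in $k$ real variables in time $(sd)^{O(k)}$. The task then becomes driving the variable count $k$ all the way down to $O(r^2 2^r)$, since the naive encoding with the $nr+rm$ entries of $A$ and $W$ as variables is hopelessly too large.

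The core compression exploits the linear-algebraic rigidity of a full-rank factorization. First, if an NMF of inner-dimension $r$ exists then $\mathrm{rank}(M)\le r$. In the clean case $\mathrm{rank}(M)=r$, compute a real rank-$r$ factorization $M=BC$ with $B\in\R^{n\times r}$ and $C\in\R^{r\times m}$ by Gaussian elimination. Any valid NMF $M=AW$ (necessarily with $A$ and $W$ of rank $r$ in this case) must then satisfy $A = BX$ and $W = X^{-1}C$ for some invertible $X\in\R^{r\times r}$, so the existence question becomes: does there exist an invertible $X$ with $BX\ge 0$ and $X^{-1}C\ge 0$ entrywise? Clearing $X^{-1}$ via $\mathrm{adj}(X)/\det(X)$ and case-splitting on the sign of $\det(X)$ yields a polynomial system in $r^2$ variables with $O(nm)$ inequalities of total degree at most $r$, decidable in time $(nm)^{O(r^2)}$.

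The main obstacle is the ``deficient rank'' case $\mathrm{rank}(M)<r$, where the parametrization $A=BX$ breaks down: some columns of $A$ are allowed to lie outside $\mathrm{col}(M)$ provided that the $W$-weighted combinations reconstructing the columns of $M$ still cancel out the extraneous components, and there is a dual subtlety on the row side for $W$. The way I would handle this is to enumerate over the combinatorial structure that pins down the failure mode---which subset of the $r$ columns of $A$ spans $\mathrm{col}(M)$, and how the remaining columns decompose along a fixed complement, with the analogous choices for $W$. Each such choice yields an $O(r^2)$-variable polynomial subsystem, and folding the $O(2^r)$-sized enumeration inside the polynomial system (rather than paying for it multiplicatively outside) is what blows the effective variable count up to $r^2 2^r$ and produces the stated $(nm)^{r^2 2^r}$ bound.

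A secondary, more routine issue is that Renegar's procedure returns algebraic witnesses, so one still has to extract an explicit nonnegative pair $(A,W)$ of polynomially bounded bit-complexity; this can be done using the standard height bounds on sample points within the cells of the resulting sign-invariant decomposition.
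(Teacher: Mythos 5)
Your high-level framing---reduce exact NMF to feasibility of a polynomial system over the reals, drive the variable count down to $O(r^2 2^r)$, and invoke Renegar/Basu--Pollack--Roy---is exactly the paper's strategy, and your treatment of the full-rank case $\mathrm{rank}(M)=r$ is sound and equivalent in spirit to the paper's Simplicial Factorization lemma (your $A=BX$, $W=X^{-1}C$ with $X$ cleared via adjugate and determinant-sign case split is a slightly different parametrization of the same $r^2$-variable, or in the paper $2r^2$-variable, polynomial system).

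The gap is in the deficient-rank case, which is the substantive content of the theorem, and your proposal does not actually surmount it. When $\mathrm{rank}(A)=s<r$, the difficulty is not ``which subset of the columns of $A$ spans $\mathrm{col}(M)$'' as a single global choice, nor how stray columns ``decompose along a fixed complement.'' The real issue is that each column $M_i$ may require a \emph{different} maximal independent subset $U_i\subseteq [r]$ of $A$'s columns for its nonnegative representation (so that $W_i$ is supported on $U_i$ and equals $(A_{U_i})^+ M_i$), and dually each row $M^j$ requires its own subset $V_j$ for $A'^j$. This is a per-column choice function $\sigma_{W'}\colon [m]\to 2^{[r]}$ and a per-row choice function $\sigma_{A'}\colon [n]\to 2^{[r]}$; a priori there are up to $(2^r)^{m+n}$ such pairs, which no amount of ``folding inside the polynomial system'' can absorb---the assignment itself is discrete data depending on $m$ and $n$, not a bounded set of real unknowns, so it cannot be made into $O(r^2 2^r)$ real variables of bounded degree. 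The paper's central lemma (the one you are missing) is that one may restrict attention to \emph{lexicographically minimal} bases, and the resulting choice functions are realizable as \emph{simplicial partitions}---a geometric family of partitions induced by nested hyperplane arrangements whose VC-dimension is $O(r)$---so there are only $\mathrm{poly}(m,n)$ of them for fixed $r$, and the paper gives an explicit enumeration algorithm (via encoding each hyperplane by at most $r$ witness points from the data, Lemma~\ref{lemma:char} and Theorem~\ref{thm:enumerate}). Once the two choice functions are fixed by this enumeration, the remaining unknowns are the $|\calC(A)|\le 2^r$ transformation matrices (each $r\times r$) on each side, yielding the $O(r^2 2^r)$ variables. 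Without an argument of this kind bounding the number of admissible choice functions, your plan stalls at a factor exponential in $m+n$.
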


This result is based on algorithms for deciding the first order theory of the reals - roughly the goal is to express the decision question of whether or not the matrix $M$ has nonnegative rank at most $r$ as a system of polynomial equations and then to apply algorithms in algebraic geometry to determine if this semi-algebraic set is non-empty. The complexity of these procedures is dominated by the number of distinct variables occurring in the system of polynomial equations.  In fact, the number of distinct variables plays an analogous role to VC-dimension, in a sense and the running time of algorithms for determining if a semi-algebraic set is non-empty depend exponentially on this quantity. Additionally these algorithms can compute successive approximations to a point in the set at the cost of an additional factor in the run time that is polynomial in the number of bits in the input and output. The naive formulation of the NMF decision problem as a non-emptiness problem is to use $nr + mr$ variables, one for each entry in $A$ or $W$~\cite{CR93}. This would be unacceptable, since even for constant values of $r$, the associated algorithm would run in time exponential in $n$ and $m$. 

At the heart of our algorithm is a structure theorem -- based on a novel method for reducing the number of variables needed to define the associated semi-algebraic set. We are able to express the decision problem for nonnegative matrix factorization using $r^2 2^r$ distinct variables (and we make use of tools in geometry, such as the notion of a separable partition, to accomplish this \cite{Har}, \cite{AO}, \cite{HR}). Thus we obtain the algorithm quoted in the above theorem. All that was known prior to our work (for constant values for $r$) was an exponential time algorithm, and local search heuristics akin to the Expectation-Maximization (EM) Algorithm with unproved correctness or running time.

A natural requirement on $A$ is that its columns be linearly independent.
In most applications, NMF is used to express a large number of observed variables using a small number of hidden variables. 
If the columns of $A$ are not linearly independent then Radon's Lemma implies that  this expression can be far from unique. In the example from information retrieval, this translates to: there are candidate documents that can be expressed as a convex combination of one set of topics, or could alternatively be expressed as a convex combination of an entirely disjoint set of topics (see Section~\ref{subsec:simplicialjustify}). 
When we add the requirement that the columns of $A$ be linearly independent, we refer to the associated problem as the {\em Simplicial Factorization} (SF) problem. In this case the doubly-exponential dependence on $r$ in the previous theorem can be improved to singly-exponential.
Our algorithm is again based on the first order theory of the reals, but here the system of equations is much smaller so in practice one may be able to use heuristic approaches to solve this system (in which case, the validity solution can be easily checked). 
\begin{theorem}
There is an algorithm for the Exact SF problem (where $r$ is the target inner-dimension) that runs in time $O( (nm)^{r^2 })$. 
\end{theorem}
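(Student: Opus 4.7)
The plan is to reduce the SF decision problem to an emptiness question for a semi-algebraic set in only $r^2$ variables, then invoke a standard algorithm for the existential theory of the reals (e.g., Renegar's), whose running time is $(sd)^{O(k)}$ for $s$ polynomials of degree at most $d$ in $k$ variables.

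First, if $M = AW$ is a simplicial factorization then $A$ has $r$ linearly independent columns, so $\mathrm{rank}(M) = \mathrm{rank}(AW) = \mathrm{rank}(W) = r$, since left multiplication by a full-column-rank matrix preserves rank.  In polynomial time one verifies $\mathrm{rank}(M) = r$ (rejecting otherwise) and computes any rank-$r$ factorization $M = BC$ with $B \in \R^{n \times r}$ and $C \in \R^{r \times m}$ both of full rank, for instance via the SVD.  The key structural observation is that $\mathrm{col}(A) \supseteq \mathrm{col}(M) = \mathrm{col}(B)$ and both sides have dimension $r$, hence $\mathrm{col}(A) = \mathrm{col}(B)$. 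Consequently there is a unique invertible $T \in \R^{r \times r}$ with $A = BT$; and from $BC = AW = B(TW)$ plus the full column rank of $B$ we get $W = T^{-1} C$.  Therefore $M$ admits a simplicial factorization of inner-dimension $r$ if and only if there exists an invertible $r \times r$ matrix $T$ satisfying $BT \geq 0$ and $T^{-1} C \geq 0$ entrywise.

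The constraints $BT \geq 0$ form $nr$ linear inequalities in the $r^2$ entries of $T$.  For the constraints on $T^{-1} C$, clear denominators using $T^{-1} = \mathrm{adj}(T)/\det(T)$: the condition $(T^{-1} C)_{ij} \geq 0$ becomes the polynomial condition $\det(T) \cdot \bigl(\mathrm{adj}(T)\, C\bigr)_{ij} \geq 0$, combined with $\det(T) \neq 0$ (which can be split into two cases $\det(T) > 0$ and $\det(T) < 0$, or left as a strict non-equality).  All polynomials involved have degree $O(r)$, there are $O(nm)$ of them, and they live in $r^2$ variables.  Renegar's algorithm then decides feasibility in time $(nm \cdot r)^{O(r^2)} = (nm)^{O(r^2)}$, matching the claimed bound.

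The main obstacle is the structural reduction itself: identifying that, unlike the general NMF problem (which is why the previous theorem carries the extra $2^r$ factor from enumerating sign/separable-partition patterns), the simplicial assumption rigidly forces $A = BT$ and $W = T^{-1}C$, collapsing the natural $(n+m)r$-variable formulation down to $r^2$ variables.  Once this is established, one need only handle the bookkeeping for invertibility of $T$ and the sign flip when multiplying through by $\det(T)$; both are routine, and the remainder is a black-box application of known real-algebraic-geometry machinery.
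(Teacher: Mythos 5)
Your proposal is correct and reaches the same $(nm)^{O(r^2)}$ bound, but parameterizes the semi-algebraic set differently from the paper. The paper's Structure Lemma (Lemma~\ref{lem:simplicialstruc}) fixes a column basis and a row basis for $M$, introduces \emph{two} unknown $r\times r$ matrices $T_C$ and $T_R$ (so $2r^2$ variables), writes the candidate factors as $W = T_C M_C$ and $A = M_R T_R$, and couples them with the bilinear consistency constraint $M_R T_R T_C M_C = M$ of degree~$2$; invertibility is never asserted explicitly because the rank condition $\mathrm{rank}(M)=r$ makes any valid factorization simplicial automatically. You instead fix a single rank-$r$ factorization $M = BC$, observe that every candidate $A$ equals $BT$ for a unique invertible $T \in \mathrm{GL}_r$ with $W = T^{-1}C$ forced, and clear the inverse via $\det$ and $\mathrm{adj}$. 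This halves the variable count to $r^2$ at the cost of raising the constraint degree to $O(r)$ and having to carry a $\det(T)\neq 0$ side condition (needed, since otherwise $T=0$ trivially satisfies the cleared inequalities). Both choices wash out inside Renegar's $(sd)^{O(k)}$ bound. The underlying insight is identical --- all rank-$r$ factorizations of $M$ lie in a single $\mathrm{GL}_r$-orbit, so the search space is only $O(r^2)$-dimensional --- but your formulation is slightly leaner and makes the role of invertibility explicit, whereas the paper's is more symmetric between rows and columns and keeps the constraint degree constant, which it then reuses verbatim as a special case of the general NMF structure theorem.
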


We complement these algorithms with a fixed parameter intractability result. We make use of a recent result of Patrascu and Williams \cite{PW} (and engineer low-dimensional gadgets inspired by the gadgets of Vavasis \cite{Vav}) to show that under the Exponential Time Hypothesis \cite{IP}, there is no exact algorithm for NMF that runs in time $(nm)^{o(r)}$. 
 This intractability result holds also for the SF problem.

\begin{theorem}
If there is an exact algorithm for the  SF problem (or for the NMF problem) that runs in time $O((nm)^{o(r)})$ then $3$-SAT can be solved in $2^{o(n)}$ time on instances with $n$ variables. 
\end{theorem}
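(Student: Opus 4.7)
The plan is a reduction chain: $3$-SAT on $n$ variables $\to$ a parameterized intermediate problem $\Pi_k$ $\to$ SF (or NMF), arranged so that a sub-$(nm)^r$ algorithm for the target translates into a sub-$N^k$ algorithm for $\Pi_k$, and from there into a $2^{o(n)}$ algorithm for $3$-SAT.

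First, I would invoke the Patrascu--Williams framework. Their reductions show that under ETH, certain problems parameterized by $k$ (for instance $k$-CLIQUE on structured instances, or a ``$k$-partite selection'' problem) require $N^{\Omega(k)}$ time. Concretely, one partitions the $n$ Boolean variables into $k$ blocks of $n/k$ each, replaces a truth assignment with a $k$-tuple of choices from $N = 2^{n/k}$ candidate partial assignments per block, and imposes pairwise compatibility constraints between blocks. An $N^{o(k)}$ algorithm for this selection problem therefore implies a $2^{o(n)}$ algorithm for $3$-SAT.

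Second, I would engineer low-dimensional geometric gadgets, in the spirit of Vavasis, that encode the PW-selection problem as an Exact SF instance. Recall Vavasis's translation: after a diagonal rescaling, an exact SF $M = AW$ with inner dimension $r$ and $A$ of full column rank is equivalent to an Intermediate Simplex instance, i.e.\ finding an $(r{-}1)$-simplex whose vertices lie inside a prescribed outer simplex and whose interior contains a given point set. For each of the $k$ blocks I would build a \emph{constant-dimensional selector gadget}: a cluster of points that forces some fixed-size bundle of vertices of the inner simplex to align with one of $N$ candidate positions, corresponding to the choice of a partial assignment for that block. Pairwise compatibility between blocks is enforced by additional witness points, whose containment in the simplex is feasible exactly when the selected partial assignments agree on the shared variables. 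The construction is calibrated so the total inner dimension is $r = O(k)$, while $n$ and $m$ stay polynomial in $N$.

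Finally, I would chain the running times. By design, the SF instance is a YES-instance iff the PW-selection instance is, so an $(nm)^{o(r)}$ algorithm for SF yields an $N^{o(k)}$ algorithm for $\Pi_k$, and hence a $2^{o(n)}$ algorithm for $3$-SAT. The same construction handles NMF because SF is a sub-case: if our instance is engineered so that every feasible factorization automatically has $A$ of full column rank (e.g.\ by including a ``spread'' of points that force the $r$ chosen vertices into general position), then an NMF solver is forced to solve the SF problem on it. The main obstacle I expect is the second step: each selector gadget must cost only $O(1)$ of inner dimension per block, which is far tighter than Vavasis's original NP-hardness reduction, where each combinatorial choice typically consumed dimension proportional to the number of options. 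Designing a ``select-one-of-$N$'' gadget whose geometric cost is a constant number of simplex vertices, while still faithfully transmitting the pairwise compatibility constraints across blocks without inflating $r$, is the technical crux of the argument.
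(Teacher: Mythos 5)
Your high-level framework---invoke Patrascu--Williams for an ETH-based $N^{\Omega(k)}$ lower bound on a parameterized problem, build constant-dimension gadgets encoding a select-one-of-$N$ choice, route through Vavasis's Intermediate Simplex $\leftrightarrow$ SF equivalence, and observe that SF is a subcase of NMF---matches the paper's strategy, and you correctly identify the technical crux (a selector gadget whose geometric cost is $O(1)$ per block). However, your choice of intermediate hard problem differs from the paper's in a way that matters, and this is where a genuine gap opens up.

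You propose reducing from a ``$k$-partite selection'' or $k$-CLIQUE-like problem with \emph{pairwise compatibility constraints between blocks}. The paper instead reduces from $d$-SUM: given $N$ values in $[0,1]$, choose $d$ of them (with repetition) summing to exactly $d/2$. This choice is not cosmetic. With pairwise compatibility you must geometrically encode $\binom{k}{2}$ consistency checks between pairs of gadgets; it is far from clear this can be done without inflating the ambient dimension beyond $O(k)$, and you offer no mechanism for it. By contrast, $d$-SUM collapses all inter-block interaction into a \emph{single global linear constraint}---the total sum must equal $d/2$. The paper exploits this: each of the $d$ blocks gets a planar hexagon gadget living in $3$ of the coordinates, forcing the local choice of some $s_i$, and then a single additional coordinate $w$ is constrained (via linear inequalities tied to the hexagon geometry) to ``accumulate'' the selected values, so that a distinguished point $Q$ can be placed in the inner simplex iff the sum equals $d/2$. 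Total inner dimension is $r = 3d+2$. Your plan as written leaves the compatibility-encoding step unresolved, which is exactly the part the paper's switch to $d$-SUM is designed to sidestep. Finally, your worry about forcing general position for the NMF case is unnecessary: the constructed instance has $\mathrm{rank}(M)=r$, so any nonnegative factorization of inner dimension $r$ automatically has $A$ of full column rank, making SF a literal subcase of NMF on these instances.
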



Now we turn to Question~\ref{qs:2}. We consider the nonnegative matrix factorization problem under the "separability" assumption introduced by Donoho and Stodden \cite{DS} in the context of image segmentation. Roughly, this assumption asserts that there are $r$ rows of $A$ that can be permuted to form the identity matrix. If we knew the names of these rows, then computing a nonnegative factorization would be easy. The challenge in this context, is to avoid brute-force search (which runs in time $n^r$) and to find these rows in time polynomial in $n$, $m$ {\em and} $r$. 
To the best of our knowledge the following is the first example of a polynomial-time algorithm that provably works under a non-trivial condition on the input. 

\begin{theorem}
There is an exact algorithm that can compute a separable, nonnegative factorization $M = AW$ (where $r$ is the inner-dimension) in time polynomial in $n$, $m$ and $r$ if such a factorization exists. 
\end{theorem}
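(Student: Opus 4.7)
The plan is to exploit separability geometrically: once the rows of $M$ are rescaled to have unit $\ell_1$ norm, the rows of $W$ (up to their original scale) appear verbatim as rows of $M$ and are precisely the \emph{vertices} of the convex hull of the normalized rows. Finding them therefore reduces to convex-hull membership tests, which are linear programs.

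First I would fix notation: separability says there is a permutation of the rows of $A$ that puts the $r \times r$ identity on top, so there exist indices $i_1,\ldots,i_r$ with $A$-row $i_k$ equal to $e_k$. The key observation is the identity of rows $M_{i_k} = (AW)_{i_k} = W_k$, together with the fact that for any other row $i$ one has $M_i = \sum_k A_{ik}\, W_k$ with $A_{ik}\ge 0$. Throwing away zero rows (which contribute nothing), let $R_i = \|M_i\|_1$ and $\bar M_i = M_i / R_i$, $\bar W_k = W_k / \|W_k\|_1$. A short calculation shows
\[
\bar M_i \;=\; \sum_{k=1}^{r} \lambda_{ik}\,\bar W_k, \qquad \lambda_{ik} = \frac{A_{ik}\|W_k\|_1}{R_i} \ge 0, \qquad \sum_k \lambda_{ik} = 1,
\]
so every normalized row of $M$ is a convex combination of the $r$ normalized rows $\{\bar W_k\}$. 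Under the mild non-degeneracy assumption that no row of $W$ (after normalization) is itself a convex combination of the others, the set $\{\bar W_1,\ldots,\bar W_r\}$ is exactly the set of extreme points of the convex hull of $\{\bar M_i\}$.

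The algorithm is then: (1) delete zero rows and duplicate rows of $M$, and $\ell_1$-normalize what remains; (2) for each surviving row $\bar M_i$, solve the linear program that asks whether $\bar M_i$ is a convex combination of the other normalized rows (variables $\mu_j\ge 0$ with $\sum_{j\ne i}\mu_j = 1$ and $\sum_{j\ne i}\mu_j\bar M_j = \bar M_i$); the infeasible $i$'s are exactly the anchor indices $i_1,\ldots,i_r$. This is $O(n)$ LPs of size polynomial in $n,m,r$. (3) Read off $W$ by taking these anchor rows of the original $M$; (4) recover $A$ row-by-row by solving, for each $i$, the nonnegative feasibility LP $\{x \ge 0 : x^\top W = M_i\}$, which is guaranteed to be feasible by the existence of the factorization. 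Correctness follows because step (2) identifies exactly the vertices of the convex hull, which by the observation above are the rows of $W$ (equivalently, the anchor rows of $M$).

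The main obstacle is not computational but structural: the argument requires that normalization not collapse any anchor, and that the rows of $W$ be in ``convex-general position'' (no one is a convex combination of the rest). Without this, multiple factorizations with the same anchor set exist and the vertex set of the convex hull may be a strict subset of $\{\bar W_k\}$, so the LPs in step (2) might flag fewer than $r$ indices; the fix is either to assume this non-degeneracy outright (it is implicit in the meaningful uses of separability) or to argue that any row flagged as non-vertex can be absorbed into the others without harming the factorization, yielding a valid $M=AW$ of inner-dimension equal to the number of vertices found, which is the honest notion of nonnegative rank in the separable setting.
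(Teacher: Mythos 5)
Your proposal follows essentially the same route as the paper: $\ell_1$-normalize the rows, observe that separability forces the normalized rows of $W$ to appear verbatim among the rows of $M$ as the extreme points of their convex hull, detect them by $n$ linear programs (one per row, asking whether the row is a convex combination of the others after removing duplicates), then solve a nonnegative least-squares/LP per row of $M$ to recover $A$. This is exactly what the paper does, where a row detected this way is called a \emph{loner}.

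The place where you correctly flag a gap but do not close it is the ``convex-general position'' assumption, which the paper calls \emph{simplicial} ($W$ is simplicial if no row lies in the convex hull of the rest). The paper cannot just assume this; it proves a reduction (Lemma~\ref{lemma:alt}) showing that \emph{any} separable factorization of inner-dimension at most $r$ can be converted to one in which $W$ is simplicial, of inner-dimension at most $r$, and this is what licenses your step (2) to flag exactly the right rows. The content of that lemma is the absorption operation you sketch --- if $W^j = \vec u^{\,T} W$ with $\vec u \ge 0$, $\lVert \vec u\rVert_1 = 1$, $u_j = 0$, then for each row of $A$ zero out the $j$th entry and add $A^{j'}_j \vec u$ to the rest, then delete column $j$ of $A$ and row $j$ of $W$ --- but the nontrivial thing that needs to be checked, and that your last sentence glosses over, is that \emph{separability is preserved}: the anchor row $f(i)$ for any column $i \neq j$ had a zero in position $j$ to begin with, so the operation leaves it untouched, and the anchor for column $j$ simply disappears when the column is deleted. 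Without that verification, the ``absorb and recurse'' plan could in principle produce a factorization that is no longer separable, and the algorithm's output would not meet the specification. The rest of your reasoning (normalization forces $\sum_k A_{ik}=1$ automatically by nonnegativity, the LP feasibility test characterizes exactly the loner/anchor rows once $W$ is simplicial, and the recovered $A$ is then forced to be separable since $e_k$ is the unique nonnegative unit-$\ell_1$ solution on each anchor row) is sound and matches the paper's argument.
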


Donoho and Stodden \cite{DS} argue that the separability condition is naturally met in the context of image segmentation.
Additionally, Donoho and Stodden prove that separability in conjunction with some other conditions guarantees that the solution to the NMF problem is unique. Our theorem above is an algorithmic counterpart to their results, but requires only  separability. Our algorithm can also be made noise tolerant, and hence works even when the separability condition only holds in an approximate sense. Indeed, an approximate separability condition is regarded as a fairly benign assumption and is believed to hold in many practical contexts in machine learning. For instance it is usually satisfied by model parameters fitted to various generative models (e.g. LDA \cite{LDA} in information retrieval). (We thank David Blei for this information.)

Lastly, we consider the case in which the given
matrix $M$ does not have an exact low-rank NMF but rather
can be approximated by a nonnegative factorization with small inner-dimension. 

\begin{theorem}
There is a  $2^{\poly(r\log(1/\epsilon))}\poly(n,m)$-time algorithm that, given a $M$ for which there is a nonnegative factorization $AW$ (of inner-dimension $r$) which is
an $\epsilon$-approximation to $M$ in Frobenius norm, computes $A'$ and $W'$ satisfying 
 $$\norm{M-A'W'}_F \le O(\epsilon^{1/2}r^{1/4})\norm{M}_F.$$ 
\end{theorem}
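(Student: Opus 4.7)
The plan is to reduce approximate NMF to a low-dimensional search via the SVD, and then solve the resulting $O(r^2)$-variable problem by discretization. First I would compute the top-$r$ SVD $M^{(r)} = U_r \Sigma_r V_r^T$ in $\poly(n, m)$ time. Since $AW$ is rank at most $r$ and approximates $M$ within $\eps \norm{M}_F$, the best rank-$r$ approximation $M^{(r)}$ must also be within $\eps \norm{M}_F$ of $M$, whence $\norm{AW - M^{(r)}}_F \le 2\eps \norm{M}_F$.

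Next I would restrict attention to factorizations of the form $A' = U_r X$, $W' = Y V_r^T$, where $X, Y \in \mathbb{R}^{r \times r}$. Using orthonormality of $U_r$ and $V_r$, the Frobenius error decomposes as $\norm{M - A' W'}_F^2 = \norm{M - M^{(r)}}_F^2 + \norm{\Sigma_r - XY}_F^2$ (the cross-term vanishes because $M - M^{(r)}$ is orthogonal, in Frobenius inner product, to every matrix whose column space lies in $\mathrm{col}(U_r)$). The problem thus reduces to finding $X, Y$ with $U_r X \ge 0$, $Y V_r^T \ge 0$, and $\norm{\Sigma_r - XY}_F$ small---a constrained problem in only $2r^2$ real variables. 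I would discretize each entry of $X$ and $Y$ on a grid of precision $\poly(\eps)$ within a bounded box, yielding $2^{O(r^2 \log(1/\eps))}$ candidates; each can be checked in $\poly(n, m)$ time by verifying the linear nonnegativity constraints and evaluating $\norm{\Sigma_r - XY}_F$.

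The main obstacle, and the heart of the proof, is to argue that some grid point $(X, Y)$ achieves the claimed error $O(\eps^{1/2} r^{1/4}) \norm{M}_F$. For this, I would take the original nonneg pair $(A, W)$ and consider the natural candidates $X^* = U_r^T A$, $Y^* = W V_r$. A direct calculation gives $X^* Y^* = U_r^T (AW) V_r$, which is within $\eps \norm{M}_F$ of $\Sigma_r$ in Frobenius norm. However, the corresponding $A^* = U_r X^* = U_r U_r^T A$ and $W^* = Y^* V_r^T = W V_r V_r^T$ are merely projections of $A$ and $W$ onto the top-$r$ singular subspaces and may have negative entries. The key technical step is to show that these negative parts can be \emph{repaired} by small perturbations $\Delta_X, \Delta_Y$ to $X^*, Y^*$ that restore nonnegativity of $U_r (X^* + \Delta_X)$ and $(Y^* + \Delta_Y) V_r^T$. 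The $\sqrt{\eps}$ factor arises from splitting the total perturbation mass between the two factors so that the cross-terms $\Delta_X Y^* + X^* \Delta_Y$ are controlled by AM--GM, and the $r^{1/4}$ emerges from bounding the operator norms of $X^*$ and $Y^*$ in terms of $\norm{M}_F$ and $r$ via the singular values of $M$ (using that $\sigma_1^2 + \cdots + \sigma_r^2 \le \norm{M}_F^2$). Once such $X, Y$ are shown to exist, rounding them to the grid changes $\norm{\Sigma_r - XY}_F$ by only a polynomial-in-$\eps$ amount, and the output of the enumeration is guaranteed to do at least as well. I expect this negative-mass repair, and the accompanying bookkeeping of operator norms, to be the most delicate part of the argument.
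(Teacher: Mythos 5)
Your proposal takes a genuinely different route from the paper, and I believe the central step has a gap that is not easily filled.

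Your parameterization restricts $A' = U_r X$ and $W' = Y V_r^T$ where $U_r, V_r$ come from the SVD of $M$, so you are committing to factors whose columns (resp.\ rows) live in the top-$r$ singular subspaces of $M$. This is exactly correct in the exact case (if $M=AW$ with $\mathrm{rank}(M)=r$ then automatically $\mathrm{col}(A)=\mathrm{col}(U_r)$ and $\mathrm{row}(W)=\mathrm{row}(V_r)$), but in the approximate case it is a real restriction. Your candidate $X^*=U_r^T A$ gives $A^* = U_r U_r^T A = A - PA$ with $P = I - U_r U_r^T$, so the total negative mass of $A^*$ is governed by $\|PA\|_F$. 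The only bound available is $\|PAW\|_F \le O(\epsilon)\|M\|_F$, and this does \emph{not} control $\|PA\|_F$: when $W$ is ill-conditioned (or nearly rank-deficient), $A$ can carry a component orthogonal to $\mathrm{col}(M^{(r)})$ of size $\Theta(\|M\|_F)$ that is almost annihilated by $W$. Worse, a valid $\Delta_X$ must have $U_r\Delta_X \ge -A^*$ entrywise while simultaneously lying in $\mathrm{col}(U_r)$ column-by-column; there is no reason the entrywise positive part of $-A^*$ can be dominated by something in that subspace without taking $\|\Delta_X\|_F=\Theta(\|M\|_F)$, which destroys the $O(\epsilon^{1/2}r^{1/4})\|M\|_F$ budget (since $\|Y^*\|_F\approx\sqrt r$). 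You flag this ``repair'' as the delicate part but do not supply the argument, and I do not think one exists in the form you describe.

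The paper's proof sidesteps this by working with the SVD of $A$, not of $M$, and by \emph{not} constraining the entire $W'$ to a subspace. It splits $W = W_0 + W_1$ where $W_0$ is the projection of $W$ onto the right singular directions of $A$ with $\sigma_t \ge \delta\|M\|_F$ and $W_1$ is the rest. The point of the threshold is that $A^+$ (the truncated pseudoinverse of $A$) has bounded operator norm $\le 1/(\delta\|M\|_F)$, so $W_0' := A^+ M$ lies in $\mathrm{row}(M)$ and satisfies $\|W_0'-W_0\|_F\le O(\epsilon/\delta)$ --- this is the controlled ``anchor'' you lack. Only $W_0''\approx W_0'$ is enumerated over an $\epsilon$-net; the part $W_1'$ that restores nonnegativity is \emph{not} restricted to a subspace but is instead found by minimizing a weighted quadratic (heavy weight on significant directions of $A$, light weight $\delta^2\|M\|_F^2$ on the rest) subject to $W_0''+W_1'\ge 0$. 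Finally $A'$ is found by nonnegative least squares with no subspace restriction, with the original nonnegative $A$ serving as a feasible witness. Choosing $\delta = \sqrt{\epsilon}/r^{1/4}$ balances the $\delta\sqrt r\|M\|_F$ loss from discarding low directions of $A$ against the $\epsilon/\delta$ loss from the pseudoinverse, giving the claimed bound. In short, the missing idea in your approach is precisely this decoupling: make the enumeration target ($W_0$) small and landable via $A^+M$, and leave the nonnegativity-restoring part ($W_1$) free, penalized only where $A$ cares.
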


The rest of the paper is organized as follows: In Section~\ref{sec:simp} we give an exact algorithm for the SF problem and in Section~\ref{sec:gen} we give an exact algorithm for the general NMF problem. In Section~\ref{sec:hard} we prove a fixed parameter intractability result for the SF problem. And in Section~\ref{sec:sep} and Section~\ref{sec:adv} we give algorithms for the separable and adversarial nonnegative fatorization problems. Throughout this paper, we will use the notation that $M_i$ and $M^j$ are the $i^{th}$ column and $j^{th}$ row of $M$ respectively.

\section{Simplicial Factorization}

\label{sec:simp}
Here we consider the simplicial factorization problem, in which the target inner-dimension is $r$ and 
the matrix $M$ itself has rank $r$. Hence in any factorization $M = AW$ (where $r$ is the inner-dimension),
$A$ must have full column rank and $M$ must have full row rank. 

\subsection{Justification for Simplicial Factorization}
\label{subsec:simplicialjustify}
We first argue that the extra restriction imposed in simplicial factorization is natural in
many contexts: 
Through a re-scaling (see Section~\ref{sec:appendix:separable} for more details),
 we can assume that the
columns of $M$, $A$ and $W$ all have unit $\ell_1$ norm. The factorization $M =AW$  
can be interpreted probabilistically: each column
of $M$ can be expressed as a  convex combination (given by the corresponding column of $W$) of columns in $A$.
In the example in the introduction, columns of $M$ represent
documents and the columns
of $A$ represent ``topics''. Hence a nonnegative factorization is an ``explanation''
: each document can be expressed as a convex combination of the topics. 

But if $A$
does not have full column rank then this explanation is seriously
deficient. This follows from a restatement of Radon's Lemma. Let $conv(A_U)$ be the convex
hull of the columns $A_i$ for $i \in U$. 

\begin{observation}
If $A$ is an $n \times r$ (with $n \geq r$) matrix and $rank(A) < r$, then there are two disjoint sets of columns $U, V \subset [r]$ so that $conv(A_U) \cap conv(A_V) \neq \emptyset$.
\end{observation}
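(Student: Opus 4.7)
The plan is to interpret the rank deficiency as an affine dependence among the columns of $A$ and then split the coefficients by sign to obtain a Radon-style partition. The preceding paragraph sets up the convention that (after rescaling) every column of $A$ is nonnegative and has unit $\ell_1$ norm; this normalization is what makes the observation true. Without it the statement fails, as shown by the $1 \times 2$ matrix with columns $1$ and $2$, which has rank $1 < r = 2$ but convex hulls of distinct single-column sets that do not meet.

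First, since $\mathrm{rank}(A) < r$, the columns $A_1,\dots,A_r$ are linearly dependent, so there exist scalars $\lambda_1,\dots,\lambda_r$, not all zero, with $\sum_{i=1}^r \lambda_i A_i = 0$. Summing the coordinates of this vector identity and using $\|A_i\|_1 = \sum_k (A_i)_k = 1$ (which relies on the nonnegativity of each $A_i$) gives $\sum_{i=1}^r \lambda_i = 0$, upgrading the linear dependence to an affine dependence.

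Finally, let $U = \{i : \lambda_i > 0\}$ and $V = \{i : \lambda_i < 0\}$. These sets are disjoint, and since $\sum_i \lambda_i = 0$ while at least one $\lambda_i$ is nonzero, both $U$ and $V$ are nonempty. Setting $S = \sum_{i \in U}\lambda_i = -\sum_{i \in V}\lambda_i > 0$ and rewriting the dependence as $\sum_{i \in U}(\lambda_i/S)A_i = \sum_{i \in V}(-\lambda_i/S)A_i$ exhibits a single vector in $\R^n$ that is simultaneously a convex combination of $\{A_i\}_{i \in U}$ and of $\{A_i\}_{i \in V}$, so it lies in $\mathrm{conv}(A_U) \cap \mathrm{conv}(A_V)$. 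The only nontrivial step is the use of the unit $\ell_1$-norm normalization to pass from linear to affine dependence; everything else is bookkeeping, and I do not anticipate further obstacles.
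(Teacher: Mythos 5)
Your proof is correct. The paper gives no explicit argument for this observation, stating only that it ``follows from a restatement of Radon's Lemma''; your write-up supplies exactly the standard Radon sign-splitting argument, and you correctly flag that the standing nonnegativity and unit $\ell_1$-norm normalization on the columns of $A$ (set up in the preceding paragraph of the paper) is what upgrades the linear dependence coming from rank deficiency to the affine dependence that Radon's argument requires --- without that normalization the statement as literally written is false, as your $1 \times 2$ counterexample shows.
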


 The observation implies that there is some candidate document $x$ that can be expressed as a convex combination
 of topics (in $U$), or instead can be expressed as a convex combination of 
 an entirely disjoint set ($V$) of topics. 
 The end goal of NMF is often to use the representation of documents as distributions on topics
 to perform various tasks, such as clustering or information retrieval. But if (even given the set of topics
in a database) it is this ambiguous to determine how we should represent a given document as
a convex combination of topics, then the topics we have extracted cannot be very useful for clustering!
 In fact, it seems unnatural to not require the columns of $A$ to be linearly
independent!

Next, one should consider the process (probabilistic, presumably) that
generates the datapoints, namley, columns of $M$. Any reasonable
process for generating columns of $M$ from the 
columns of $A$ would almost surely result in a matrix $M$ whose rank
equals the rank of $A$. 
But then $M$ has the same rank as $A$. 
\vspace*{-0.1in}
\subsection{Algorithm for Simplicial Factorization}
In this Section we give an algorithm that solves the simplicial factorization problem in $(nm)^{O(r)}$ time.  Let $L$ be the maximum bit complexity of any 
coefficient in the input. 

\begin{theorem}
There is an $O((nm)^{O(r^2)})$ time algorithm for deciding if the simplicial
factorization problem has a solution of inner-dimension at most $r$. Furthermore, we can compute a rational approximation to the solution up
to accuracy $\delta$ in time $\poly(L, (nm)^{O(r^2)}, \log 1/\delta)$. 
\end{theorem}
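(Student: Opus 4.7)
The plan is to dramatically reduce the number of variables needed to encode the factorization as a semi-algebraic feasibility problem, from the naive $(n+m)r$ down to $O(r^2)$, and then invoke a standard decision procedure for the first-order theory of the reals (e.g.\ Renegar or Basu-Pollack-Roy), whose running time depends exponentially only on the number of variables.

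The key observation is that because $M$ has rank exactly $r$, we can in polynomial time compute an explicit rank-$r$ factorization $M = BC$, where $B$ is $n \times r$ with full column rank and $C$ is $r \times m$ with full row rank; for instance, via Gaussian elimination over $\mathbb{Q}$. Now suppose $M = AW$ is any simplicial factorization of inner-dimension $r$. Since $A$ has full column rank and $AW = M = BC$, the columns of $A$ span the same $r$-dimensional column space as those of $B$, so there exists a unique invertible $r \times r$ matrix $T$ with
\[
A = BT, \qquad W = T^{-1} C.
\]
Thus the existence of a simplicial factorization of inner-dimension $r$ is equivalent to the existence of an invertible $r \times r$ matrix $T$ such that $BT \geq 0$ and $T^{-1} C \geq 0$ entrywise.

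To phrase this as a semi-algebraic problem with only $O(r^2)$ variables, I will introduce two matrices of unknowns, $T$ and $S$, of size $r \times r$ each, subject to the polynomial equalities $TS = I_r$ (which enforce invertibility and $S = T^{-1}$). The remaining constraints are the $nr + mr$ linear inequalities $(BT)_{ij} \geq 0$ and $(SC)_{ij} \geq 0$. Altogether this is a system in $2r^2$ variables with $O(nm)$ polynomial constraints, each of degree at most $2$. Invoking the decision procedure for the first-order theory of the reals (Basu-Pollack-Roy), the feasibility of this system can be tested using $(nm)^{O(r^2)}$ arithmetic operations on numbers of polynomially bounded bit-length, giving the claimed running time. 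For the second part, Renegar's algorithm produces a sample point in each connected component of the solution set, and can output a rational point within $\ell_\infty$ distance $\delta$ of an actual solution in time $\mathrm{poly}(L, (nm)^{O(r^2)}, \log(1/\delta))$, from which a corresponding $(A,W)$ is recovered by matrix multiplication.

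The conceptually cleanest step, and the one deserving the most care, is the reduction to $T$ and $S$: one must verify that the preprocessing $M = BC$ can be done in polynomial time with bit-complexity polynomial in $L$, and that the equality constraints $TS = I$ together with $BT \geq 0$, $SC \geq 0$ exactly capture simplicial factorizations (no spurious solutions with non-invertible $T$ slip in, since $TS = I$ forces $T$ to be invertible). The main obstacle is not algebraic but rather bookkeeping: confirming that the bit-complexity bounds needed by the semi-algebraic decision procedure are not blown up by the preprocessing and the encoding, so that the overall running time remains $(nm)^{O(r^2)}$ and the approximation version picks up only the expected $\mathrm{poly}(L, \log 1/\delta)$ overhead.
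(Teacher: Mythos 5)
Your proof is correct and follows the same high-level strategy as the paper: reduce the simplicial factorization question to a semi-algebraic feasibility problem in $O(r^2)$ variables and invoke Basu--Pollack--Roy for the decision problem and Renegar (with Grigor'ev--Vorobjov bit bounds) for the rational-approximation version. The specific encoding differs slightly, though. The paper fixes a column basis and a row basis for $M$, writes $M_C$ (columns in that basis) and $M_R$ (rows in that basis), and searches for two independent $r\times r$ matrices $T_C, T_R$ subject to $T_C M_C \geq 0$, $M_R T_R \geq 0$, \emph{and} the explicit degree-two coupling constraint $M_R T_R T_C M_C = M$. You instead compute an explicit rank factorization $M = BC$ and search for $T, S$ subject to $TS = I_r$, $BT \geq 0$, $SC \geq 0$; once $TS = I_r$ holds, the equality $AW = BT\cdot SC = BC = M$ is automatic, so you do not need it as a separate constraint. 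Your version is marginally cleaner for the simplicial case because it directly expresses the fact that any two full-rank factorizations differ by an invertible change of basis; the paper's slightly more indirect formulation (via pseudoinverses and an explicit $AW = M$ constraint) appears to be chosen so that the simplicial lemma is visibly a special case of the machinery developed for the general NMF structure theorem in the next section. Both encodings yield $2r^2$ variables, degree at most $2$, and $O(nm)$ constraints, so the resulting running times are identical.
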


The above theorem is proved by using
Lemma~\ref{lem:simplicialstruc} below to reduce the problem of finding a simplicial factorization to finding a
point inside a semi-algebraic set with $poly(n)$ constraints and $2 r^2$
real-valued variables (or deciding that this set is empty). 
The decision problem can be solved using the well-known algorithm of Basu et. al.\cite{BPR} solves this
problem in $n^{O(r^2)}$ time. We can instead use the algorithm of Renegar \cite{Ren} (a bound of $\poly(L, (nm)^{O(r^2)})$ on the bit complexity
of the coefficients in the solution due to Grigor'ev and Vorobjov \cite{GriVor}) to compute a rational approximation to the solution up
to accuracy $\delta$ in time $\poly(L, (nm)^{O(r^2)}, \log 1/\delta)$. 

This reduction
uses the fact that since $A, W$  have full rank they have ``pseudo-inverses'' 
 $A^+$,  $W^+$  which are $r \times n$ and $n\times r$ matrices respectively 
such that $A^+ A = W W^+ = I_{r\times r}$. 
Thus $A^+ M_i = A^+ A W_i = W_i$ and similarly $M^j W^+ = A^j$.


\begin{definition}
Let $C =\set{u_1, u_2, .. u_r }$ be a basis for the columns of $M$
in $\Re^n$, and let $R = \set{v_1, v_2, ... v_r} $ be a basis for the rows of $M$ in $\Re^m$.


Then $M_C$ (a size $r \times m$ matrix) denotes the columns of $M$
expressed in the basis $\cal{C}$, and similarly 
 $M_{R}$ (a size $n \times r$ matrix) denotes the rows of $M$ expressed in the basis $\cal{R}$. 
\end{definition}


\begin{lemma}[Structure Lemma for Simplicial Factorization]\label{lem:simplicialstruc}
$M$ has a simplicial factorization rank $r$  iff for every basis
$C$ for the columns and basis $B$ for the rows of $M$, there are
$r \times r$ matrices $T_C, T_R$ such that:
(i) $T_C M_C$ and $M_R T_R$ are nonnegative matrices
(ii)  $M_R T_R T_C M_C = M$
\end{lemma}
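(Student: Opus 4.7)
The plan is to prove both directions by exploiting the fact that in a simplicial factorization $M=AW$, the column space of $A$ equals the column space of $M$ and the row space of $W$ equals the row space of $M$ (since $M, A, W$ all have rank exactly $r$, and the column space of $M$ is contained in the column space of $A$, etc.). Thus any basis $C$ for the columns of $M$ differs from the columns of $A$ only by an invertible $r \times r$ change-of-basis, and similarly for rows.

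For the forward direction, suppose $M = AW$ is a simplicial factorization. Fix any column basis $C$ (viewed as an $n \times r$ full-rank matrix whose columns are the basis vectors) and any row basis $R$ (an $r \times m$ full-rank matrix). By definition $M = C M_C = M_R R$. Since the columns of $A$ lie in the column span of $C$ and $C$ has full column rank, there is a unique $r \times r$ matrix $S_C$ with $A = C S_C$, and since $A$ has rank $r$, $S_C$ is invertible. Then $C S_C W = AW = M = C M_C$, and cancelling $C$ (using its left pseudo-inverse) yields $W = S_C^{-1} M_C$. Setting $T_C := S_C^{-1}$ gives $W = T_C M_C \ge 0$. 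By the symmetric argument on rows, there is an invertible $T_R$ with $A = M_R T_R \ge 0$. Condition (ii) then follows immediately: $M_R T_R \cdot T_C M_C = A W = M$.

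For the converse, suppose there exist bases $C, R$ and matrices $T_C, T_R$ satisfying (i) and (ii). Define $A := M_R T_R$ and $W := T_C M_C$; these are nonnegative by (i), and $AW = M$ by (ii), so $M$ admits a nonnegative factorization of inner-dimension $r$. Because $M$ has rank $r$ and $AW = M$ has rank at most $\min(\mathrm{rank}(A), \mathrm{rank}(W)) \le r$, we must have $\mathrm{rank}(A) = \mathrm{rank}(W) = r$, so the factorization is simplicial. (In particular, the ``for every basis'' quantifier in the statement is equivalent to ``for some basis'': the forward direction gives every and the converse only needs some.)

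No step looks like a genuine obstacle; the only thing to be careful about is bookkeeping the shapes and the role of the invertible change-of-basis matrices $S_C, S_R$, and justifying the cancellation $C M_C = C S_C W \Rightarrow M_C = S_C W$ using that $C$ has full column rank $r$. The key conceptual content is simply that once the bases $C$ and $R$ are fixed, an $r$-rank nonnegative factorization is determined up to a single invertible $r \times r$ transformation on each side, and the nonnegativity of $A$ and $W$ becomes a condition on these two $r \times r$ matrices $T_C, T_R$ — which is exactly what enables the downstream reduction to a semi-algebraic feasibility problem in $2r^2$ variables.
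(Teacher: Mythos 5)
Your proof is correct and takes essentially the same approach as the paper: the paper defines $T_C = A^+U$ and $T_R = V^TW^+$ via pseudoinverses, which is exactly your $S_C^{-1}$ and $S_R^{-1}$ (since $A = U S_C$ gives $A^+U = S_C^{-1}$), and the converse direction is verbatim the same rank argument.
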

\begin{proof}
(``if'') Suppose the conditions in the theorem are met. Then set $A = M_R T_R$ and $W = T_C M_C$. These matrices are nonnegative and have size $n \times r$ and $r \times m$ respectively, and furthermore are a factorization for $M$. Since $rank(M) = r$, $A$ and $W$ are a simplicial factorization. 

(``only if'') Conversely suppose that there is a simplicial
factorization $M = AW$.  Let 
$\cal{C}= \set{u_1, u_2, .. u_r} $ and $\cal{R} = \set{v_1, v_2, ... v_r}$ be  
{\em arbitrary} bases for the columns and rows of $M$ respectively. Let $U$ and $V$ be the corresponding $n \times r$ and $m \times r$ matrices. Let $M_C$ and $M_R$ be $r \times m$ and $n \times r$ representations in this basis for the columns and rows of $M$ - i.e. $U M_C = M$ and $M_R V^T = M$. 

Define $r\times r$ matrices $T_C= A^+ U$ and $T_R = V^T W^+$ 
where $A^+$ and $W^+$ are the respective pseudoinverses of $A, W$.
Let us check that this choice of $T_C$ and $T_R$ satisfies the conditions in the theorem. 

We can re-write $T_C M_C = A^+ U M_C = A^+ M = W$ and hence the first condition in the theorem is satisfied. Similarly $M_R T_R = M_R V^T W^+ = M W^+ = A$ and hence the second and third condition are also satisfied. 
\end{proof}

\section{General NMF}\label{sec:gen}
Now we consider the NMF problem where the factor matrices $A, W$ need not have full rank. 

\begin{theorem}
There is a $O( (nm)^{cr^2 2^r})$ time deterministic algorithm that given an $n \times m$ nonnegative matrix $M$ outputs a
factorization $AW$ of inner dimension $r$  if such a factorization exists.
\end{theorem}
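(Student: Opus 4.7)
The plan is to generalize the structure lemma of Section~\ref{sec:simp} to the case where $A$ or $W$ may be rank-deficient. In the simplicial case the pseudoinverses gave a clean parameterization $A = M_R T_R$ and $W = T_C M_C$ using only $2r^2$ real variables. When $A$ has fewer than $r$ linearly independent columns the identity $A^+ A = I$ fails, so $W = A^+ M$ no longer holds; moreover the columns of $A$ need not even lie in the column span of $M$. The plan is to recover a compact parameterization at the cost of enumerating one combinatorial structure of size $2^r$ that records the support pattern of the factorization, and then apply a simplicial-style argument within each block.

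Concretely, for any candidate factorization $M = AW$ and each column $j \in [m]$, let $S_j := \{i \in [r] : W_{ij} > 0\}$; then $M_j$ lies in the conical hull of $\{A_i : i \in S_j\}$. Symmetrically, for each row $k \in [n]$, let $T_k := \{i : A_{ki} > 0\}$. These support patterns partition the columns and rows of $M$ into at most $2^r$ blocks each. For each fixed pair of partitions, a local version of Lemma~\ref{lem:simplicialstruc} applied to each block parameterizes the relevant part of the factorization by an $r \times r$ linear transformation, giving $O(r^2 \cdot 2^r)$ real variables in total across all blocks.

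I would then phrase the existence of a factorization consistent with a chosen pair of support partitions as a semi-algebraic decision problem over these $O(r^2 \cdot 2^r)$ variables, with constraints encoding $M = AW$, entrywise nonnegativity of the reconstructed $A$ and $W$, and agreement with the prescribed support pattern. Invoking the Basu--Pollack--Roy decision procedure~\cite{BPR} --- and Renegar~\cite{Ren} together with the Grigor'ev--Vorobjov bit-complexity bounds~\cite{GriVor} for producing an explicit factorization --- gives time $(nm)^{O(r^2 \cdot 2^r)}$ per configuration. The number of pairs of partitions one must enumerate is bounded by $(nm)^{O(2^r)}$ via separable-partition counts from the geometry literature~\cite{Har,AO,HR}, which is absorbed into the claimed overall bound.

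The main obstacle is proving the structure lemma itself. A single column $A_i$ appears simultaneously in every block $S \ni i$, so the parameters assigned to $A_i$ across blocks must be mutually consistent; one must design the local bases so that the per-block parameterizations glue into a single coherent $A$, and likewise for $W$. A further subtlety is that when $A$ is rank-deficient the columns of $A$ may lie outside the column span of $M$, so the bases used for the parameterization cannot simply be read off from $M$ as in the simplicial case and must be augmented according to the support pattern --- this is precisely where the factor $2^r$ (rather than a constant) enters the variable count, and it is what one must verify does not blow up further.
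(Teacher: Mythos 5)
Your outline points at the right phenomena, but the two obstacles you flag at the end are not side conditions to verify — they are the actual content of the theorem, and the raw-support parameterization you propose does not resolve either of them.

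First, partitioning columns by the exact support $S_j = \{i : W_{ij} > 0\}$ does not give you a simplicial-style local parameterization, because there is no guarantee that the columns $\{A_i : i \in S_j\}$ are linearly independent. If $A_{S_j}$ is rank-deficient, the pseudoinverse $(A_{S_j})^+$ is not a left inverse of $A_{S_j}$, and the map $M_j \mapsto (A_{S_j})^+ M_j$ need not recover a valid $W_j$. The paper's fix is to replace the raw support by a \emph{maximal independent set} $U \in \calC(A)$ of columns of $A$ with $M_j \in \mathrm{cone}(A_U)$, and to make the choice canonical by taking the lexicographically least such $U$ (the ``minimal basis''). It is this canonicality that makes the choice function a deterministic object you can hope to enumerate, and it is the linear independence of $A_U$ that makes $W'_j = \Pi(A,U)M_j$ a valid reconstruction. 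Without this, your ``local Lemma~\ref{lem:simplicialstruc}'' step simply does not go through.

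Second, even with the right local reconstruction, you correctly observe that the per-block parameters for $A$ and for $W$ must glue into a single pair of matrices, and that $A$ may not live in the column span of $M$. The paper handles this not by gluing the original $A,W$ but by constructing a new pair $A',W'$ in a specific order: first replace $W$ by a $W'$ whose columns obey the minimal-basis condition with respect to $A$ (so $M = AW'$ still), then replace $A$ by an $A'$ whose rows obey the minimal-basis condition with respect to this new $W'$ (so $M = A'W'$). This asymmetric two-step construction — the ``proper chain'' $(A,W,A',W')$ — is precisely what makes the recovered rows of $A'$ live in the row span of $M$, side-stepping the span issue you raise, and what makes the two sides consistent. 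Your proposal does not contain this idea, and without it the consistency obstacle you name is genuinely open, not a technicality.

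Finally, your appeal to separable-partition counts to bound the number of configurations is also not automatic: the partition induced by raw supports is not obviously realized by hyperplanes. The paper proves that the partition induced by the \emph{lexicographically minimal basis} choice is a $(\binom{r}{s},s)$-simplicial partition, where the hyperplanes are exactly the rows of the pseudoinverses $(A_{V_j})^+$ taken in lexicographic order — i.e.\ the polynomial bound is a consequence of the minimal-basis machinery, not something that can be cited independently of it. So the enumeration step and the structure lemma are not separable pieces: the same lexicographic device yields both.

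In short, you have correctly identified that some $2^r$-sized combinatorial object must be enumerated and that roughly $r^2 2^r$ real variables suffice, but the load-bearing ideas — minimal bases via lexicographic tie-breaking, the proper-chain two-step replacement of $(A,W)$ by $(A',W')$, and the resulting realizability of the choice function as a simplicial partition — are all missing, and these are exactly what turn your sketch into a proof.
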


As in the Simplicial case the main idea will again be a reduction to an existence question for a semi-algebraic set, but this reduction is
significantly more complicated than Lemma~\ref{lem:simplicialstruc}.

\subsection{General Structure Theorem: Minimality}
Our goal is to re-cast nonnegative matrix factorization (for constant $r$) as a system of polynomial inequalities where the number of variables is constant, the maximum degree is constant and the number of constraints is polynomially bounded in $n$ and $m$. The main obstacle is that $A$ and $W$ are {\em large} - we cannot afford to introduce a new variable to represent each entry in these matrices. We will demonstrate there is always a "minimal" choice for $A$ and $W$ so that:

\begin{enumerate}

\item there is a collection of linear transformations $T_1, T_2, ... T_{g(r)}$ from the column-span of $M$ to $\Re^r$ and a choice function $\sigma_W: [m] \rightarrow [g(r)]$

\item and a collection of linear transformations $S_1, S_2, ... S_{g(r)}$ from the row-span of $M$ to $\Re^r$ and a choice function $\sigma_A: [n] \rightarrow [g(r)]$

\end{enumerate}

And these linear transformations and choice functions satisfy the conditions:

\begin{enumerate}

\item for each $i \in [n]$, $W_i = T_{\sigma_W(i)} M_i$ and

\item for each $j \in [m]$, $A^j = M^j S_{\sigma_A(j)} $.

\end{enumerate}

Furthermore, the number of possible choice functions $\sigma_W$ is at most $m^{cr^2 f(r)}$ and the number of possible choice functions for $\sigma_A$ is at most $n^{cr^2 g(r)}$. These choice functions are based on the notion of a simplicial partition, which we introduce later. We then give an algorithm for enumerating all simplicial partitions (this is the primary bottleneck in the algorithm). Fixing the choice functions $\sigma_W$ and $\sigma_A$, the question of finding linear transformations $T_1, T_2, ... T_{g(r)}$ and $S_1, S_2, ... S_{g(r)}$ that satisfy the above constraints (and the constraint that $M = AW$, and $A$ and $W$ are nonnegative) is exactly a system of polynomial inequalities with a $O(r^2 g(r))$ variables (each matrix $T_i$ or $S_j$ is $r \times r$), degree at most four and furthermore there are at most $O(mn)$ polynomial constraints.

In this subsection, we will give a procedure (which given $A$ and $W$) generates a "minimal" choice for $A$ and $W$ (call this minimal choice $A'$ and $W'$), and we will later establish that this "minimal" choice satisfies the structural property stated informally above.

\begin{definition}
Let $\calC(A) \subset 2^{[r]}$ denote the subsets of $[r]$ corresponding to maximal independent sets of columns (of $A$). Similarly let $\calR(W) \subset 2^{[r]}$ denote the subsets of $[r]$ corresponding to maximal independent sets of rows (of $W$). 
\end{definition}

A basic fact from linear algebra is that all maximal independent sets of columns of $A$ have exactly $rank(A)$ elements and all maximal independent sets of rows of $W$ similarly have exactly $rank(W)$ elements. 

\begin{definition}
Let $\succ_s$ be the total ordering on subsets of $[r]$ of size $s$ so that if $U$ and $V$ are both subsets of $[r]$ of size $s$, $U \prec_s V$ iff $U$ is lexicographically before $V$. 
\end{definition}

\begin{definition}
Given a column $M_i$, we will call a subset $U \in \calC(A)$ a minimal basis for $M_i$ (with respect to $A$) if $M_i \in cone(A_U)$ and for all $V \in \calC(A)$ such that $M_i \in cone(A_V)$ we must have $U \prec_s V$. 
\end{definition}

\begin{claim}~\label{claim:basis}
If $M_i \in cone(A)$, then there is some $U \in \calC(A)$ such that $M_i \in cone(A_U)$. 
\end{claim}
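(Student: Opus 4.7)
The plan is to reduce the claim to classical Carath\'eodory in cones, then bridge the gap between ``some linearly independent subset'' and ``some maximal linearly independent subset.''

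First I would fix a nonnegative representation $M_i = \sum_{j=1}^{r} c_j A_j$ with $c_j \geq 0$, which exists by hypothesis. Let $T = \{j : c_j > 0\}$ and consider the columns $\{A_j : j \in T\}$. If these are already linearly independent, we skip to the extension step below. Otherwise, there is a nontrivial relation $\sum_{j \in T} \lambda_j A_j = 0$ with at least one $\lambda_j > 0$. Using the standard reduction, replace $c_j$ by $c_j - t\lambda_j$ for $t = \min_{j \in T, \lambda_j > 0} c_j/\lambda_j$; this keeps all coefficients nonnegative, preserves $\sum c_j A_j = M_i$, and zeros out at least one previously positive coefficient, strictly shrinking $T$. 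Iterating this process terminates with a nonnegative representation $M_i = \sum_{j \in S} c_j A_j$ supported on a set $S \subseteq [r]$ with $\{A_j : j \in S\}$ linearly independent.

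Next I would invoke the matroid-style extension: any linearly independent subset of columns can be extended to a maximal linearly independent subset. Concretely, repeatedly add indices from $[r] \setminus S$ whose corresponding columns are not in $\operatorname{span}\{A_j : j \in \text{current set}\}$; this terminates when no such index remains, at which point the resulting set $U \supseteq S$ lies in $\calC(A)$. Setting $c_j = 0$ for $j \in U \setminus S$ yields $M_i = \sum_{j \in U} c_j A_j$ with all $c_j \geq 0$, so $M_i \in \operatorname{cone}(A_U)$, which is exactly what we need.

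There is really no main obstacle here; the claim is essentially Carath\'eodory's theorem for cones combined with basis extension, both of which are standard linear-algebraic facts. The only small subtlety is making sure to pad with zeros so that the representation lives on a \emph{maximal} independent subset rather than merely an independent one, which is precisely what the extension step accomplishes.
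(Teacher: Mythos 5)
Your proof is correct. Note that the paper actually states Claim~\ref{claim:basis} without giving a proof at all, treating it as a standard fact, so your write-up supplies what the paper leaves implicit. Your two-step argument is exactly the natural one: first run the Carath\'eodory-type coefficient-shrinking loop to land on a support $S$ whose columns are linearly independent while preserving nonnegativity, then extend $S$ to a maximal independent set $U \in \calC(A)$ and pad the representation with zero coefficients on $U \setminus S$. Both steps are airtight (the shrinking step preserves nonnegativity for every $j$ since $t$ is chosen as the binding minimum over positive $\lambda_j$, and negative $\lambda_j$ only increase the coefficient), and the extension step is just basis extension in $\spn(A)$. The one subtlety you flagged --- that Carath\'eodory alone only gives an independent subset, whereas $\calC(A)$ requires a \emph{maximal} independent subset, so one must pad --- is indeed the only point worth spelling out, and you handle it correctly.
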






\begin{definition}

A proper chain $(A, W, A', W')$ is a set of nonnegative matrices for which $M = AW$, $M = AW'$ and $M = A'W'$ (the inner dimension of these factorizations is $r$) and functions $\sigma_{W'}: [m] \rightarrow \calC(A)$ and $\sigma_{A'}: [n] \rightarrow \calR(W')$ such that 

\begin{enumerate}

\item for all $i \in [m]$, $A W'_i = M_i$, $supp(W'_i) \subset \sigma_{W'}(i)$ and $\sigma_{W'}(i)$ is a minimal basis with respect to $A$ for $M_i$

\item for all $j \in [n]$, $A'_j W' = M^j$, $supp(A^j) \subset \sigma_{A'}(j)$ and $\sigma_{A'}(j)$ is a minimal basis with respect to $W'$ for $M^j$. 

\end{enumerate}

\end{definition}

Note that the extra conditions on $W'$ (i.e. the minimal basis constraint) is with respect to $A$ and the extra conditions on $A'$ are with respect to $W'$. This simplifies the proof that there is always some proper chain, since we can compute a $W'$ that satisfies the above conditions with respect to $A$ and then find an $A'$ that satisfies the conditions with respect to $W'$. 

\begin{lemma}~\label{lemma:chain}
If there is a nonnegative factorization $M = AW$ (of inner-dimension $r$), then there is a choice of nonnegative $A', W'$ of inner-dimension $r$ and functions $\sigma_{W'}: [m] \rightarrow \calC(A)$ and $\sigma_{A'}: [n] \rightarrow \calR(W')$ such that $(A, W, A', W')$ and $\sigma_{W'}$, $\sigma_{A'}$ form a proper chain.
\end{lemma}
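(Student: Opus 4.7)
The plan is to construct $W'$ from $A$ column-by-column, then construct $A'$ from $W'$ row-by-row; the original $W$ is used only to guarantee that each column of $M$ lies in $cone(A)$.

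\emph{Building $W'$.} Fix $i \in [m]$. Since $W \geq 0$ and $M_i = AW_i$, we have $M_i \in cone(A)$, so Claim~\ref{claim:basis} gives some $U \in \calC(A)$ with $M_i \in cone(A_U)$. Let $\sigma_{W'}(i) \in \calC(A)$ be the $\prec_s$-minimal such $U$ (with $s = \mathrm{rank}(A)$); this is well defined because $\prec_s$ is a total order. Because $\sigma_{W'}(i) \in \calC(A)$ is a maximal independent set of columns, the columns $\{A_k : k \in \sigma_{W'}(i)\}$ are linearly independent, so the nonnegative coefficient vector $c_i$ satisfying $A_{\sigma_{W'}(i)}\, c_i = M_i$ is uniquely determined. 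Define $W'_i$ by placing $c_i$ on the coordinates $\sigma_{W'}(i)$ and zero elsewhere. By construction $W'$ is nonnegative, $AW' = M$, $\mathrm{supp}(W'_i) \subset \sigma_{W'}(i)$, and $\sigma_{W'}(i)$ is a minimal basis for $M_i$ with respect to $A$.

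\emph{Building $A'$.} Using the $W'$ just produced, repeat the argument on rows. For each $j \in [n]$, since $A \geq 0$ and $M = AW'$, the row $M^j = A^j W'$ is a nonnegative combination of the rows of $W'$, so $M^j$ lies in the conic hull of those rows. The row analogue of Claim~\ref{claim:basis} yields some $V \in \calR(W')$ such that $M^j$ is in the conic hull of the rows of $W'$ indexed by $V$; let $\sigma_{A'}(j) \in \calR(W')$ be the $\prec_s$-minimal such $V$ (with $s = \mathrm{rank}(W')$). The rows of $W'$ indexed by $\sigma_{A'}(j)$ are linearly independent, so the nonnegative row vector representing $M^j$ on those coordinates is unique; place it on $\sigma_{A'}(j)$ in row $j$ of $A'$ and zero the remaining entries. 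Then $A' \geq 0$, $A'W' = M$, $\mathrm{supp}(A'^j) \subset \sigma_{A'}(j)$, and $\sigma_{A'}(j)$ is a minimal basis for $M^j$ with respect to $W'$.

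\emph{Verification.} Assemble $(A, W, A', W')$ together with $\sigma_{W'}$ and $\sigma_{A'}$. The three identities $M = AW$, $M = AW'$, $M = A'W'$ hold (by hypothesis and by the two constructions), all four matrices are nonnegative and have inner dimension $r$, and both support/minimality clauses in the definition of a proper chain are built directly into the choices of $\sigma_{W'}$ and $\sigma_{A'}$.

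\textbf{Main obstacle.} The argument is mostly bookkeeping once one grants Claim~\ref{claim:basis}; the genuinely load-bearing observation is that on a maximal independent column subset the nonnegative representation is forced to be \emph{unique}, so the lex-minimal index set $\sigma_{W'}(i)$ actually pins down $W'_i$ (and analogously for $A'$). Without this uniqueness, the minimal-basis condition would carry no content. A secondary subtlety worth flagging is that $A'$ must be built relative to $W'$, not to the original $W$; this is exactly why the definition of a proper chain allows the asymmetric replacement $(A,W) \to (A,W') \to (A',W')$ and lets the two support conditions decouple.
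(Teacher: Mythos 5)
Your proof is correct and follows essentially the same route as the paper: use Claim~\ref{claim:basis} with the lexicographic order to pick $\sigma_{W'}(i)$, define $W'$ column-wise, then repeat on rows using $W'$ (the paper phrases this as transposing $M^T = W'^T A^T$, which is the same thing) to get $A'$ and $\sigma_{A'}$. One small quibble: the uniqueness of the nonnegative representation on $A_{\sigma_{W'}(i)}$ that you flag as load-bearing is not actually required to prove this lemma — the definition of a proper chain only demands existence of a supported nonnegative vector, not a unique one — though that uniqueness does become essential in the subsequent Lemma~\ref{lemma:reconstr}, so it is a reasonable observation to record here.
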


\begin{proof}
The condition that there is some nonnegative $W$ for which $M = AW$ is just the condition that for all $i \in [m]$, $M_i \in cone(A)$. Hence, for each vector $M_i$, we can choose a minimal basis $U \in \calC(A)$ using Claim~\ref{claim:basis}. Then $M_i \in cone(A_U)$ so there is some nonnegative vector $W'_i$ supported on $U$ such $A W'_i = M_i$ and we can set $\sigma_{W'}(i) = U$. Repeating this procedure for each column $M_i$, results in a nonnegative matrix $W'$ that satisfies the condition $M = A W'$ and for each $i \in [m]$, by design $supp(W'_i) \subset \sigma_{W'}(i)$ and $\sigma_{W'}(i)$ is a minimal basis with respect to $A$ for $M_i$. 

We can re-use this argument above, setting $M^T = (W'^T) A^T$ and this interchanges the role of $A$ and $W$. Hence we obtain a nonnegative matrix $A'$ which satisfies $M = A' W'$ and for each $j \in [n]$, again by design we have that $supp(A^j) \subset \sigma_{A'}(j)$ and $\sigma_{A'}(j)$ is a minimal basis with respect to $W$ for $M^j$.
\end{proof}

\begin{definition}
Let $\Pi(A, U)$ (for $U \in \calC(A)$) denote the $r \times n$ linear transformation that is zero on all rows not in $U$ (i.e. $\Pi(A, U)^j = \vec{0}$ for $j \notin U$) and restricted to $U$ is $\Pi(A, U)^U = (A_U)^+$ (where the $+$ operation denotes the Moore-Penrose pseudoinverse). 
\end{definition}

\begin{lemma}~\label{lemma:reconstr}
Let $(A, W, A', W')$ and $\sigma_{W'}$ and $\sigma_{A'}$ form a proper chain. For any index $i \in [m]$, let $U_i = \sigma_{W'}(i)$ and for any index $j \in [n]$ let $V_j = \sigma_{A'}(j)$.  Then $W'_i = \Pi(A, U_i) M_i$ and $A'^j = M^j \Pi(W'^T, V_j)^T$.
\end{lemma}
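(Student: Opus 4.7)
The plan is to prove each of the two equalities by unpacking the proper-chain definition and exploiting the minimal-basis condition, which lets us restrict the factorization to a subset of columns (resp. rows) on which the relevant factor has full rank. The two statements are symmetric, so I would handle $W'_i = \Pi(A,U_i)M_i$ in detail and then indicate that $A'^j = M^j \Pi(W'^T,V_j)^T$ follows by applying the same argument to $M^T = (W')^T (A')^T$.

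First, from the proper chain definition we have $M = AW'$ and therefore $M_i = A W'_i$. By condition (1) of the proper chain, $\text{supp}(W'_i) \subset U_i$, so only the columns of $A$ indexed by $U_i$ contribute: $M_i = A_{U_i}(W'_i)_{U_i}$, where $(W'_i)_{U_i}$ denotes the restriction of $W'_i$ to coordinates in $U_i$. Next, since $U_i = \sigma_{W'}(i) \in \calC(A)$, the columns $\{A_k : k \in U_i\}$ form a maximal linearly independent set among the columns of $A$, so $A_{U_i}$ has full column rank. This is the key structural fact that makes the pseudoinverse act as a true left-inverse.

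With full column rank, $(A_{U_i})^+ A_{U_i} = I$ (the identity on $\R^{|U_i|}$), and hence applying $(A_{U_i})^+$ to both sides of $M_i = A_{U_i}(W'_i)_{U_i}$ yields $(A_{U_i})^+ M_i = (W'_i)_{U_i}$. Now I would consult the definition of $\Pi(A, U_i)$: it is the $r \times n$ matrix whose rows outside $U_i$ are zero and whose rows indexed by $U_i$ form $(A_{U_i})^+$. Consequently, $\Pi(A,U_i) M_i$ is the $r$-vector that equals $(A_{U_i})^+ M_i = (W'_i)_{U_i}$ on coordinates in $U_i$ and is zero on coordinates outside $U_i$. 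Since $W'_i$ itself vanishes outside $U_i$ (again by the support condition), the two vectors agree on every coordinate, giving $W'_i = \Pi(A, U_i) M_i$.

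For the second identity, the symmetric setup is $M = A'W'$, $\text{supp}(A'^j) \subset V_j$, and $V_j \in \calR(W')$ is a maximal independent set of rows of $W'$. Transposing and applying the exact same argument — now with $(W'^T)_{V_j}$ playing the role of $A_{U_i}$ — gives $(A'^j)^T = \Pi(W'^T, V_j) (M^j)^T$, i.e. $A'^j = M^j \Pi(W'^T, V_j)^T$. The only subtlety to flag is the asymmetry in the proper-chain definition: the minimal-basis condition for $A'^j$ is phrased with respect to $W'$ (not $W$), and this is precisely what guarantees $V_j \in \calR(W')$ so the pseudoinverse argument goes through. No step is genuinely difficult; the work is purely bookkeeping, and the only ``obstacle'' is keeping track of which factor the minimal basis is taken with respect to.
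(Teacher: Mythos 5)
Your proof is correct and follows essentially the same route as the paper's: restrict $M_i = AW'_i$ to $A_{U_i}(W'_i)_{U_i}$ via the support condition, use $U_i\in\calC(A)$ to make $(A_{U_i})^+A_{U_i}$ the identity, observe $\Pi(A,U_i)M_i$ vanishes off $U_i$, and obtain the second identity by the $A \leftrightarrow W'^T$ symmetry. Your explicit flag that $V_j\in\calR(W')$ (not $\calR(W)$) is the detail making the symmetric argument go through is also a remark the paper itself makes just after the lemma statement.
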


Notice that in the above lemma, the linear transformation that recovers the columns of $W'$ is based on column subsets of $A$, while the linear transformation to recover the rows of $A'$ is based on the row subsets of $W'$ (not $W$). 

\begin{proof}
Since $(A, W, A', W')$ and $\sigma_{W'}$ and $\sigma_{A'}$ form a proper chain we have that $AW' = M$. Also $supp(W'_i) \subset U_i = \sigma_{W'}(i)$. Consider the quantity $ \Pi(A, U_i) M_i$. For any $j \notin U_i$, $ (\Pi(A, U_i) M_i)_j = 0$. So consider $$(\Pi(A, U_i) M_i)_{U_i} =  (A_{U_i})^+ A W'_i = (A_{U_i})^+ A_{U_i} (W'_i)_{U_i}$$ where the last equality follows from the condition $supp(W'_i) \subset U_i $. Since $U_i \in \calC(A)$ we have that $(A_{U_i})^+ A_{U_i}$ is the $|U_i| \times |U_i|$ identity matrix. Hence $W'_i = \Pi(A, U_i) M_i$. An identical argument with $W'$ replaced with $A'$ and with $A$ replaced by $W'^T$ (and $i$ and $U_i$ replaced with $j$ and $V_j$) respectively implies that $A'^j = M^j \Pi(W'^T, V_j)^T$ too. 
\end{proof}

Note that there are at most $|\calC(A)| \leq 2^r$ linear trasformations of the form $\Pi(A, U_i)$ and hence the columns of $W'$ can be recovered by a constant number of linear transformations of the column span of $M$, and similarly the rows of $A'$ can also be recovered. 

The remaining technical issue is we need to demonstrate that there are not too many (only polynomially many, for constant $r$)  choice functions $\sigma_{W'}$ and $\sigma_{A'}$ and that we can enumerate over this set efficiently. In principle, even if say $\calC(A)$ is just two sets, there are exponentially many choices of which (of the two) linear transformation to use for each column of $M$. However, when we use lexicographic ordering to tie break (as in the definition of a minimal basis), the number of choice functions is polynomially bounded. We will demonstrate that the choice function $\sigma_{W'}: [m] \rightarrow \calC(A)$ arising in the definition of a proper chain can be embedded in a restricted type of geometric partitioning of $M$ which we call a simplicial partition.

\subsection{General Structure Theorem: Simplicial Partitions}

Here, we establish that the choice functions $\sigma_{W'}$ and $\sigma_{A'}$ in a proper chain are combinatorially simple. The choice function $\sigma_{W'}$ can be regarded as a partition of the columns of $M$ into $|\calC(A)|$ sets, and similarly the choice function $\sigma_{A'}$ is a partition of the rows of $M$ into $\calR(W')$ sets. Here we define a geometric type of partitioning scheme which we call a simplicial partition, which has the property that there are not too many simplicial partitions (by virtue of this class having small VC-dimension), and we show that the partition functions $\sigma_{W'}$ and $\sigma_{A'}$ arising in the definition of a proper chain are realizable as (small) simplicial partitions. 

\begin{definition}
A $(k, s)$-simplicial partition of the columns of $M$ is generated by a collection of $k$ sets of $s$ hyperplanes $$\calH^1 = \{h_1^1, h_2^1, ... h_s^1\},  \calH^2 = \{h_1^2, h_2^2, ... h_s^2\}, ... \calH^k = \{h_1^k, h_2^k, ... h_s^k\}.$$ Let $Q_i = \{i' \mbox{ s.t. for all } j \in [s], h_j^{i} \cdot M_{i'} \geq 0 \}$. Then this collection of sets of hyperplanes results in the partition 
\begin{itemize}

\item $P_1 = Q_1$

\item $P_2 = Q_2 - P_1$ 

\item $P_k = Q_k - P_1 - P_2... -P_{k-1}$

\item $P_{k+1} = [m] - P_1 - P_2 ... -P_k$
\end{itemize}
\end{definition}

If $rank(A) = s$, we will be interested in a $( {r \choose s}, s)$-simplicial partition. 

\begin{lemma}~\label{lemma:geom}
Let $(A, W, A', W')$ and $\sigma_{W'}$ and $\sigma_{A'}$ form a proper chain. Then the partitions corresponding to $\sigma_{W'}$ and to $\sigma_{A'}$ (of columns and rows of $M$ respectively) are a $( {r \choose s}, s)$-simplicial partition and a $( {r \choose t}, t)$-simplicial partition respectively, where $rank(A) = s$ and $rank(W') = t$. 
\end{lemma}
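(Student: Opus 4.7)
My plan is to realize each choice function as a simplicial partition by constructing, for every maximal independent subset $U \in \calC(A)$, exactly $s$ hyperplanes whose common nonnegative half-space characterizes membership in $\mathrm{cone}(A_U)$, and then to use the lexicographic ordering built into the definition of a proper chain to match up the subtractive definition of the $P_\ell$'s with the lex-minimal basis rule.

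The key geometric observation is the following characterization: for any $U \in \calC(A)$ and any column $M_i$ lying in the column span of $A$, we have $M_i \in \mathrm{cone}(A_U)$ iff $(A_U)^{+} M_i \ge 0$ componentwise. The ``only if'' direction is immediate, and the ``if'' direction uses that $A_U$ is a basis of the column span of $A$ (since $|U|=s=\mathrm{rank}(A)$ and $U$ is independent), so $A_U (A_U)^+ M_i = M_i$. This gives exactly $s$ linear inequalities in $M_i$, namely $h_j^U \cdot M_i \ge 0$ where $h_j^U$ is the $j$-th row of $(A_U)^+$. Since $M = AW'$ forces every $M_i$ to lie in $\mathrm{col}(A)$, the characterization applies to every column of interest.

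Now I would enumerate $\calC(A) = \{U_1 \prec_s U_2 \prec_s \cdots \prec_s U_k\}$ in lexicographic order (with $k \le \binom{r}{s}$, padding with dummy copies to reach exactly $\binom{r}{s}$ groups if needed, e.g.\ by repeating $U_k$) and take $\calH^\ell = \{h_1^{U_\ell}, \ldots, h_s^{U_\ell}\}$. By the characterization, the set $Q_\ell$ from the simplicial-partition definition equals $\{i : M_i \in \mathrm{cone}(A_{U_\ell})\}$. By Claim~\ref{claim:basis} applied to $M_i \in \mathrm{cone}(A)$, every column index lies in some $Q_\ell$, so $P_{k+1} = \emptyset$. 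Unwinding the recursive subtraction, $i \in P_\ell$ iff $\ell$ is the smallest index with $M_i \in \mathrm{cone}(A_{U_\ell})$, which by the minimal-basis condition of a proper chain is precisely the statement $\sigma_{W'}(i) = U_\ell$. Hence $\sigma_{W'}$ induces a $(\binom{r}{s}, s)$-simplicial partition of the columns.

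For the row partition, I would apply the same argument to $M^T = W'^T A'^T$: the roles of $A$ and $W'^T$ swap, $\calR(W')$ plays the role of $\calC(A)$, each maximal independent set of rows of $W'$ has size $t = \mathrm{rank}(W')$, and the minimal-basis condition on $\sigma_{A'}$ (with respect to $W'$, not $W$) is exactly what is required. The hyperplanes are the rows of $(W'^{V_j})^+$ acting on row vectors $M^j$, yielding a $(\binom{r}{t}, t)$-simplicial partition. The main obstacle I anticipate is keeping straight which matrix the minimal basis is taken with respect to (for the row partition it is $W'$, which is why proper chains were set up through the intermediate $W'$ rather than $W$); once that bookkeeping is done, the geometric content is just the pseudoinverse characterization above.
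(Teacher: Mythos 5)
Your proposal is correct and matches the paper's proof in all essentials: you order $\calC(A)$ lexicographically, take the hyperplanes to be the rows of $(A_{U_\ell})^+$, and use the pseudoinverse characterization of cone membership (both that $(A_U)^+A_U = I$ for maximal independent $U$ and that $A_U(A_U)^+$ projects onto $\mathrm{col}(A) \supseteq \mathrm{col}(M)$) to show that the subtractive definition of $P_\ell$ recovers exactly the lex-minimal basis rule, then dualize via $M^T = W'^T A'^T$ for the row partition. The only cosmetic difference is that you explicitly invoke Claim~\ref{claim:basis} to observe $P_{k+1} = \emptyset$ and address padding to reach $\binom{r}{s}$ groups, whereas the paper leaves these implicit.
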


\begin{proof}
Order the sets in $\calC(A)$ according to the lexicographic ordering $\succ_s$, so that $V_1 \prec_s V_2 \prec_s ... V_k$ for $k = |\calC(A)|$. Then for each $j$, let $\calH^j$ be the rows of the matrix $(A_{V_j})^+$. Note that there are exactly $rank(A) = s$ rows, hence this defines a $(k, s)$-simplicial partition. 

\begin{claim}
$\sigma_{W'}(i) = j$ if and only if $M_i \in P_j$ in the $(k, s)$-simplicial partition generated by $\calH^1, \calH^2, ... \calH^k$. 
\end{claim} 

\begin{proof}
Since $(A, W, A', W')$ and $\sigma_{W'}$ and $\sigma_{A'}$ forms a proper chain, we have that $M = AW'$. Consider a column $i$ and the corresponding set $V_i = \sigma_{W'}(i)$. Recall that $V_j$ is the $j^{th}$ set in $\calC(A)$ according to the lexicographic ordering $\succ_s$. Also from the definition of a proper chain $V_i$ is a minimal basis for $M_i$ with respect to $A$. Consider any set $V_{j'} \in \calC(A)$ with $j' < j$. Then from the definition of a minimal basis we must have that $M_i \notin cone(A_{V_{j'}})$. Since $V_{j'} \in \calC(A)$, we have that the transformation $(A_{V_{j'}}) (A_{V_{j'}})^+$ is a projection onto $span(A)$ which contains $span(M)$. Hence $(A_{V_{j'}}) (A_{V_{j'}})^+ M_i = M_i$, but $M_i \notin cone(A_{V_{j'}})$ so $(A_{V_{j'}})^+ M_i$ cannot be a nonnegative vector. Hence $M_i$ is not in $P_{j'}$ for any $j' < j$. Furthermore, $M_i$ is in $Q_j$: using Lemma~\ref{lemma:reconstr} we have $\Pi(A, V_j) M_i = \Pi(A, V_j) A W'_i = W'_i \geq \vec{0}$ and so $(A_{V_j})^+ M_i = (\Pi(A, V_j) M_i )_{V_j} \geq \vec{0}$. 
\end{proof}

We can repeat the above replacing $A$ with $W'^T$ and $W'$ with $A'$, and this implies the lemma. 
\end{proof}

\subsection{Enumerating Simplicial Partitions}

Here we give an algorithm for enumerating all $(k, s)$-simplicial partitions (of, say, the columns of $M$) that runs in time $O(m^{ks(r+1)})$. An important observation is that the problem of enumerating all simplicial partitions can be reduced to enumerating all partitions that arise from a single hyperplane. Indeed, we can over-specify a simplicial partition by specifying the partition (of the columns of $M$) that results from each hyperplane in the set of $ks$ total hyperplanes that generates the simplicial partition. From this set of partitions, we can recover exactly the simplicial partition. 

A number of results are known in this domain, but surprisingly we are not aware of any algorithm that enumerates all partitions of the columns of $M$ (by a single hyperplane) that runs in polynomial time (for $dim(M) \leq r$ and $r$ is constant) without some assumption on $M$. For example, the VC-dimension of a hyperplane in $r$ dimensions is $r + 1$ and hence the Sauer-Shelah lemma implies that there are at most $O(m^{r+1})$ distinct partitions of the columns of $M$ by a hyperplane. In fact, a classic result of Harding (1967) gives a tight upper bound of $O(m^r)$. Yet these bounds do not yield an algorithm for efficiently enumerating this structured set of partitions without checking {\em all} partitions of the data.

A recent result of Hwang and Rothblum \cite{HR} comes close to our intended application. A separable partition into $p$ parts is a partition of the columns of $M$ into $p$ sets so that the convex hulls of these sets are disjoint. Setting $p=2$, the number of separable partitions is exactly the number of distinct hyperplane partitions. Under the condition that $M$ is in general position (i.e. there are no $t$ columns of $M$ lying on a dimension $t-2$ subspace where $t = rank(M)-1$), Hwang and Rothblum give an algorithm for efficiently enumerating all distinct hyperplane partitions \cite{HR}. 

Here we give an improvement on this line of work, by removing any conditions on $M$ (although our algorithm will be slightly slower). The idea is to encode each hyperplane partition by a choice of not too many data points. To do this, we will define a slight generalization of a hyperplane partition that we will call a hyperplane separation:

\begin{definition}
A hyperplane $h$ defines a mapping (which we call a hyperplane separation) from columns of $M$ to $\{-1, 0, 1\}$ depending on the sign of $h \cdot M_i$ (where the sign function is $1$ for positive values, $-1$ for negative values and $0$ for zero). 
\end{definition}

A hyperplane partition can be regarded as a mapping from columns of $M$ to $\{-1, 1\}$ where we adopt the convention that $M_i$ such that $h \circ M_i$ is mapped to $1$. 

\begin{definition}
A hyperplane partition (defined by $h$) is an extension of a hyperplane separation (defined by $g$) if for all $i$, $g(M_i) \neq 0 \Rightarrow g(M_i) = h(M_i)$.
\end{definition}

\begin{lemma}~\label{lemma:extension}
Let $rank(M) = s$, then for any hyperplane partition (defined by $h$), there is a hyperplane $g$ that contains $s$ affinely independent columns of $M$ and for which $h$ (as a partition) is an extension of $g$ (as a separation). 
\end{lemma}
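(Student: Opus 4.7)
The plan is to realize the set of hyperplanes consistent with $h$ as a polyhedral cone and extract $g$ as an extreme ray of this cone. Write any candidate hyperplane as $\{x : a \cdot x = b\}$ for $(a, b) \in \mathbb{R}^n \times \mathbb{R}$; the condition that $h$ extends $g$ as a separation translates to $a \cdot M_i \geq b$ for every $i$ with $h(M_i) = +1$ and $a \cdot M_i \leq b$ for every $i$ with $h(M_i) = -1$. These inequalities cut out a polyhedral cone $\mathcal{K}$. Projecting $a$ onto the column span $V$ of $M$ changes nothing (any orthogonal component is annihilated by all $M_i$), so I regard $\mathcal{K}$ as a cone in $\mathbb{R}^{s+1}$, and observe that $\mathcal{K}$ contains the nonzero point $(h, 0)$.

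The key algebraic fact is that the normal of the constraint $a \cdot M_i = b$ in $(a,b)$-space is $(M_i, -1) \in \mathbb{R}^{s+1}$, and the vectors $(M_{i_1}, -1), \ldots, (M_{i_k}, -1)$ are linearly independent in $\mathbb{R}^{s+1}$ exactly when the columns $M_{i_1}, \ldots, M_{i_k}$ are affinely independent: a non-trivial relation $\sum_j \lambda_j (M_{i_j}, -1) = 0$ unpacks to $\sum_j \lambda_j M_{i_j} = 0$ together with $\sum_j \lambda_j = 0$, which is precisely the definition of affine dependence. The lemma therefore reduces to exhibiting a nonzero $(a^*, b^*) \in \mathcal{K}$ at which $s$ tight constraints have linearly independent normals. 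When $\mathcal{K}$ is pointed -- equivalent to the columns of $M$ affinely spanning an $s$-dimensional space -- the nontrivial pointed polyhedral cone $\mathcal{K}$ in $\mathbb{R}^{s+1}$ admits an extreme ray, i.e., a one-dimensional face supported by $s$ linearly independent tight constraints, and this delivers $s$ affinely independent columns on $g$.

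The main obstacle is the degenerate case in which $\mathcal{K}$ has a non-trivial lineality space, which occurs exactly when all $m$ columns of $M$ lie on a common affine hyperplane $\{x : a_0 \cdot x = b_0\}$. I handle this directly: take $g$ to be this common hyperplane. Every column lies on $g$, so $\mathrm{rank}(M) = s$ guarantees $s$ linearly (hence affinely) independent columns among them; the extension condition holds vacuously because no column is strictly on either side of $g$. Combining the two cases yields the lemma.
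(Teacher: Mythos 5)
Your argument is correct and takes a genuinely different route from the paper's. The paper's proof is an iterated perturbation: pick a direction $d$ orthogonal to the columns already lying on the hyperplane, tilt in that direction until a new column of $M$ lands on it (which does not disturb the signs of the remaining columns for a short enough tilt), and repeat. You instead observe that the set of hyperplanes that $h$ extends as a separation is a polyhedral cone $\mathcal{K}$, note it is nontrivial because $(h,0) \in \mathcal{K}$, and extract $g$ as an extreme ray; the identification of the tight-constraint normals with the lifted vectors $(M_i,-1)$, together with the equivalence between their linear independence and affine independence of the $M_i$, is exactly the right tool, and the lineality degeneracy (all columns on a common affine hyperplane) is handled cleanly. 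Your cone is the global object of which the paper's perturbation is the local pivoting. One subtlety worth flagging: the paper's hyperplanes pass through the origin --- the separation is defined by the sign of $h \cdot M_i$ with no offset --- whereas you permit the affine offset $b$. In the through-the-origin model $\mathcal{K}$ sits in $\Re^s$, an extreme ray is pinned down by only $s-1$ linearly independent tight constraints, and indeed one cannot in general place $s$ affinely independent columns on a hyperplane through the origin (take $M$ the $2\times 2$ identity and $h=(1,1)$: no line through the origin contains both columns). So the paper's own perturbation, carried out literally in the linear-hyperplane model, also stalls after $s-1$ linearly independent columns land on $g$; your affine $(a,b)$ formulation is the one under which the stated count of $s$ is attainable, and the enumeration bound downstream is unaffected either way.
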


\begin{proof}
After an appropriate linear transformation (of the columns of $M$ and the hyperplanes), we can assume that $M$ is full rank. If the $h$ already contains $s$ affinely independent columns of $M$, then we can choose $g = h$. If not we can perturb $h$ in some direction so that for any column with $h(M_i) = 0$, we maintain the invariant that $M_i$ is contained on the perturbed hyperplane $h'$. Since $rank(M) = s$ this perturbation has non-zero inner product with some column in $M$ and so this hyperplane $h'$ will eventually contain a new column from $M$ (without changing the sign of $h(M_i)$ for any other column). We can continue this argument until the hyperplane contains $s$ affinely independent columns of $M$ and by design on all remaining columns agrees in sign with $h$. 
\end{proof}

\begin{lemma}~\label{lemma:char}
Let $rank(M) = s$. For any hyperplane $h$ (which defines a partition), there is a collection of $k \leq s$ sets of (at most $s $) columns of $M$, $S_1, S_2, .. S_k$ so that any hyperplanes $g_1, g_2, .. g_k$ which contain $S_1, S_2, ... S_k$ respectively satisfy: For all $i$, $h(M_i)$ (as a partition) is equal to the value of $g_j(M_i)$, where $j$ is the smallest index for which $g_j(M_i) \neq 0$. Furthermore these subsets are nested: $S_1 \supset S_2 \supset ... \supset S_k$. 
\end{lemma}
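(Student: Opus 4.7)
The plan is to iteratively invoke Lemma~\ref{lemma:extension} on a descending chain of affine subspaces, peeling off columns that get classified at each stage. After restricting to $\mathrm{span}(M) \cong \R^s$, the first application of Lemma~\ref{lemma:extension} to $h$ produces a hyperplane $g_1$ that contains a set $S_1$ of $s$ affinely independent columns of $M$ and agrees with $h$ on every column $M_i$ with $g_1(M_i) \neq 0$. Because $s$ affinely independent points in $\R^s$ determine a codim-$1$ affine subspace uniquely, $g_1$ is the affine hull of $S_1$, so any ambient hyperplane containing $S_1$ must equal $g_1$ and classifies these ``generic'' columns the same way $h$ does.

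For the second step, the unclassified columns $T_1 = \{M_i : g_1(M_i) = 0\}$ all lie in the affine hyperplane $g_1$, which has dimension $s-1$, and $h$ restricted to $g_1$ is again a hyperplane separation. Apply Lemma~\ref{lemma:extension} within $g_1$ to obtain a codim-$1$ affine subspace $H_2 \subset g_1$ containing at most $s-1$ affinely independent columns of $T_1$; take $S_2$ to be these columns and let $g_2$ be any ambient hyperplane with $g_2 \cap g_1 = H_2$. Iterating, at step $j \geq 2$ we work inside the running intersection $g_1 \cap \cdots \cap g_{j-1}$, an affine subspace of dimension at most $s-j+1$, and apply Lemma~\ref{lemma:extension} inside it to produce $g_j$ together with a set $S_j$ of at most $s-j+1$ affinely independent columns lying in $T_{j-1} := \{M_i : g_l(M_i) = 0 \text{ for all } l < j\}$. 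The recursion terminates after $k \leq s$ steps because the affine dimension of the running intersection strictly drops; and by construction each $S_j$ lies inside a strictly smaller affine subspace than its predecessor, with strictly smaller cardinality, which gives the nested structure in the statement.

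It remains to verify the partition property: for every $i$, if $j$ is the smallest index with $g_j(M_i) \neq 0$, then $g_j(M_i) = h(M_i)$. This is an immediate consequence of the extension property at the $j$-th invocation of Lemma~\ref{lemma:extension}: such an $M_i$ lies in $g_1 \cap \cdots \cap g_{j-1}$ but is outside $g_j$, and within the running intersection the lemma guarantees that the restricted $h$ and the restricted $g_j$ agree in sign on every column with nonzero value under $g_j$.

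The main obstacle is that the ambient hyperplane $g_j$ (for $j \geq 2$) is \emph{not} uniquely determined by $S_j$: there is a one-parameter family of hyperplanes in $\R^s$ whose intersection with $g_1 \cap \cdots \cap g_{j-1}$ equals $H_j$. We need to show this ambiguity does not spoil the partition, i.e., that \emph{any} choice of $g_j$ consistent with $S_j$ still recovers $h(M_i)$ under the smallest-$j$ rule. The point is that a column $M_i$ with $g_1(M_i) = \cdots = g_{j-1}(M_i) = 0$ and $g_j(M_i) \neq 0$ necessarily lies in $g_1 \cap \cdots \cap g_{j-1}$, where the value of any ambient $g_j$ with the prescribed intersection is determined by its restriction to that subspace; hence its sign on $M_i$ is forced to match the sign produced by Lemma~\ref{lemma:extension} inside the subspace, which equals $h(M_i)$. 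Thus the partition depends only on the nested data $S_1 \supset S_2 \supset \cdots \supset S_k$ as claimed.
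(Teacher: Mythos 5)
The core of your argument --- iterating Lemma~\ref{lemma:extension} along a descending chain of affine subspaces and peeling off the classified columns at each step --- is indeed the paper's approach. The genuine gap is the nesting claim. You assert that $S_1\supset S_2\supset\cdots\supset S_k$ holds ``by construction'' because each $S_j$ lies in a strictly smaller affine subspace than its predecessor and has strictly smaller cardinality. That inference does not hold: both $S_1$ and $S_2$ are subsets of $T_1=\{M_i: g_1(M_i)=0\}$, and having fewer elements in a smaller subspace in no way forces $S_2\subseteq S_1$. Nothing in your top-down construction makes the columns chosen at step~$2$ a subset of those chosen at step~$1$; in general they will not be.

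The paper avoids this by building the chain from the inside out. It first recurses on the columns of $M$ lying in $g_1$, obtaining the nested tail $S_2\supset\cdots\supset S_k$, and only afterwards constructs $S_1$ by starting from $S_2$ and adding columns of $M$ contained in $g_1$ until an affinely independent set of the required cardinality is reached; this explicit augmentation is what guarantees $S_1\supset S_2$. The nesting is not a cosmetic clause: it is exactly what lets the enumeration in Theorem~\ref{thm:enumerate} maintain a single shrinking active set and pay only $O(m^s(s+2)^s)$ rather than something like $m^{\Theta(s^2)}$ for independently chosen sets. To repair your proof, either adopt the bottom-up recursion, or, after producing your (possibly non-nested) chain, go back and replace each $S_{j-1}$ with $S_j$ augmented by additional columns from $g_1\cap\cdots\cap g_{j-2}$ until the required affine rank is reached, re-verifying that the augmented sets still uniquely determine the corresponding $g_{j-1}$.
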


\begin{proof}
We can apply Lemma~\ref{lemma:extension} repeatedly. When we initially apply the lemma, we obtain a hyperplane $g_1$ that can be extended (as a separation) to the partition corresponding to $h$. In the above function (defined implicitly in the lemma) this fixes the partition of the columns except those contained in $g_1$. So we can then choose $M'$ to be the columns of $M$ that are contained in $g_1$, and recurse. If $S_2$ is the largest set of columns output from the recursive call, we can add columns of $M$ contained in $g_1$ to this set until we obtain a set of $s+1$ affinely independent columns contained in $g_1$, and we can output this set (as $S_1$). 
\end{proof}

\begin{theorem}~\label{thm:enumerate}
Let $rank(M) =s$. There is an algorithm that runs in time $O(m^{s}(s + 2)^s)$ time to enumerate all hyperplane partitions of the columns of $M$.
\end{theorem}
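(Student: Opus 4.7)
The plan is to leverage Lemma~\ref{lemma:char}, which asserts that every hyperplane partition of the columns of $M$ can be encoded by a nested chain $S_1 \supset S_2 \supset \cdots \supset S_k$ of column subsets with $k \le s$ and each $|S_j| \le s$, together with hyperplanes $g_1, \ldots, g_k$ (where $g_j$ contains $S_j$) applied under the first-nonzero-sign rule. Thus it suffices to enumerate all such nested chains, pick hyperplanes $g_j$ for each, and output the induced partition. By Lemma~\ref{lemma:char} every hyperplane partition is produced this way, so the list is exhaustive.

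To count the candidate chains efficiently, I would first observe that there are at most $O(m^s)$ choices for $S_1$, since $|S_1| \le s$ and we are choosing an unordered subset of the $m$ columns of $M$. Once $S_1$ is fixed, the full chain $S_1 \supset S_2 \supset \cdots \supset S_k$ is determined by a ``depth'' function $d \colon S_1 \to \{1,2,\ldots,s\}$ where an element at depth $j$ lies in $S_1,\ldots,S_j$ but not in $S_{j+1}$; this accounts for at most $s^s$ chain structures per $S_1$. Finally, each $g_j$ is determined by $S_j$ only up to an orientation/sign choice (and this orientation directly affects the partition value), so we enumerate the $2^k \le 2^s$ sign patterns. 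Altogether this gives the desired $O(m^s (s+2)^s)$ bound, where the $(s+2)^s$ absorbs the $s^s$ chain structures, the $2^s$ sign choices, and low-order slack.

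For each candidate chain and sign assignment, producing the partition is routine: for every $j$, solve a small linear system on the columns indexed by $S_j$ to obtain some hyperplane $g_j$ (with the prescribed sign) passing through $S_j$, then evaluate $g_j \cdot M_i$ for each column $i$ and assign $M_i$ to the side dictated by the smallest $j$ with $g_j \cdot M_i \neq 0$. Each partition takes $\mathrm{poly}(s,m)$ time to construct, which is absorbed into the constant-factor slack of the enumeration bound. Duplicate partitions (produced by different chains) can be removed by hashing.

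The main obstacle is ensuring exhaustiveness and handling the non-uniqueness of $g_j$ when $|S_j|$ is strictly less than $\mathrm{rank}(M)$. Exhaustiveness is guaranteed by Lemma~\ref{lemma:char} itself, which does the heavy lifting. The non-uniqueness is also addressed by that lemma: it explicitly permits \emph{any} hyperplane containing $S_j$ to serve as $g_j$, so any canonical choice (e.g., the one produced by the linear solver) suffices, and we need not enumerate over continuations within each level.
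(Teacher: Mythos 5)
Your proof takes essentially the same route as the paper's: invoke Lemma~\ref{lemma:char} to reduce to enumerating nested chains $S_1 \supset \cdots \supset S_k$ of columns together with sign choices, and for each candidate chain recover a hyperplane and read off the partition by the first-nonzero-sign rule. The paper describes the enumeration as a sequential ``remove-or-mark'' process on an active set, while you package the same information as $S_1$ plus a depth function plus a sign vector --- these are equivalent parameterizations of the same search space.

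One small arithmetic slip: you claim the factor $(s+2)^s$ absorbs $s^s \cdot 2^s$, but $s^s\cdot 2^s = (2s)^s$ exceeds $(s+2)^s$ for $s\ge 3$. This doesn't change anything substantive since the relevant factor depends only on $s$ (and hence on $r$), which is treated as a constant, and the paper's own accounting in this spot is similarly loose; but as stated the inequality you assert is false. Also, the paper adds an explicit LP check to filter each candidate, which you omit; this is acceptable for the downstream use (Lemma~\ref{lemma:geom} only requires a superset of simplicial partitions), but if one insists on literally ``all hyperplane partitions and nothing else'' as the theorem statement says, the filtering or deduplication step you mention in passing should be made explicit.
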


\begin{proof}
We can apply Lemma~\ref{lemma:char} and instead enumerate the sets of points $S_1, S_2, ... S_s$. Since these sets are nested, we can enumerate all choices as follows:

\begin{itemize}

\item choose at most $s$ columns corresponding to the set $S_1$

\item initialize an active set $T = S_1$

\item until $T$ is empty either

\begin{itemize}

\item choose a column to be removed from the active set

\item or indicate that the current active set represents the next set $S_i$ and choose the sign of the corresponding hyperplane

\end{itemize}

\end{itemize}

There are at most $O(m^{s}(s + 2)^s)$ such choices, and for each choice we can then run a linear program to determine if there is a corresponding hyperplane partition. (In fact, all partitions that result from the above procedure will indeed correspond to a hyperplane partition). The correctness of this algorithm follows from Lemma~\ref{lemma:char}.
\end{proof}

This immediately implies:

\begin{corollary}
There is an algorithm that runs in time $O(m^{k s^2)})$ that enumerates a set of partitions of the columns of $M$ that contains the set of all $(k, s)$-simplicial partitions (of the columns of $M$). 
\end{corollary}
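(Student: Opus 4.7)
The plan is to reduce the enumeration of $(k,s)$-simplicial partitions to the enumeration of single-hyperplane partitions, then invoke Theorem~\ref{thm:enumerate} as a black box for each of the $ks$ hyperplanes that define a simplicial partition.

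First I would observe the following structural fact from the definition: a $(k,s)$-simplicial partition is completely determined by specifying, for each of the $ks$ generating hyperplanes $h_j^i$, which columns of $M$ satisfy $h_j^i \cdot M_{i'} \ge 0$ and which do not. In other words, the simplicial partition is a deterministic function of the $ks$ hyperplane partitions induced by the individual generating hyperplanes (we form each $Q_i$ by intersecting $s$ halfspace memberships, then form the $P_i$'s by successive set differences as in the definition). Therefore it suffices to enumerate all $ks$-tuples of hyperplane partitions of the columns of $M$ and, for each tuple, construct the corresponding candidate partition and output it.

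Next I would bound the running time. By Theorem~\ref{thm:enumerate}, the number of distinct hyperplane partitions of the columns of $M$ (where $\mathrm{rank}(M) \le s$) is at most $O(m^{s}(s+2)^{s})$, and all of them can be enumerated within time $O(m^{s}(s+2)^{s})$. Taking the Cartesian product over $ks$ independent choices of hyperplane partition, the total number of tuples is at most
\[
\bigl(O(m^{s}(s+2)^{s})\bigr)^{ks} \;=\; O\!\left( m^{ks^{2}} \cdot (s+2)^{ks^{2}} \right),
\]
and for each tuple the associated simplicial partition can be assembled in $\poly(n,m,r)$ time by computing the sets $Q_i$ and then the $P_i$. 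Absorbing the $(s+2)^{ks^{2}}$ factor and the per-tuple assembly cost into the $O(\cdot)$ (these depend only on $r$ and $s \le r$, which are constants in this regime), the overall running time is $O(m^{ks^{2}})$, as claimed.

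Finally, correctness: the enumeration above may also produce some partitions that are not of the form described by the simplicial-partition generation rule (for instance, tuples where the resulting $Q_i$'s collapse or produce degenerate $P_i$'s), but the corollary only asks for a superset that contains every $(k,s)$-simplicial partition, which is immediate from the observation in the first paragraph. The main (very minor) obstacle is bookkeeping: one must verify that Theorem~\ref{thm:enumerate} applies uniformly across hyperplanes sharing the same ambient column set and that the constructions commute with the set-difference definition of the $P_i$; both points are routine once the reduction to single-hyperplane partitions is in hand.
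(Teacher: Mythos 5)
Your proof is correct and follows exactly the argument the paper intends: over-specify a $(k,s)$-simplicial partition by the $ks$ single-hyperplane partitions that generate it, enumerate each of those via Theorem~\ref{thm:enumerate}, and take the $ks$-fold Cartesian product, which yields the $O(m^{ks^2})$ bound after absorbing the $r$-dependent constant factors. The paper states the corollary as an immediate consequence and relies on the same ``over-specification'' observation made earlier in that subsection, so there is no substantive difference between your route and theirs.
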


\subsection{Solving Systems of Polynomial Inequalities}

The results of Basu et al \cite{BPR} give an algorithm for finding a point in a semi-algebraic set defined by $O(mn)$ constraints on polynomials of total degree at most $d$, and $f(r)$ variables in time $O( (mnd)^{cf(r)})$. Using our structure theorem for nonnegative matrix factorization, we will re-cast the decision problem of whether a nonnegative matrix $M$ has nonnegative rank $r$ as an existence question for a semi-algebraic set. 

\begin{theorem}
There is an algorithm for deciding if a $n \times m$ nonnegative matrix $M$ has nonnegative rank $r$ that runs in time $O( (nm)^{O(r^2 2^r)} )$. Furthermore, we can compute a rational approximation to the solution up
to accuracy $\delta$ in time $\poly(L, (nm)^{O(r^2 2^r)}, \log 1/\delta)$. 
\end{theorem}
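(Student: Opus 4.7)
The plan is to combine the structural results (Lemma~\ref{lemma:chain}, Lemma~\ref{lemma:reconstr}, Lemma~\ref{lemma:geom}) with the partition enumeration from the Corollary to Theorem~\ref{thm:enumerate} and the semi-algebraic decision procedure of Basu et al~\cite{BPR}. By Lemma~\ref{lemma:chain}, $M$ has a nonnegative factorization of inner-dimension $r$ iff there is a proper chain $(A,W,A',W')$ with choice functions $\sigma_{W'}\colon[m]\to \calC(A)$ and $\sigma_{A'}\colon[n]\to \calR(W')$; by Lemma~\ref{lemma:geom}, the induced partitions are $(k,s)$-simplicial partitions with $k\leq 2^r$ and $s\leq r$ (and symmetrically for the row partition). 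The high-level strategy is therefore: enumerate all candidate simplicial partitions $(\sigma,\tau)$ of columns and rows, and for each one, decide via a polynomial system whether linear transformations consistent with the reconstruction formulas of Lemma~\ref{lemma:reconstr} exist.

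For each candidate pair $(\sigma,\tau)$, I would fix a basis for the column-span and for the row-span of $M$, and introduce unknown $r\times r$ matrices $T_1,\dots,T_{2^r}$ and $S_1,\dots,S_{2^r}$, intended to represent the $\Pi(A,U)$ and $\Pi(W'^T,V)$ transformations of Lemma~\ref{lemma:reconstr} (expressed in the chosen bases) for the at most $2^r$ relevant subsets. Setting $W'_i := T_{\sigma(i)} M_i$ and $A'^j := M^j S_{\tau(j)}$, I would impose the constraints: (i) $T_{\sigma(i)} M_i \geq 0$ for each $i\in[m]$, (ii) $M^j S_{\tau(j)} \geq 0$ for each $j\in[n]$, and (iii) $(M^j S_{\tau(j)})(T_{\sigma(i)} M_i) = M_{ji}$ for each $(j,i)\in[n]\times[m]$. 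This is a semi-algebraic system with $O(r^2 2^r)$ variables, $O(nm)$ constraints, and total degree at most $2$; Basu et al then decides feasibility in time $(nm)^{O(r^2 2^r)}$. The Corollary to Theorem~\ref{thm:enumerate} produces all candidate pairs $(\sigma,\tau)$ in time $(nm)^{O(r^2 2^r)}$, so iterating over them keeps the total running time at $(nm)^{O(r^2 2^r)}$.

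Correctness in one direction is immediate: any feasible $(T,S)$ directly yields nonnegative matrices $A',W'$ with $M = A'W'$. In the other direction, if a factorization exists then Lemma~\ref{lemma:chain} produces a proper chain, Lemma~\ref{lemma:geom} ensures that one enumerated pair $(\sigma,\tau)$ agrees with it, and Lemma~\ref{lemma:reconstr} exhibits a feasible $(T,S)$. The enumeration returns a superset of partitions actually realized by proper chains, which is harmless since only existence of a feasible candidate is needed. For the rational approximation claim, I would replace Basu et al by Renegar's algorithm~\cite{Ren} and invoke the bit-complexity bound of Grigor'ev and Vorobjov~\cite{GriVor}, yielding the stated $\poly(L,(nm)^{O(r^2 2^r)},\log 1/\delta)$ running time. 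The central obstacle that this entire machinery is designed to circumvent is the variable count: a naive formulation with one variable per entry of $A$ and $W$ uses $nr+mr$ variables and gives a hopeless exponential dependence on $n$ and $m$, whereas the proper-chain structure together with simplicial partition enumeration is exactly what allows all of $A$ and $W$ to be encoded by $O(2^r)$ linear-transformation templates with only $O(r^2 2^r)$ unknowns.
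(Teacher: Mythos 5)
Your proposal is correct and follows essentially the same route as the paper's own proof: Lemma~\ref{lemma:chain} establishes existence of a proper chain, Lemma~\ref{lemma:geom} shows the induced choice functions are simplicial partitions, the Corollary to Theorem~\ref{thm:enumerate} enumerates a superset of candidates, Lemma~\ref{lemma:reconstr} justifies encoding $W'$ and $A'$ via $O(2^r)$ unknown $r\times r$ transformations applied to the coordinates of $M$'s columns and rows in fixed bases, and the resulting degree-$2$ system in $O(r^2 2^r)$ variables is handed to Basu et al.\ (Renegar plus Grigor'ev--Vorobjov for the rational-approximation variant). One small note: your constraint (iii) as written is cleaner than the paper's display, which has what appears to be a typo swapping $M_C$ and $M_R$; your version with the column coordinates fed to $T_{\sigma(i)}$ and the row coordinates hitting $S_{\tau(j)}$ is the dimensionally consistent one.
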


\noindent We first prove the first part of this theorem using the algorithm of Basu et al \cite{BPR}, and we instead use the algorithm of Renegar \cite{Ren} to compute a rational approximation to the solution up to accuracy $\delta$ in time $\poly(L, (nm)^{O(r^2 2^r)}, \log 1/\delta)$. 

\vspace{0.5pc}

\begin{proof}
Suppose there is such a factorization. Using Lemma~\ref{lemma:chain}, there is also a proper chain. We can apply Lemma~\ref{lemma:geom} and using the algorithm in Theorem~\ref{thm:enumerate} we can enumerate over a superset of simplicial partitions. Hence, at least one of those partitions will result in the choice functions $\sigma_{W'}$ and $\sigma_{A'}$ in the proper chain decomposition for $M = AW$. 

Using Lemma~\ref{lemma:reconstr} there is a set of at most $2^r$ linear transformations $T_1, T_2, ... T_{2^r}$ which recover columns of $W'$ given columns of $M$, and similarly there is a set of at most $2^r$ linear transformations $S_1, S_2, ... S_{2^r}$  which recover the rows of $A'$ given rows of $M$. Note that these linear transformations are from the column-span and row-span of $M$ respectively, and hence are from subspaces of dimension at most $r$. So apply a linear transformation to columns of $M$ and one to rows of $M$ to to recover matrices $M_C$ and $M_R$ respectively (which are no longer necessarily nonnegative) but which are dimension $r \times m$ and $n \times r$ respectively. There will still be a collection of at most $2^r$ linear transformations from columns of $M_C$ to columns of $W'$, and similarly for $M_R$ and $A'$. 

We will choose $r^2$ variables for each linear transformation, so there are $2 *r^2 * 2^r$ variables in total. Then we can write a set of $m$ linear constraints to enforce that for each column of $(M_C)_i$, the transformation corresponding to $\sigma_{W'}(i)$ recovers a nonnegative vector. Similarly we can define a set of $n$ constraints based on rows in $M_R$. 

Lastly we can define a set of constraints that enforce that we do recover a factorization for $M$: For all $i \in [m], j \in [n]$, let $i' = \sigma_{W'}(i)$ and $j' = \sigma_{A'}(j)$. Then we write the constraint $(M_C)^j S_{j'} T_{i'}(M_R)_i = M_i^j$. This constraint has degree at two in the variables corresponding to the linear transformations. Lemma~\ref{lemma:chain} implies that there is some choice of these transformations that will satisfy these constraints (when we formulate these constraints using the correct choice functions in the proper chain decomposition). Furthermore, any set of transformations that satisfies these constraints does define a nonnegative matrix factorization of inner dimension $r$ for $M$. 

And of course, if there is no inner dimension $r$ nonnegative factorization, then all calls to the algorithm of Basu et al \cite{BPR} will fail and we can return that there is no such factorization. 
\end{proof}

The result in Basu et. al.~\cite{BPR} is a quantifier elimination algorithm in the Blum, Shub and Smale (BSS) model of computation~\cite{BCSS}. The BSS model is a model for real number computation and it is natural to ask what is the bit complexity of finding a rational approximation of the solutions. There has been a long line of research on the decision problem for first order theory of reals: given a quantified predicate over polynomial inequalities of reals, determine whether it is true or false. What we need for our algorithm is actually a special case of this problem: given a set of polynomial inequalities over real variables, determine whether there exists a set of values for the variables so that all polynomial inequalities are satisfied. In particular, all variables in our problem are quantified by existential quantifier and there are no alternations. For this kind of problem Grigor'ev and Vorobjov \cite{GriVor} first gave a singly-exponential time algorithm that runs in $(nd)^{O(f(r)^2)}$ where $n$ is the number of polynomial inequalities, $d$ is the maximum degree of the polynomials and $f(r)$ is the number of variables. The bit complexity of the algorithm is $\poly(L, (nd)^{O(f(r)^2)})$ where $L$ is the maximum length of the coefficients in the input. Moreover, their algorithm also gives an upperbound of $\poly(L, (nd)^{O(f(r))})$ on the number of bits required to represent the solutions. Renegar\cite{Ren} gave a better algorithm that for the special case we are interested in takes time $(nd)^{O(f(r))}$. Using his algorithm with binary search (with search range bounded by Grigor'ev et.al.\cite{GriVor}),  we can find rational approximations to the solutions with accuracy up to $\delta$ in time $\poly(L, (nm)^{O(f(r))}, \log 1/\delta)$.

We note that our results on the SF problem are actually a special case of the theorem above (because our structural lemma for simplicial factorization is a special case of our general structure theorem):

\begin{corollary}
There is an algorithm for determining whether the positive rank of a nonnegative $n \times m$ matrix $M$ equals the rank and this algorithm runs in time $O( (nm)^{cr^2})$.
\end{corollary}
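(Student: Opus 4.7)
The plan is to derive this corollary as a degenerate case of the general NMF structure theorem, where the combinatorial enumeration that drives the $2^r$ factor in the exponent collapses to a single trivial choice. First I would compute $r = \mathrm{rank}(M)$ (easy in polynomial time via standard linear algebra) and then ask whether $M$ admits a nonnegative factorization $M = AW$ of inner dimension $r$. Any such factorization must satisfy $\mathrm{rank}(A) = \mathrm{rank}(W) = r$ purely by rank considerations: since $\mathrm{rank}(AW) \le \min(\mathrm{rank}(A),\mathrm{rank}(W)) \le r$ and equality is forced, $A$ has full column rank and $W$ has full row rank. This is precisely the simplicial case.

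Next I would apply Lemma~\ref{lemma:chain} to obtain a proper chain $(A, W, A', W')$ together with choice functions $\sigma_{W'}$ and $\sigma_{A'}$. Because $A$ and $W'$ both have full rank $r$, the collections $\calC(A)$ and $\calR(W')$ each consist of the single element $[r]$; hence $\sigma_{W'}(i) = [r]$ for every $i \in [m]$ and $\sigma_{A'}(j) = [r]$ for every $j \in [n]$ are completely determined, and the enumeration of simplicial partitions via Theorem~\ref{thm:enumerate} is entirely bypassed. Lemma~\ref{lemma:reconstr} then reduces to a single linear transformation $T = \Pi(A,[r]) = A^+$ that recovers every column of $W'$ from the corresponding column of $M$, and a single transformation $S = \Pi(W'^T,[r])^T$ recovering rows of $A'$ from rows of $M$.

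Following the reduction from the previous theorem, I would then encode $T$ and $S$ as two $r \times r$ matrices of real unknowns acting on the compressed representations $M_C$ and $M_R$ (each of dimension $r$), giving a total of only $2r^2$ variables. The constraints are exactly as in the general case, specialized to the unique choice of partition: $m$ linear inequalities $T(M_C)_i \ge 0$, $n$ linear inequalities $(M_R)^j S \ge 0$, and $nm$ quadratic equalities $(M_R)^j\, S\, T\, (M_C)_i = M^j_i$ enforcing $M = A'W'$. This is a system of $O(nm)$ polynomial constraints of degree at most two in $2r^2$ real variables, and by Lemma~\ref{lemma:chain} specialized to the simplicial setting (equivalently, by Lemma~\ref{lem:simplicialstruc} directly), the system is feasible if and only if the nonnegative rank of $M$ equals $r$.

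Finally I would invoke the algorithm of Basu et al.~\cite{BPR} for feasibility of such a semi-algebraic set, whose running time is $(nmd)^{O(f(r))}$ with $d$ the maximum degree and $f(r)$ the number of variables; substituting $d = 2$ and $f(r) = 2r^2$ yields $O((nm)^{cr^2})$ as claimed. (If rational approximations are desired, Renegar~\cite{Ren} combined with the bit-complexity bounds of Grigor'ev and Vorobjov~\cite{GriVor} gives the analogous statement.) The only real content beyond bookkeeping is verifying that the simplicial hypothesis genuinely forces $|\calC(A)| = |\calR(W')| = 1$, and this step is immediate from the rank count above; everything else is a direct specialization of the machinery already developed for the general NMF algorithm.
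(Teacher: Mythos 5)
Your proposal is correct and follows essentially the same route as the paper: observing that $\mathrm{rank}(M)=r$ forces $A$ and $W$ to have full rank so that $\calC(A)$ and $\calR(W')$ are the singleton $\{[r]\}$, which collapses the simplicial-partition enumeration and leaves a semi-algebraic feasibility problem in only $2r^2$ variables solvable by Basu et al. The paper states this in two sentences; you supply the intermediate bookkeeping (invoking Lemma~\ref{lemma:chain} and Lemma~\ref{lemma:reconstr}, and noting the equivalence with Lemma~\ref{lem:simplicialstruc}), but the argument is identical in substance.
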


\begin{proof}
If $rank(M) = r$, then we know that both $A$ and $W$ must be full rank. Hence $\calC(A)$ and $\calR(W)$ are both just the set $\{1, 2, ... r\}$. Hence we can circumvent the simplicial partition machinery, and set up a system of polynomial constraints in at most $2r^2$ variables. 
\end{proof}

\section{Strong Intractability of Simplicial Factorization}\label{sec:hard}
\label{sec:hardness}

Here we give evidence that finding a simplicial factorization of dimension $r$ probably cannot be solved in $(nm)^{o(r)}$ time, unless $3$-SAT can be solved in 
$2^{o(n)}$ time (in other words, if the Exponential Time Hypothesis of~\cite{IP} is true). Surprisingly, even the $NP$-hardness of the problem for general $r$ 
was only proved quite recently by Vavasis~\cite{Vav}. That reduction is the inspiration for our result, though unfortunately we were unable to use it directly to get 
low-dimensional instances. Instead we give a new  reduction using the $d$-SUM Problem. 

\begin{definition}[$d$-SUM]
In the $d$-SUM problem we are given a set of $N$ values $\{s_1, s_2, ... s_N\}$ each in the range $[0, 1]$, and the goal is to determine if there is a set of $d$ numbers (not necessarily distinct) that sum to exactly $d/2$. 
\end{definition}

This definition for the $d$-SUM Problem is slightly unconventional in that here we allow repetition (i.e. the choice of $d$ numbers need not be distinct). Patrascu and Williams \cite{PW} recently proved that if $d$-SUM can be solved in $N^{o(d)}$ time then $3$-SAT has a sub-exponential time algorithm. In fact, in the instances constructed in \cite{PW} we can allow repetition of numbers without affecting the reduction since in these instances choosing any number more than once will never result in a sum that is exactly $d/2$. Hence we can re-state the results in \cite{PW} for our (slightly unconventional definition for) $d$-SUM. 

\begin{theorem}
If $d < N^{0.99}$ and if $d$-SUM instances of $N$ distinct numbers each of $O(d \log N)$ bits can be solved in $N^{o(d)}$ time then $3$-SAT on $n$ variables can be solved in time $2^{o(n)}$. 
\end{theorem}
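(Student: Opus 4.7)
The plan is to invoke the main fine-grained hardness result of Patrascu and Williams \cite{PW} essentially as a black box and then verify that the (small) technical difference between their notion of $d$-SUM and ours has no effect on the reduction. Recall that [PW] prove that, under ETH, $d$-SUM in the standard sense (where the $d$ chosen numbers must be \emph{distinct}) requires $N^{\Omega(d)}$ time on instances of $N$ numbers each of $O(d \log N)$ bits, in the regime $d < N^{0.99}$. Our definition differs only in that we permit repetitions. Thus it suffices to argue that on the particular instances constructed in [PW], allowing repetition does not introduce any spurious solutions: if the multiset of $d$ chosen numbers contains any repetition, then the sum is never exactly $d/2$.

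To verify this, I would inspect the coordinate layout of the [PW] reduction. In their construction, each of the $N$ numbers carries, in addition to the low-order "target-sum" digits, a block of coordinate marker digits chosen so that the only way to reach the target sum $d/2$ in the marker coordinates is to pick exactly one representative from each of $d$ disjoint groups. Picking any number twice would double-count one group's marker and leave another group's marker unaccounted for, producing a marker part that is strictly off from $d/2$; consequently no repeated choice can sum to $d/2$. Writing this out amounts to restating the digit layout and doing a short mod-arithmetic check.

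With that verified, the proof is just a contrapositive: given a hypothetical $N^{o(d)}$-time algorithm for our permissive $d$-SUM, run it on the instances produced by the [PW] reduction. By the no-repetition property any witness it finds is automatically a valid distinct-witness, and by [PW] this algorithm therefore decides $3$-SAT on $n$ variables in $2^{o(n)}$ time, contradicting ETH. I expect the only real obstacle to be the bookkeeping in the marker-digit check; everything else is a direct citation. Since the paper already states "in these instances choosing any number more than once will never result in a sum that is exactly $d/2$", my proof would simply formalize this assertion against the explicit [PW] construction and then quote their theorem.
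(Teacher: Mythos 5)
Your proposal follows the same route as the paper: the paper itself gives no separate proof of this theorem, just the observation preceding it that in the instances constructed by Patrascu and Williams choosing any number more than once never yields a sum of exactly $d/2$, so their hardness result carries over verbatim to the repetition-allowed variant. Your plan to verify the no-repetition property by inspecting the marker-digit layout in the [PW] construction is a reasonable way to flesh out that assertion, but it is the same idea, not a different argument.
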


Given an instance of the $d$-SUM, we will reduce to an instance of the Intermediate Simplex problem defined in \cite{Vav}. 

\begin{definition}[Intermediate Simplex]
Given a polyhedron $P= \{x \in \Re^{r-1}: H x \ge b\}$ where $H$ is an $n \times (r-1)$ size matrix and $b \in \Re^n$ such that the matrix $[H, b]$ has rank $r$ and a set $S$ of $m$ points in $\Re^{r-1}$, the goal of the Intermediate Simplex Problem is to find a set of points $T$ that form a simplex (i.e. $T$ is a set of $r$ affinely independent points) each in $P$ such that the convex hull of $T$ contains the points in $S$. 
\end{definition}

Vavasis~\cite{Vav} proved that Intermediate Simplex is equivalent to the Simplicial Factorization problem.



\begin{theorem}[Vavasis, 2009~\cite{Vav}]
There is a polynomial time reduction from Intermediate Simplex problem to Simplicial Factorization problem and vice versa and furthermore both reductions preserve the value of $r$. 
\end{theorem}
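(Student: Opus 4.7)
The plan is to establish a geometric dictionary in which a simplicial factorization of a rank-$r$ nonnegative matrix corresponds exactly to a simplex sandwiched between a point set and a polyhedron, and then to extract both reductions from this dictionary. The key observation is this: if $M = AW$ is a simplicial factorization of inner dimension $r$, we may rescale the columns of $A$ and the rows of $W$ (absorbing a diagonal matrix on the right of $A$ and its inverse on the left of $W$) so that every column of $M$, of $A$, and of $W$ has unit $\ell_1$ norm. Then every column of $M$ lies in the convex hull of the $r$ columns of $A$, those $r$ columns lie in the intersection of the nonnegative orthant of $\R^n$ with the affine hyperplane $\{x : \sum_i x_i = 1\}$, and simpliciality forces them to be affinely independent, so they are the vertices of a full $(r-1)$-dimensional simplex.

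For the SF-to-Intermediate-Simplex direction, I would fix a basis for the intersection of $\spn(M)$ with the hyperplane $\sum_i x_i = 1$, yielding an affine parametrization of an $(r-1)$-dimensional affine subspace of $\R^n$. In these coordinates, the nonnegative orthant cuts out a polyhedron $P = \{y \in \R^{r-1} : Hy \ge b\}$ whose $n$ rows of $[H,b]$ are the pullbacks of the coordinate-nonnegativity constraints; since the hyperplane does not pass through the origin, $[H, b]$ has rank $r$. Under this parametrization the $m$ columns of $M$ become a point set $S \subset P$, and the columns of $A$ become $r$ affinely independent points $T \subset P$ whose convex hull contains $S$. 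Conversely, any such $T$ lifts back to the columns of $A$, with $W$ read off from the convex-combination coefficients, so the reduction is polynomial-time and preserves the parameter $r$.

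For the reverse reduction I would run the construction backwards. Given $P = \{y \in \R^{r-1} : Hy \ge b\}$ and point set $S$, lift each $s_j \in S$ to the nonnegative vector $H s_j - b \in \R^n_{\ge 0}$, producing the columns of an $n \times m$ nonnegative matrix $M$; the rank condition on $[H, b]$ forces the image affine subspace to have linear span of dimension exactly $r$, hence $\mathrm{rank}(M) = r$. Any intermediate simplex $T = \{t_1, \ldots, t_r\}$ lifts similarly to an $n \times r$ nonnegative matrix $A$ of full column rank, and the barycentric coordinates expressing each $s_j$ as a convex combination of the $t_i$ supply a nonnegative $W$ with $M = AW$.

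The main obstacle I anticipate is bookkeeping the ambient dimensions so that $r$ is preserved exactly rather than drifting by one: the simplex has $r$ vertices but lives in an $(r-1)$-dimensional affine space, whereas the factorization uses $r$-dimensional column spans. I would address this by treating the affine hyperplane $\sum_i x_i = 1$ as the canonical normalization that collapses exactly one dimension in both directions, and by verifying explicitly that $\mathrm{rank}[H,b] = r$ pairs with $\mathrm{rank}(M) = r$ in each reduction.
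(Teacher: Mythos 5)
The paper does not prove this statement; it is quoted from Vavasis (2009) and cited without proof, so your proposal has to be judged on its own. Your geometric dictionary — normalize to the unit-$\ell_1$ slice so that columns of $M$ become a point set $S$, the nonnegative orthant pulls back to a polytope $P$ in barycentric coordinates, and a simplicial factorization becomes a simplex sandwiched between $\mathrm{conv}(S)$ and $P$ — is exactly the right picture, and your SF-to-Intermediate-Simplex direction is argued in both implications.

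The reverse reduction, however, is only half-verified, and the missing half is the nontrivial one. You show that an intermediate simplex $T$ lifts to a simplicial factorization of $M = [Hs_j - b]_j$, but you never show the converse: that every simplicial factorization $M = AW$ of this constructed $M$ can be converted back into an intermediate simplex, which is required for the reduction to preserve yes/no answers. The obstruction is that the columns of $A$, while nonnegative and in $\spn(M)$, need not lie on the affine slice $\{Hy - b : y \in \R^{r-1}\}$. The repair is to introduce the linear functional $\phi$ on $\spn(M) = \mathrm{range}(H) \oplus \spn(b)$ determined by $\phi|_{\mathrm{range}(H)} = 0$ and $\phi(b) = 1$, so that $\phi(M_j) = -1$ for every $j$, and to prove $\phi(A_i) < 0$ for each column $A_i$; then $A_i/(-\phi(A_i))$ lies on the slice, hence equals $Ht_i - b$ for some $t_i \in P$, and after this rescaling the columns of $W$ sum to one and $T = \{t_i\}$ is the desired simplex. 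Proving $\phi(A_i) < 0$ requires that the pointed cone $\spn(M) \cap \R^n_{\ge 0}$ meet $\ker\phi = \mathrm{range}(H)$ only at the origin, which holds precisely when the recession cone $\{y : Hy \ge 0\}$ of $P$ is trivial, i.e.\ when $P$ is bounded — a hypothesis you never invoke. There is a second, smaller gap in the same direction: you deduce $\mathrm{rank}(M) = r$ from $\mathrm{rank}[H,b]=r$, but the columns $Hs_j - b$ span an $r$-dimensional space only when $S$ affinely spans $\R^{r-1}$; without this, the constructed $M$ has rank $< r$ and is not a legal SF instance even though the original instance may still admit an intermediate simplex. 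Both boundedness of $P$ and full-dimensionality of $S$ need to be stated and used.
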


Interestingly, an immediate consequence of this theorem is that Simplicial Factorization is easy in the case in which $rank(M) = 2$ because mapping these instances to instances of intermediate simplex results in a one dimensional problem - i.e. the polyhedron $P$ is an interval.

\subsection{The Gadget}\label{sec:gad}

Given the universe $U = \{s_1,s_2, \ldots, s_N\}$ for the $d$-SUM problem, we construct a two dimensional Intermediate Simplex instance as shown in Figure~\ref{fig:gadget}. We will show that the Intermediate Simplex instance has exactly $N$ solutions, each representing a choice of $s_i$. Later in the reduction we use $d$ such gadgets to represent the choice of $d$ numbers in the set $U$.

\begin{figure}
  \center
  \includegraphics[width=5in]{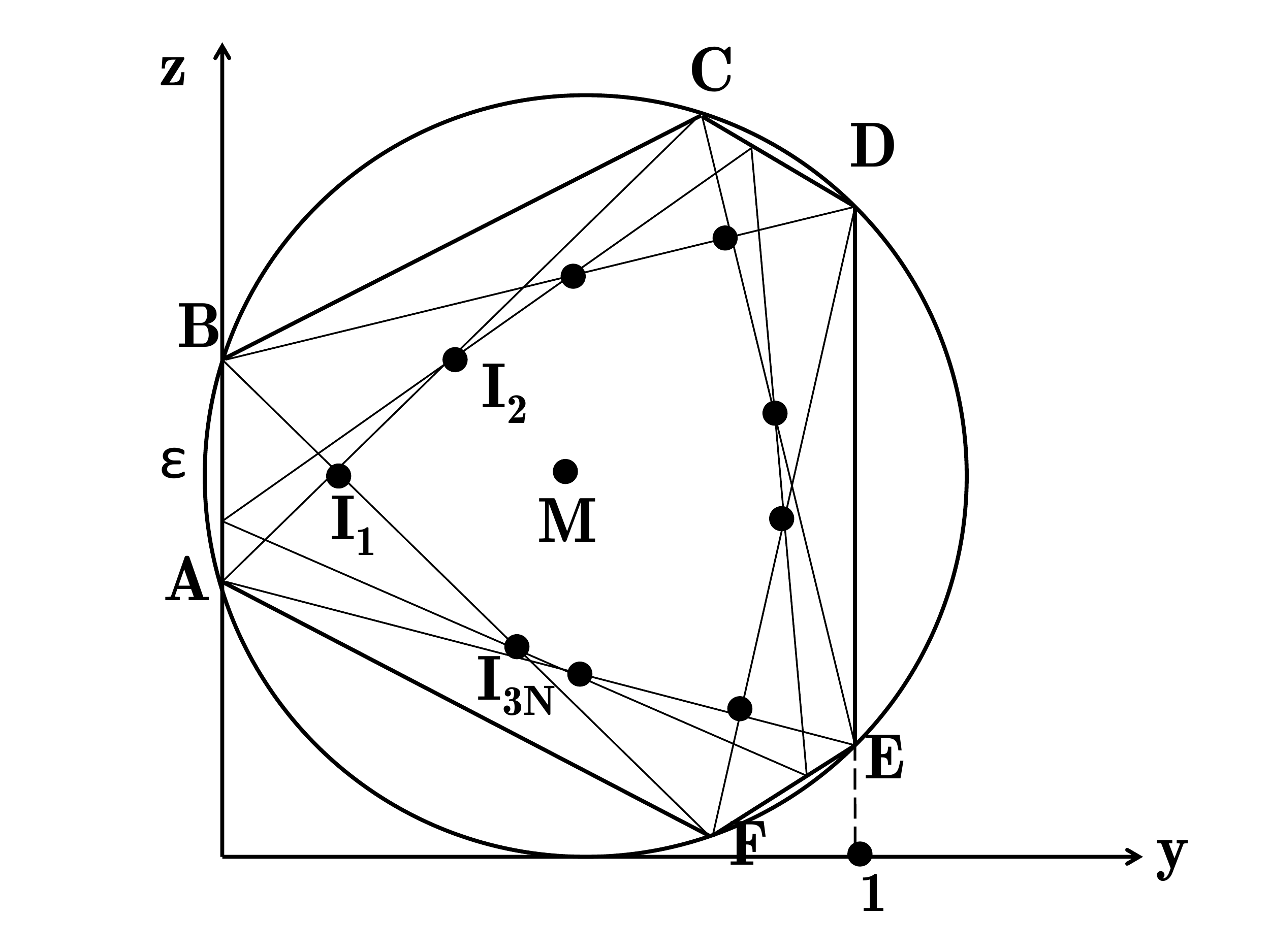}\\
\caption{The Gadget}
  \label{fig:gadget}
\end{figure}

Recall for a two dimensional Intermediate Simplex problem, the input consists of a polygon $\mathcal{P}$ (which is the hexagon $ABCDEF$ in Figure~\ref{fig:gadget}) and a set of points $S = \{I_1,I_2,\ldots, I_{3N}\}$ inside $\mathcal{P}$ (which are the  dots, except for $M$). A solution to this two dimensional Intermediate Simplex instance will be a triangle inside $\mathcal{P}$ such that all the points in $S$ are contained in the triangle (in Figure~\ref{fig:gadget} $ACE$ is a valid solution).

We first specify the polygon $\mathcal{P}$ for the Intermediate Simplex instance. The polygon $\mathcal{P}$ is just the hexagon $ABCDEF$ inscribed in a circle with center $M$. All angles in the hexagon are $2\pi/3$, the edges $AB=CD=EF = \epsilon$ where $\epsilon$ is a small constant depending on $N$, $d$ that we determine later. The other 3 edges also have equal lengths $BC=DE=FA$.

We use $y(A)$ and $z(A)$ to denote the $y$ and $z$ coordinates for the point $A$ (and similarly for all other points in the gadget). The hexagon is placed so that $y(A) = y(B) = 0$, $y(D) = y(E) = 1$. 

Now we specify the set $S$ of $3N$ points for the Intermediate Simplex instance. To get these points first take $N$ points in each of the 3 segements $AB$, $CD$, $EF$. On $AB$ these $N$ points are called $A_1$, $A_2$, ..., $A_N$, and $|AA_i| = \epsilon s_i$. Similarly we have points $C_i$'s on $CD$ and $E_i$'s on $EF$, $|CC_i| = |EE_i|  = \epsilon s_i$. Now we have $N$ triangles $A_iC_iE_i$ (the thin lines in Figure~\ref{fig:gadget}). We claim (see Lemma~\ref{lem:gadgetpoint} below) that the intersection of these triangles is a polygon with $3N$ vertices. The points in $S$ are just the vertices of this intersection. 

\begin{lemma}
\label{lem:gadgetpoint}
When $\epsilon < 1/50$, the points $\{A_i\}$, $\{C_i\}$, $\{E_i\}$ are on $AB$, $CD$, $EF$ respectively and $AA_i = CC_i = EE_i = \epsilon s_i$, the intersection of the $N$ triangles $\{A_iC_iE_i\}$ is a polygon with $3N$ vertices.
\end{lemma}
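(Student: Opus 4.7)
My plan is to use the three-fold rotational symmetry of the gadget about $M$ to reduce the claim to a one-parameter envelope statement, and then verify strict convexity of the envelope by direct computation.

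First, observe that the rotation $\rho$ of order $3$ about $M$ sends $A\to C\to E\to A$, $B\to D\to F\to B$, and $A_i\to C_i\to E_i$; hence each triangle $T_i$ is equilateral with centroid $M$, and $P := \bigcap_{i=1}^N T_i$ is $\rho$-invariant. The boundary $\partial P$ splits into three $\rho$-equivalent arcs, one supported by the family of edges $\{\overline{A_iC_i}\}_i$, one by $\{\overline{C_iE_i}\}_i$, and one by $\{\overline{E_iA_i}\}_i$. It therefore suffices to show that the first family contributes $N$ edges (and hence $N{-}1$ interior vertices) to its arc; applying $\rho$ and $\rho^2$ then yields $3N$ edges and $3(N{-}1)+3 = 3N$ vertices in total.

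For the first family, set $\ell_s := \overline{A_sC_s}$ for $s\in[0,1]$, and let $H_s$ be the closed half-plane bounded by $\ell_s$ that contains $M$. Placing $M$ at the origin, $\ell_s$ has equation $a(s)x + b(s)y = c(s)$ whose coefficients are low-degree polynomial in $s$ (and parametric in $\epsilon$). Compute the envelope $\gamma(s) := (x(s),y(s))$ of the family $\{\ell_s\}_{s\in[0,1]}$ by simultaneously solving $\ell_s=0$ and $\partial_s\ell_s=0$. I claim $\gamma$ is smooth and strictly convex, with $M$ on its concave side. Granted this, each $\ell_s$ is tangent to $\gamma$ at the unique point $\gamma(s)$, so for any $s_{i_1}<\cdots<s_{i_N}$ every line $\ell_{s_i}$ contributes a nonempty edge to $\partial(\bigcap_j H_{s_j})$ (the subarc of $\ell_{s_i}$ containing $\gamma(s_i)$), and consecutive lines $\ell_{s_i}, \ell_{s_{i+1}}$ meet at a single interior vertex. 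This yields the required $N$ edges and $N{-}1$ interior vertices per family.

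The main obstacle is verifying strict convexity of $\gamma$, and this is precisely where the bound $\epsilon < 1/50$ is needed. At $\epsilon = 0$ the hexagon degenerates to the triangle $ACE$ and all $\ell_s$ coincide, so the envelope is a point; for $\epsilon > 0$ small a first-order expansion in $\epsilon$ shows that $\gamma(s)$ is an arc of a non-degenerate parabola whose signed curvature is of order $\epsilon$ and has a definite sign throughout $s\in[0,1]$. The explicit constant $1/50$ should give sufficient margin for the $O(\epsilon^2)$ higher-order corrections not to flip this sign anywhere on $[0,1]$, and a complementary (easier) check that it also affords is that the three families' contributions to $\partial P$ occupy disjoint angular sectors around $M$, so that the families' edges do not overlap or dominate one another and the counted edges and vertices are genuinely distinct.
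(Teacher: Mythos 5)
Your overall strategy coincides with the paper's: exploit the order-$3$ rotational symmetry about $M$ to reduce to the single family of lines $\overline{A_sC_s}$, show that each line contributes an edge to the intersection, and then check that these edges are not cut away by the other two families. What you have not done, though, is actually verify either of the two facts you isolate as the crux, and both require a concrete computation. The paper supplies it: after rotating so that the supporting lines of $AB$ and $CD$ become $z=\pm\sqrt3\,y$ and parametrizing $A_i=(y_i,-\sqrt3\,y_i)$, the intersection of $\ell_i$ and $\ell_j$ works out to $\bigl(y_i+y_j+1,\;\sqrt3(1+y_i+y_j+2y_iy_j)\bigr)$, from which it follows directly (by comparing slopes) that line $i$ is the active one exactly on $y\in(y_{i-1}+y_i+1,\;y_i+y_{i+1}+1)$; this explicit monotone ordering of the breakpoints is precisely the ``strict convexity of the envelope'' you assert but leave as a first-order-in-$\epsilon$ heuristic. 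Likewise your ``complementary (easier) check'' that the three arcs occupy disjoint sectors is where $\epsilon<1/50$ actually earns its keep in the paper: with $\epsilon<1/50$ each $y_i\in[-1/2-1/20,-1/2+1/20]$, so every intersection point lies within $1/5$ of $(0,\sqrt3/2)$, whereas the lines $C_iE_i$ and $E_iA_i$ all stay at distance $>1/4$ from that point; you gesture at this but do not establish it. So the route is the same, but as written your argument still has both load-bearing claims open; to close them you would essentially reproduce the paper's coordinate computation, at which point the envelope framing becomes an optional gloss on the elementary pairwise-intersection argument rather than a genuinely different proof.
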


\begin{proof}
Since the intersection of $N$ triangles $A_iC_iE_i$ is the intersection of $3N$ halfplanes, it has at most $3N$ vertices. Therefore we only need to prove every edge in the triangles has a segment remaining in the intersection.
Notice that the gadget is symmetric with respect to rotations of $2\pi/3$ around the center $M$. 
 By symmetry we only need to look at edges $A_iC_i$. The situation here is illustrated in Figure~\ref{fig:gadgetproof}.

Since all the halfplanes that come from triangles $A_iC_iE_i$ contain the center $M$, later when talking about halfplanes we will only specify the boundary line. For example, the halfplane with boundary $A_iC_i$ and contains $E_i$ (as well as $M$) is called halfplane $A_iC_i$.

\begin{figure}
  \center
  \includegraphics[width=4in]{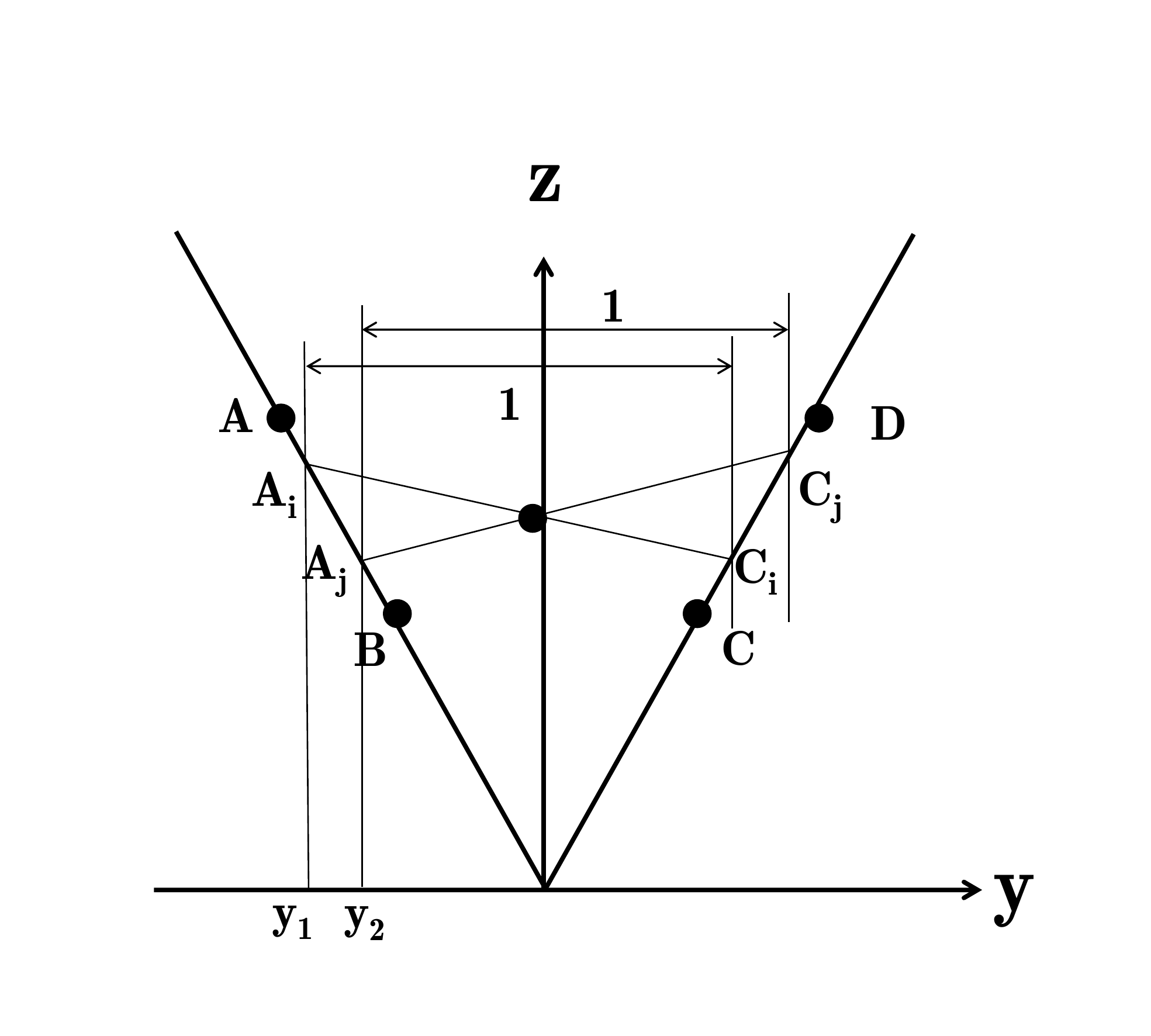}\\
\caption{Proof of Lemma~\ref{lem:gadgetpoint}}
  \label{fig:gadgetproof}
\end{figure}

The two thick lines in Figure~\ref{fig:gadgetproof} are extensions of $AB$ and $CD$, now they are rotated so that they are $z = \pm \sqrt{3} y$. The two thin lines are two possible lines $A_iC_i$ and $A_jC_j$. The differences between $y$ coordinates of $A_i$ and $C_i$ are the same for all $i$ (here normalized to 1) by the construction of the points $A_i$'s and $C_i$'s. Assume the coordinates for $A_i$, $A_j$ are $(y_i,- \sqrt{3} y_i)$ and $(y_j, -\sqrt{3} y_j)$ respectively. Then the coordinates for the intersection is $(y_i+y_j+1, \sqrt{3}(1+y_i+y_j+2y_iy_j))$. This means if we have $N$ segments with $y_1<y_2<\ldots < y_N$, segment $i$ will be the highest one when $y$ is in range $(y_{i-1}+y_i+1, y_i+y_{i+1}+1)$ (indeed, the lines with $j>i$ have higher slope and will win when $y>y_i+y_j+1\ge y_i+y_{i+1}+1$; the lines with $j<i$ have lower slope and will win when $y<y_i+y_j+1 \le y_i+y_{i-1}+1$). 

We also want to make sure that all these intersection points are inside the halfplanes $C_iE_i$'s and $E_iA_i$'s. Since $\epsilon < 1/50$, all the $y_i$'s are within $[-1/2-1/20, -1/2+1/20]$. Hence the intersection point is always close to the point $(0,\sqrt{3}/2)$, the distance is at most $1/5$. At the same time, since $\epsilon$ is small, the distances of this point $(0,\sqrt{3}/2)$ to all the $C_iE_i$'s and $E_iA_i$'s are all larger than $1/4$. 
Therefore all the intersection points are inside the other $2N$ halfplanes and the segments will indeed remain in the intersection. The intersection has $3N$ edges and $3N$ vertices.
\end{proof}

The Intermediate Simplex instance has $N$ obvious solutions: the triangles $A_iC_iE_i$, each one corresponds to a value $s_i$ for the $d$-SUM problem. In the following Lemma we show that these are the only possible solutions.

\begin{lemma}
\label{lem:gadget}
When $\epsilon < 1/1000$, if the solution of the Intermediate Simplex problem is $PQR$, then $PQR$ must be one of the $A_iC_iE_i$'s.
\end{lemma}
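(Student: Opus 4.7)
\medskip
\noindent\textbf{Proof plan.}
I will show in two main stages that any triangle $PQR$ inside the hexagon $\mathcal{P}$ containing the $3N$ vertices of the intersection polygon $\bigcap_{i=1}^N A_iC_iE_i$ must coincide with some $A_iC_iE_i$.

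First, I establish that the vertices of $PQR$ lie one per short side of the hexagon. The intersection polygon has vertices within distance $O(\epsilon)$ of each of $AB$, $CD$, $EF$, and the hexagon widens rapidly away from each short side (at height $\delta$ above $AB$ the hexagon has width only $\epsilon + 2\delta/\sqrt{3}$). A convex-combination argument then forces each of $P, Q, R$ to lie within $O(\epsilon)$ of a distinct short side, and a pinning argument -- moving an interior candidate $P$ pivots the edges $PQ, PR$ and exposes a boundary segment of the polygon lying on some $E_jA_j$ line -- shows that each of $P, Q, R$ must in fact lie exactly on a short side. Up to relabeling, $P \in AB$, $Q \in CD$, $R \in EF$, which I parameterize by $p = |AP|, q = |CQ|, r = |ER|$.

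Second, I show that the only valid positions are $(p, q, r) = (\epsilon s_i, \epsilon s_i, \epsilon s_i)$ for some index $i$. By Lemma~\ref{lem:gadgetpoint}, the polygon has one boundary edge on each of the $3N$ lines $A_iC_i, C_iE_i, E_iA_i$; grouped by family, these form three convex polygonal chains facing $BC, DE, FA$ respectively. The condition that $PQR$ contains the polygon translates to three conditions: each edge of $PQR$ must be a supporting line of the corresponding chain that is realized as a chord between two short sides. Combining these three conditions (using that any perturbation of a vertex $P$, $Q$, or $R$ simultaneously shifts two of the triangle's edges and exposes a polygon segment lying on some $A_kC_k$ or $E_kA_k$ line), I would show the only jointly valid configurations are $p = q = r = \epsilon s_i$, giving $PQR = A_iC_iE_i$.

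The main obstacle is the pinning analysis: quantitatively ruling out interior positions for the vertices (Step~1) and intermediate joint configurations $(p, q, r)$ where at least one edge of $PQR$ is tangent to a chain at an interior vertex rather than along an $A_iC_i$, $C_iE_i$, or $E_iA_i$ edge (Step~2). Both require explicit geometric estimates, and the hypothesis $\epsilon < 1/1000$ provides enough slack -- the hexagon is only slightly larger than any individual $A_iC_iE_i$, so any near-miss configuration either forces $PQR$ outside the hexagon or leaves a specific polygon vertex outside $PQR$, which can be verified by direct coordinate calculation in the small-$\epsilon$ regime.
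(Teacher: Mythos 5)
Your proposal is a plan, not a proof, and you explicitly acknowledge this yourself at the end: the ``pinning analysis'' that would rule out interior vertex positions in Stage~1 and intermediate configurations in Stage~2 is exactly where the content lies, and you have left both as ``explicit geometric estimates'' to be carried out. As written, there is no argument that a vertex of $PQR$ cannot sit on one of the long sides $BC$, $DE$, $FA$, nor any argument that once $P\in AB$, $Q\in CD$, $R\in EF$ you must have $|AP|=|CQ|=|ER|\in\{\epsilon s_1,\dots,\epsilon s_N\}$.

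The paper's proof sidesteps the two-stage decomposition entirely by first observing that the center $M$ lies in the convex hull of the $I_k$'s, hence inside $PQR$, so one of $\angle PMQ,\angle QMR,\angle RMP$ is at least $2\pi/3$; by the hexagon's rotational symmetry this pins one vertex, say $P$, onto $AB\cup BC$ up to symmetry. It then handles two cases. When $P\in AB$: replace $Q$ by the symmetric point $Q'$ with $|CQ'|=|AP|$, note the halfplane $PQ'$ is no more generous than $PQ$, and observe that if $|AP|/\epsilon\notin\{s_1,\dots,s_N\}$ then the segment $PQ'$ is exactly $A_{N+1}C_{N+1}$ for an augmented value $s_{N+1}$, so Lemma~\ref{lem:gadgetpoint} applied to $N+1$ values forces a fresh polygon vertex, which means one of the original $I_k$ is cut off by $PQ'$ and hence by $PQ$. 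When $P\in BC$: if $|BP|<\epsilon$, extend $PQ$ to meet $AB$ and $CD$ and run the same augmentation; if $|BP|\ge\epsilon$, the triangle $PQR$ cannot even contain $M$. You have the germ of the augmentation idea (``exposes a polygon segment lying on some $A_kC_k$ or $E_kA_k$ line''), which is indeed the key use of Lemma~\ref{lem:gadgetpoint}, but you have not made it precise, and you are missing the center-and-angle reduction that makes the case analysis tractable without delicate width estimates on the hexagon.
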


\begin{proof}
Suppose $PQR$ is a solution of the Intermediate Simplex problem, since $M$ is in the convex hull of $\{I_1,I_2,\ldots, I_{3N}\}$, it must be in $PQR$. Thus one of the angles $\angle PMQ$, $\angle QMR$, $\angle RMP$ must be at least $2\pi/3$ (their sum is $2\pi$). Without loss of generality we assume this angle is $\angle PMQ$ and by symmetry assume $P$ is either on $AB$ or $BC$. We shall show in either of the two cases, when $P$ is not one of the $A_i$'s, there will be some $I_k$ that is not in the halfplane $PQ$ (recall the halfplanes we are interested in always contain $M$ so we don't specify the direction).

When $P$ is on $AB$, since $\angle PMQ \ge 2\pi/3$, we have $CQ \ge AP$ (by symmetry when $CQ = AP$ the angle is exactly $2\pi/3$). This means we can move $Q$ to $Q'$ such that $CQ' = AP$. The intersection of halfplane $PQ'$  and the hexagon $ABCDEF$ is at least as large as the intersection of halfplane $PQ$ and the hexagon. However, if $P$ is not any of the points $\{A_i\}$ (that is, $|PQ'|/\epsilon \not\in \{s_1,s_2,...,s_N\}$), then $PQ'$ can be viewed as $A_{N+1}C_{N+1}$ if we add $s_{N+1} = |AP|/\epsilon$ to the set $U$. By Lemma~\ref{lem:gadgetpoint} introducing $PQ'$ must increase the number of vertices. One of the original vertices $I_k$ is not in the hyperplane $PQ'$, and hence not in $PQR$. Therefore when $P$ is on $AB$ it must coincide with one of the $A_i$'s, by symmetry $PQR$ must be one of $A_iC_iE_i$'s.

When $P$ is on $BC$, there are two cases as shown in Figure~\ref{fig:gadgetproof2}.

\begin{figure}
  \center
  \includegraphics[height=3in]{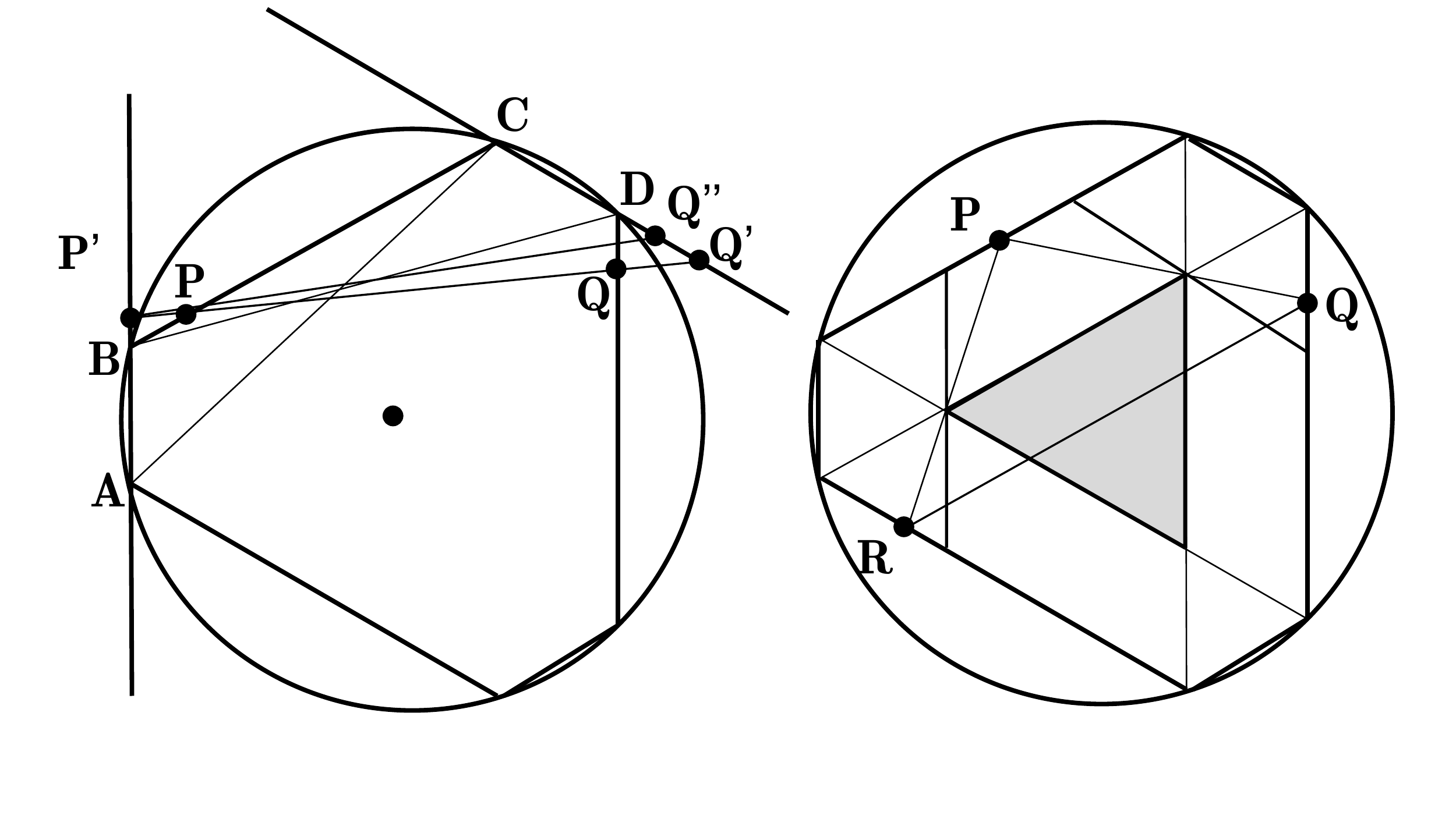}\\
\caption{Proof of Lemma~\ref{lem:gadget}}
  \label{fig:gadgetproof2}
\end{figure}

First observe that if we take $U' = U \cup \{1-s_1, 1-s_2, \ldots, 1-s_N\}$, and generate the set $S = \{I_1, I_2,\ldots, I_{6N}\}$ according to $U'$, then the gadget is further symmetric with respect to flipping along the perpendicular bisector of $BC$. Now without loss of generality $BP \le BC/2$. Since every $I_k$ is now in the intersection of $2N$ triangles, in particular they are also in the intersection of the original $N$ triangles, it suffices to show one of $I_k$ ($k\in [6N]$) is outside halfplane $PQ$.

The first case (left part of Figure~\ref{fig:gadgetproof2}) is when $BP < \epsilon$. In this case we extend $PQ$ to get intersection on $AB$ ($P'$) and intersection on $CD$ ($Q'$). Again since $\angle PMQ \ge 2\pi/3$, we have $DQ\ge BP$. At the same time we know $\angle DQQ'\ge \angle P'PB$, so $DQ' > BP'$. Similar to the previous case, we take $Q''$ so that $CQ'' = AP'$. The intersection of hyperplane $P'Q''$ and the hexagon $ABCDEF$ is at least as large as the intersection of halfplane $PQ$ and the hexagon. When $\epsilon < 1/1000$, we can check $AP' < 2\epsilon \ll 1/50$, therefore we can still view $P'Q''$ as some $A_{2N+1}C_{2N+1}$ for $s_{2N+1} < 2$. Now Lemma~\ref{lem:gadgetpoint} shows there is some vertex $I_k$ not in halfplane $P'Q''$ (and hence not in halfplane $PQ$). 

The final case (right part of Figure~\ref{fig:gadgetproof2}) is when $BP \ge \epsilon$. In this case we notice the triangle with 3 edges $AD$, $BE$, $CF$  (the shaded triangle in the figure) is contained in every $A_iC_iE_i$, thus it must also be in $PQR$. However, since $BC/2\ge BP \ge\epsilon$, we know $AR \le \epsilon$ and $DQ\le\epsilon$. In this case $PQR$ does not even contain the center $M$. 
\end{proof}


\subsection{The Reduction}

Suppose we are given an instance of the $d$-SUM Problem with $N$ values $\{s_1, s_2, ... s_N\}$. We will give a reduction to an instance of Intermediate Simplex in dimension $r - 1 = 3d + 1$. 

To encode the choice of $d$ numbers in the set $\{s_1, s_2, ..., s_N\}$, we use $d$ gadgets defined in Section~\ref{sec:gad}. The final solution of the Intermediate Simplex instance we constructed will include solutions to each gadget. As the solution of a gadget always corresponds to a number in $\{s_1,s_2, ..., s_N\}$ (Lemma~\ref{lem:gadget}) we can decode the solution and get $d$ numbers, and we use an extra dimension $w$ that ``computes'' the sum of these numbers and ensures the sum is equal to $d/2$.

We use three variables $\{x_i,y_i,z_i\}$ for the $i^{th}$ gadget.


\begin{vars}
We will use $3d + 1$ variables: sets $\{x_i, y_i, z_i\}$ for $i \in [d]$ and $w$.
\end{vars}

\begin{cons}[Box]
For all $i \in [d]$, $x_i, y_i \in [0, 1]$, $z_i \in [0,2]$ and also $w \in [0, 1]$.
\end{cons}

\begin{definition}
Let $G \subset \Re^2$ be the hexagon ABCDEF in the two-dimensional gadget 
given in the Section~\ref{sec:gad}.
 Let $H \subset \Re^3$ be the set $conv( \{ (x_i, y_i, z_i) \in \Re^3 | (y_i, z_i) \in G, x_i = 1\}, \vec{0})$. 
\end{definition}

$H$ is a tilted-cone that has a hexagonal base $G$ and has an apex at the origin. 

\begin{definition}
Let $R$ be a $7 \times 3$ matrix and $b \in \Re^7$ so that $\{x | R x \geq b\} = H$. 
\end{definition}

We will use these gadgets to define (some of the) constraints on the polyhedron $P$ in an instance of intermediate simplex:

\begin{cons}[Gadget]
For each $i \in [d]$, $R (x_i, y_i, z_i) \geq b$. 
\end{cons}

Hence when restricted to dimensions $x_i$, $y_i$, $z_i$ the $i^{th}$ gadget $G$ is on the plane $x_i = 1$.

We hope that in a gadget, if we choose three points corresponding to the triangle for some value $s_i$, that of these three points only the point on the $AB$ line will have a non-zero value for $w$ and that this value will be $s_i$. The points on the lines $CD$ or $EF$ will hopefully have a value close to zero. We add constraints to enforce these conditions:

\begin{cons}[CE]
For all $i \in [d]$, $w \leq 1 - y_i + (1- x_i)$
\end{cons}

These constraints make sure that points on $CD$ or $EF$ cannot have large $w$ value.

Recall that we use $z(A)$ to denote the $z$ coordinate of $A$ in the gadget in Section~\ref{sec:gad}. 

\begin{cons}[AB]
For all $i \in [d]$: $w \in \Big [ \frac{ (z_i - z(A)x_i)}{\epsilon} \pm (\frac{10}{\epsilon} y_i  + (1 - x_i)) \Big ]$
\end{cons}

Theses constraints make sure that points on $AB$ have values in $\{s_1, s_2, ..., s_N\}$.

The $AB$ and $CE$ constraints all have the property that when $x_i < 1$ (i.e. the corresponding point is off of the gadget on the plane $x_i = 1$) then these constraints gradually become relaxed. 

To make sure the gadget still works, we don't want the extra constraints on $w$ to rule out some possible values for $x_i$, $y_i$, $z_i$'s. Indeed we show the following claim.

\begin{claim}
For all points in $(x_i, y_i, z_i) \in H$, there is some choice of $w \in [0, 1]$ so that $x_i, y_i, z_i$ and $w$ satisfy the $CE$ and $AB$ Constraints. 
\end{claim}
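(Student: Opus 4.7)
The plan is to exhibit an explicit witness $w \in [0,1]$ for each point in $H$. First I parametrize $H$: since $H$ is the convex hull of the origin with the copy of the hexagon $G$ lying in the plane $x_i = 1$, every point $(x_i,y_i,z_i)\in H$ can be written as $x_i(1,\bar y,\bar z)$ for some $(\bar y,\bar z)\in G$ and some $x_i\in[0,1]$, so that $y_i=x_i\bar y$ and $z_i=x_i\bar z$. Given this parametrization, I propose the witness
$$w \;=\; x_i\cdot w^\star,\qquad w^\star \;:=\; \max\Bigl\{0,\;\min\bigl\{\,1-\bar y,\; \tfrac{\bar z-z(A)}{\epsilon}\bigr\}\Bigr\}.$$
Since $w^\star\in[0,1-\bar y]\subseteq[0,1]$ and $x_i\in[0,1]$, we automatically have $w\in[0,1]$.

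The $CE$ constraint is immediate from the definition: $w = x_iw^\star \le x_i(1-\bar y) = x_i - y_i \le 1-y_i+(1-x_i)$.

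Verifying the $AB$ constraint reduces, after substituting $y_i=x_i\bar y$, $z_i=x_i\bar z$, and $w=x_iw^\star$, to the stronger inequality
$$\Bigl|w^\star-\tfrac{\bar z-z(A)}{\epsilon}\Bigr|\;\le\;\tfrac{10\bar y}{\epsilon} \qquad\text{for all }(\bar y,\bar z)\in G,$$
since multiplying by $x_i$ then gives $|w-(z_i-z(A)x_i)/\epsilon| \le 10y_i/\epsilon \le 10y_i/\epsilon + (1-x_i)$. This is now a purely geometric statement about $G$. Writing $\tau:=(\bar z-z(A))/\epsilon$, the left-hand side equals $0$ when $\tau\in[0,1-\bar y]$, equals $|\tau|$ when $\tau<0$, and equals $\tau-(1-\bar y)$ when $\tau>1-\bar y$. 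In each non-trivial case I plan to bound $|\bar z - z(A)|$ linearly in $\bar y$ using the hexagon geometry: because the long edges $FA$ and $BC$ incident to the short edge $AB$ meet $AB$ at angle $\pi/3$, every point of $G$ with $\bar y\le 1-\epsilon\sqrt 3/2$ satisfies $z(A)-\bar y/\sqrt 3 \le \bar z \le z(A)+\epsilon+\bar y/\sqrt 3$; for the thin top sliver $\bar y\ge 1-\epsilon\sqrt 3/2$ the crude bound $|\bar z - z(A)|\le 1$ (coming from the $O(1)$ diameter of $G$), together with $\bar y\ge 9/10$, suffices.

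The only real obstacle is this geometric inequality on $G$, but it holds with substantial slack: the slopes of the edges at $A$ and $B$ are at most $\sqrt 3$, while the coefficient $10$ in the $AB$ constraint is comfortably larger than $1/\sqrt 3$. Note that in this argument the extra slack term $(1-x_i)$ in the $AB$ constraint is not even needed; it is useful only in the more refined analysis of which $w$-values correspond to which choice of $s_i$ that appears elsewhere in the reduction.
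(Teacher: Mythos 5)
Your proof is correct and takes essentially the same approach as the paper's one-sentence remark that the $CE$ constraint has no bite when $y_i$ is near $0$ and the $AB$ constraint has ample slack (via the $\tfrac{10}{\epsilon}y_i$ term) when $y_i$ is away from $0$. Your explicit witness $w = x_i w^\star$ and the edge-slope bound $z(A) - \bar y/\sqrt 3 \le \bar z \le z(A) + \epsilon + \bar y/\sqrt 3$ (away from the top sliver) just make this remark rigorous, with plenty of margin since $1/\sqrt 3 \ll 10$.
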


The proof is by observing that Constraints $AB$ have almost no effect when $y > 0$ and Constraints $CE$ have no effect when $y = 0$.

Constraints 1 to 4 define a polyhedron $P$ in $3d + 1$-dimensional space and furthermore the set of constraints that define $P$ have full rank (in fact even the inequalities in the Box Constraints have full rank). Thus this polyhedron is a valid polyhedron for the Intermediate Simplex problem.

Next we specify the points in $S$ for the Intermediate Simplex problem(each of which will be contained in the polyhedron $P$). Let $I_k$ (for $k \in [3N]$) be the set $S$ in the gadget in Section~\ref{sec:gad}. As before, let $z(I_k)$ and $y(I_k)$ be the $z$ and $y$ coordinates of $I_k$ respectively. 

\begin{definition}[$w$-$\max(I_k)$]
Let $w$-$\max(I_k)$ be the maximum possible $w$-value of any point $I$ with $x_i = 1$, $y_i = y(I_k)$, $z_i = z(I_k)$ and $x_j, y_j, z_j = 0$ for all $j \neq i$ so that $I$ is still contained in $P$.
\end{definition}

\begin{definition}[$O,W, I^i_k, Q$]\label{def:points}
The set $S$ of points for the Intermediate Simplex problem is

\begin{description}

\item{\bf $O$ point:} For all $i \in [d]$, $x_i, y_i, z_i = 0$ and $w = 0$ 

\item{\bf $W$ point:} For all $i \in [d]$, $x_i, y_i, z_i = 0$ and $w = 1$ 

\item{\bf $I^i_k$ points:} For each $i \in [d]$, for each $k \in [3N]$ set $x_i = 1/4$, $y_i = 1/4 y(I_k)$, $z_i = 1/4 z(I_k)$ and for $j \neq i$ set $x_j, y_j, z_j = 0$. Also set $w$ to be the $1/4 \times w$-$\max(I_k)$. 

\item{\bf $Q$ point:} For each $i \in [d]$, $x_i = 1/d$, $y_i = y(M)/d$, $z_i = z(M)/d$ and $w = 1/6$ 

\end{description}

\end{definition}

This completes the reduction of $3$-SUM to intermediate simplex, and next we establish the COMPLETENESS and SOUNDNESS of this reduction. 

\subsection{Completeness and Soundness}

The completeness part is straight forward: for $i^{th}$ gadget we just select the triangle that corresponds to $s_{k_i}$.

\begin{lemma}
If there is a set $\{s_{k_1}, s_{k_2}, ... s_{k_d}\}$ of $d$ values (not necessarily distinct) such that $\sum_{i \in [d]} s_{k_i} = d/2$ then there is a solution to the corresponding Intermediate Simplex Problem. 
\end{lemma}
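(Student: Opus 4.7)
The plan is to exhibit an explicit simplex $T$ of $r = 3d+2$ affinely independent points in $\mathbb{R}^{3d+1}$ that lies in $P$ and whose convex hull contains every point of $S$. I would take $O$ and $W$ from Definition~\ref{def:points} as two of the vertices, and for each gadget $i \in [d]$ add three vertices $\tilde A_i, \tilde C_i, \tilde E_i$ lying in the plane $x_i = 1$ of gadget $i$ (with all other gadget coordinates zero), whose $(y_i, z_i)$-coordinates equal the 2D gadget points $A_{k_i}, C_{k_i}, E_{k_i}$ respectively; set the $w$-coordinate of $\tilde A_i$ to $s_{k_i}$ and of $\tilde C_i, \tilde E_i$ to $0$. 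Affine independence of these $3d+2$ points follows from the block structure: only gadget $i$'s three vertices carry a nonzero $x_i$-coordinate and they form a non-degenerate triangle in the plane $x_i = 1$, while $W - O$ adds an independent direction along $w$.

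The containment $T \subset P$ is a direct check of the four constraint families. For $O$ and $W$ everything is trivial. For $\tilde A_i$, $A_{k_i} \in G$ gives the Gadget constraint, the CE constraint reduces to $s_{k_i} \leq 1$, and the AB constraint at $(x_i, y_i, z_i) = (1, 0, z(A) + \epsilon s_{k_i})$ degenerates to the single-point window $\{s_{k_i}\}$, matched exactly. For $\tilde C_i, \tilde E_i$, the CE constraint becomes $0 \leq 1 - y(C_{k_i})$ (which holds for $\epsilon$ small), and the AB window has width $\Omega(1/\epsilon)$ at $y_i > 0$, easily containing $w = 0$.

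For $S \subset \mathrm{conv}(T)$, the cases $O$ and $W$ are immediate. For $Q$ I would take the uniform combination $Q = \frac{1}{3d} \sum_{i=1}^d (\tilde A_i + \tilde C_i + \tilde E_i)$: by the $2\pi/3$-rotational symmetry of the gadget around $M$ (which permutes $A_{k_i}, C_{k_i}, E_{k_i}$ cyclically), the centroid of the triangle $A_{k_i} C_{k_i} E_{k_i}$ equals $M$, so each gadget contributes $(x_i, y_i, z_i) = (1/d, y(M)/d, z(M)/d)$, and the $w$-coordinate sums to $\frac{1}{3d} \sum_{i=1}^d s_{k_i} = \frac{1}{6}$ by the hypothesis $\sum_i s_{k_i} = d/2$. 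For $I^i_k$, I would write $I_k = \alpha A_{k_i} + \beta C_{k_i} + \gamma E_{k_i}$ in barycentric coordinates (valid since the intersection polygon lies in every triangle by Lemma~\ref{lem:gadgetpoint}), form the lifted point $\hat I^i_k = \alpha \tilde A_i + \beta \tilde C_i + \gamma \tilde E_i$, and express $I^i_k = \frac{1}{4} \hat I^i_k + \mu O + \nu W$ with $\mu + \nu = 3/4$.

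The main obstacle is showing that $\mu$ and $\nu$ above are both nonnegative. Matching the $w$-coordinate forces $\nu = (w\text{-}\max(I_k) - \alpha s_{k_i})/4$; since $w\text{-}\max(I_k) \leq 1$, we get $\nu \leq 1/4$ and hence $\mu \geq 1/2 \geq 0$ for free. The delicate part is $\nu \geq 0$, and here I would use that $\hat I^i_k$ itself lies in $P$ (as a convex combination of vertices in $P$), so its CE constraint yields $\alpha s_{k_i} \leq 1 - y(I_k) + (1 - 1) = 1 - y(I_k)$. For $\epsilon$ sufficiently small the AB constraint is not binding at $(1, y(I_k), z(I_k))$, so $w\text{-}\max(I_k) = 1 - y(I_k)$, and we conclude $\nu \geq 0$, completing the argument.
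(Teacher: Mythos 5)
Your construction and case analysis match the paper's proof exactly: the same simplex $T = \{O, W\} \cup \{\tilde A_i, \tilde C_i, \tilde E_i\}_{i\in[d]}$, the same barycentric lift for $I^i_k$, and the same uniform combination for $Q$. The one place you work harder than necessary is the $\nu \ge 0$ step: rather than computing $w$-$\max(I_k)$ explicitly as $1 - y(I_k)$ and arguing the AB constraint is slack, note that $\hat I^i_k$ is a convex combination of vertices of $T \subset P$, hence lies in $P$, and matches the coordinates in the definition of $w$-$\max(I_k)$ exactly; so $\alpha s_{k_i} = w(\hat I^i_k) \le w$-$\max(I_k)$ is immediate from the definition, which is the route the paper takes.
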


\begin{proof}
We will choose a set of $3d + 2$ points $T$: We will include the $O$ and $W$ points, and for each $s_{k_i}$, we will choose the triangle corresponding to the value $s_{k_i}$ in the $i^{th}$ gadget. Recall the triangle is $A_{k_i}C_{k_i}E_{k_i}$ in the gadget defined in Section~\ref{sec:gad}. The points we choose have $x_i = 1$ and $y_i$, $z_i$ equal to the corresponding point in the gadget. We will set $w$ to be $s_{k_i}$ for the point on the line $AB$ and we will set $w$ to be zero for the other two points not contained in the line $AB$. The rest of the dimensions are all set to 0.

Next we prove that the convex hull of this set of points $T$ contains all the points in $S$: The points $O$ and $W$ are clearly contained in the convex hull of $T$ (and are in fact in $T$!). Next consider some point $I^i_k$ in $S$ corresponding to some intersection point $I_k$ in the gadget $G$. Since $I_k$ is in the convex hull of the triangle corresponding to $s_{k_i}$ in the gadget $G$, there is a convex combination of the these three points $A_{k_i}, C_{k_i}, E_{k_i}$ in $T$ (which we call $J$) so that $1/4 J$ matches $I^i_k$ on all coordinates except possibly the $w$-coordinate. Furthermore the point $J$ has some value in the coordinate corresponding to $w$ and this must be at most the corresponding value in $I^i_k$ (because we chose the $w$-value in $I^i_k$ to be $1/4 \times w$-$\max(I_k)$). Hence we can distribute the remaining $3/4$ weight among the $O$ and $W$ points to recover $I^i_k$ exactly on all coordinates. 

Lastly, we observe that if we equally weight all points in $T$ (except $O$ and $W$) we recover the point $Q$. In particular, the $w$ coordinate of $Q$ should be $\frac{1}{3d}\sum_{i=1}^d s_{k_i} = 1/6$. 
\end{proof}

Next we prove SOUNDNESS for our reduction. Suppose the solution is $T$, which is a set of $3d+2$ points in the polyhedron $P$ and the convex hull of points in $T$ contains all the $O$, $W$, $I^i_k$, $Q$ points (in Definition~\ref{def:points}).

\begin{claim}
The points $O$ and $W$ must be in the set $T$.
\end{claim}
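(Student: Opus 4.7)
The plan is to show that both $O$ and $W$ are vertices (extreme points) of the polyhedron $P$; the claim then follows immediately from the standard fact that if an extreme point of $P$ lies in the convex hull of a subset $T \subseteq P$, it must coincide with one of the points of $T$. So the problem reduces to identifying enough tight, linearly independent constraints at $O$ and at $W$.

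First I would check that $O, W \in P$. At $O$ every coordinate is zero, so the Gadget inequalities $R(x_i, y_i, z_i) \geq b$ reduce to $0 \geq b$ (satisfied because $H$ contains the origin by construction), the $CE$ bound becomes $w \leq 2$, and the $AB$ band becomes $w \in [-1, 1]$; all are satisfied by $w = 0$. The point $W$ differs only in $w = 1$, and the same constraints are still easily satisfied (the $AB$ band evaluates to $[-1,1]$ and the $CE$ bound to $2$). Next I would verify that $O$ is a vertex of $P$ by exhibiting $3d + 1$ linearly independent tight inequality constraints: the Box Constraints $x_i \geq 0$, $y_i \geq 0$, $z_i \geq 0$ for each $i \in [d]$, together with $w \geq 0$, give $3d + 1$ inequalities tight at $O$ whose gradients are the standard basis vectors of $\mathbb{R}^{3d+1}$, hence are linearly independent. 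The same argument handles $W$: the $3d$ coordinate lower bounds on $x_i, y_i, z_i$ remain tight, and $w \leq 1$ takes the role of $w \geq 0$; again the $3d + 1$ gradients are linearly independent.

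Finally, since $O \in S$ must be contained in $\mathrm{conv}(T)$, I can write $O = \sum_j \lambda_j t_j$ with $t_j \in T \subseteq P$, $\lambda_j \geq 0$ and $\sum_j \lambda_j = 1$; the extremality of $O$ in $P$ forces $t_j = O$ whenever $\lambda_j > 0$, so $O \in T$, and an identical argument gives $W \in T$. I do not anticipate any real obstacle beyond bookkeeping here; the only thing worth being careful about is that the extra $CE$ and $AB$ inequalities (and the Gadget inequalities) are all satisfied by $O$ and $W$ and merely cut additional facets through these corners rather than excluding them, so the Box Constraints alone suffice to certify vertex-hood.
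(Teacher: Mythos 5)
Your proof is correct and takes essentially the same approach as the paper, which simply asserts that $O$ and $W$ are vertices of $P$ and therefore cannot be written as convex combinations of other points of $P$. You add a useful verification step by exhibiting the $3d+1$ tight, linearly independent Box Constraints that certify vertex-hood, and by checking that the Gadget, $AB$, and $CE$ inequalities are satisfied (non-tight) at these corners.
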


\begin{proof}
The points $O$ and $W$ are vertices of the polyhedron $P$ and hence cannot be expressed as a convex combination of any other set of points in $P$. 
\end{proof}

Now we want to prove the rest of the $3d$ points in set $T$ is partitioned into $d$ triples, each triple belongs to one gadget. Set $T' = T - \{O\} - \{W\}$. 

\begin{definition}
For $i \in [d]$, let $$T'_i = \{Z \in T' | j \neq i \Rightarrow x_j(Z), y_j(Z), z_j(Z) = 0 \mbox{ and one of } x_i(Z), y_i(Z), z_i(Z) \neq 0\}$$ 
\end{definition}

\begin{claim}
The sets $T'_i$ partition $T'$ and each contain exactly $3$ nodes. 
\end{claim}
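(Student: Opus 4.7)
The plan is threefold: restrict which points of $T$ can participate in expressing each $I^i_k$, give a dimension-based lower bound $|T'_i| \ge 3$, and then count. First, I will argue that in any nonnegative combination $\sum_{P \in T} \alpha_P P = I^i_k$, every $P$ with a strictly positive coordinate in some gadget $j \ne i$ must satisfy $\alpha_P = 0$. Indeed, the Box Constraints guarantee $x_j, y_j, z_j \ge 0$ throughout the polyhedron $P$, while $I^i_k$ has all gadget-$j$ coordinates equal to $0$ for $j \ne i$; so any nonzero contribution in those coordinates cannot be cancelled. The surviving candidates are thus exactly $T'_i \cup \{O,W\}$, together with hypothetically any further point of $T$ lying on the $w$-axis. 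But $T$ is a $(3d+1)$-dimensional simplex (hence affinely independent), and $O,W$ are already two collinear points on this axis, so no third point of $T$ can live on the line through $O$ and $W$.

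Second, I will establish $|T'_i| \ge 3$ by a projection/dimension argument. Project every point onto the three gadget-$i$ coordinates $(x_i, y_i, z_i)$. The projections of $O$ and $W$ coincide at the origin, while the projections of the $I^i_k$'s lie in the affine plane $x_i = 1/4$ and, by Lemma~\ref{lem:gadgetpoint} applied to the $1/4$-scaled gadget, span a genuinely two-dimensional convex polygon there. If $|T'_i| \le 2$, then the convex hull of the projections of $T'_i \cup \{O,W\}$ sits inside a linear subspace of dimension at most $2$ passing through the origin; its intersection with the parallel affine plane $x_i = 1/4$ is at most one-dimensional, and therefore cannot contain the target two-dimensional polygon. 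Hence $|T'_i| \ge 3$.

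Finally, the sets $T'_i$ are pairwise disjoint by definition, since membership in $T'_i$ forces $(x_j, y_j, z_j) = 0$ for all $j \ne i$. Combined with $|T'| = 3d$ and $|T'_i| \ge 3$ for every $i \in [d]$, we obtain $3d \le \sum_{i=1}^d |T'_i| \le |T'| = 3d$, so equality holds throughout: every $|T'_i| = 3$ and the $T'_i$'s exhaust $T'$. In particular, no ``mixed'' point of $T'$ (one with nonzero coordinates in two or more gadgets) can occur, giving the claimed partition. I expect the main obstacle to be the dimension step: one has to check carefully that a convex hull of three points with one vertex at the origin can only meet the parallel plane $x_i = 1/4$ in a line segment (and symmetrically rule out the degenerate cases when projections collapse), after which the rest of the argument is essentially a counting bookkeeping driven by nonnegativity and affine independence.
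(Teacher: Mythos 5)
Your proof is correct and follows the same approach as the paper's: disjointness of the $T'_i$, a lower bound $|T'_i|\ge 3$ obtained by projecting onto the gadget-$i$ coordinates, and a counting argument to conclude. You merely spell out steps that the paper leaves implicit, namely that affine independence of $T$ rules out a third point on the $w$-axis, and that the convex hull of at most two projected points together with the origin can meet the plane $x_i=1/4$ only in a set of dimension at most one.
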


\begin{proof}
The sets $T'_i$ are disjoint, and additionally each set $T'_i$ must contain at least $3$ nodes (otherwise the convex hull of $T'_i$ even restricted to $x_i, y_i, z_i$ cannot contain the points $I^i_k$). This implies the Claim. 
\end{proof}

Recall the gadget in Section~\ref{sec:gad} is a two dimensional object, but it is represented as a three dimensional cone in our construction. We would like to apply Lemma~\ref{lem:gadget} to points on the plane $x_i = 1$ (in this plane the coordinates $y_i$,$z_i$ act the same as $y$, $z$ in the gadget). 

\begin{definition}
For each point $Z \in T'_i$, let $ext(Z) \in \Re^3$ be the intersection of the line connecting the origin and $(x_i(Z), y_i(Z), z_i(Z))$ with the $x_i = 1$ base of the set $\{ (x_i, y_i, z_i) | R (x_i, y_i, z_i) \geq b\}$. Let $ext(T'_i)$ be the point-wise $ext$ operation applied to each point in $T'_i$. 
\end{definition}

Since the points $I^i_k$ are in the affine hull of $T'_i$ when restricted to $x_i,y_i,z_i$ , we know $ext(I^i_k)$ must be in the convex hull of $ext(T'_i)$. Using Lemma~\ref{lem:gadget} in Section~\ref{sec:gad}, we get:

\begin{corollary}
$ext(T'_i)$ must correspond to some triangle $A_{k_i}C_{k_i}E_{k_i}$ for some value $s_{k_i}$. 
\end{corollary}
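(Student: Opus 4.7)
The plan is to reduce the statement directly to Lemma~\ref{lem:gadget} by showing that the $ext$ projection exactly realises, on the base hexagon $G$, the two-dimensional Intermediate Simplex configuration to which Lemma~\ref{lem:gadget} applies.

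First I would unpack the geometry of $T'_i$. By the Gadget constraint, every point $P \in T'_i$ satisfies $(x_i(P), y_i(P), z_i(P)) \in H$; since $P$ has some nonzero coordinate among $x_i, y_i, z_i$ and $H$ is a cone with apex at the origin over the base hexagon $G$ in the plane $x_i = 1$, this forces $x_i(P) > 0$ and $P = \lambda_P \cdot (1, y_P, z_P)$ for some $\lambda_P \in (0,1]$ and $(y_P, z_P) \in G$. By the definition of $ext$, $ext(P) = (1, y_P, z_P)$ is exactly the point of $G$ lying on the ray from the origin through $P$.

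Next I would translate the containment $I^i_k \in \operatorname{conv}(T)$ into a 2D containment in $G$. All points of $T$ outside $T'_i$ have zero in the coordinates $(x_i, y_i, z_i)$: the points $O$ and $W$ only live in the $w$-axis, and points of $T'_j$ for $j \neq i$ are zero in those three coordinates by definition. Restricting the convex combination to the $(x_i, y_i, z_i)$ coordinates therefore gives
\[
\bigl(\tfrac{1}{4},\,\tfrac{1}{4}y(I_k),\,\tfrac{1}{4}z(I_k)\bigr) \;=\; \sum_{P \in T'_i} \alpha_P \bigl(\lambda_P,\, \lambda_P y_P,\, \lambda_P z_P\bigr)
\]
for some $\alpha_P \geq 0$. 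Matching the $x_i$ coordinate yields $\sum_P \alpha_P \lambda_P = 1/4$, and dividing the $(y_i, z_i)$ part by this common factor expresses $(y(I_k), z(I_k))$ as a convex combination $\sum_P \beta_P (y_P, z_P)$ with $\beta_P = 4\alpha_P \lambda_P$. Hence the original 2D intersection point $I_k$ lies in the convex hull of $ext(T'_i)$ viewed as three points of $G$.

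Then I would argue that $ext(T'_i)$ is a genuine (non-degenerate) triangle in $G$: if its three points were collinear, their convex hull would be a line segment and could not contain all $3N$ points $I_k$, which by Lemma~\ref{lem:gadgetpoint} are the vertices of a polygon with $3N$ sides and are therefore not collinear. Thus $ext(T'_i)$ is a triangle inside the hexagon $G$ containing every $I_k$, i.e.\ it is a feasible solution to the two-dimensional Intermediate Simplex instance defined by the gadget of Section~\ref{sec:gad}. Applying Lemma~\ref{lem:gadget} (with $\epsilon$ chosen below the threshold required there, as already stipulated by the reduction) forces $ext(T'_i) = A_{k_i} C_{k_i} E_{k_i}$ for some index $k_i$, giving the claimed value $s_{k_i} \in U$.

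The only delicate step is the bookkeeping in passing from a convex combination in $\mathbb{R}^{3d+1}$ to a convex combination in $G$: one has to verify that the $O$ and $W$ points, as well as the points of the other $T'_j$'s, contribute nothing to the restricted coordinates, so that renormalising by the common $x_i$-factor $1/4$ turns the weights into a bona fide convex combination of the base projections. Once this is checked, the result is essentially a direct lift of the two-dimensional uniqueness Lemma~\ref{lem:gadget}.
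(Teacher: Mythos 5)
Your proof is correct and takes the same route as the paper: the paper condenses everything into the single observation that the restricted convex combination pushes forward through $ext$ to a convex combination showing $I_k \in \operatorname{conv}(ext(T'_i))$, and then invokes Lemma~\ref{lem:gadget}. You supply the bookkeeping (rescaling by the common $x_i$-factor, verifying that $O$, $W$, and the other $T'_j$'s contribute nothing, and ruling out a degenerate triangle) that the paper leaves implicit.
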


Now we know how to decode the solution $T$ and get the numbers $s_{k_i}$. We will abuse notation and call the 3 points in $T'_i$ $A_{k_i}$, $C_{k_i}$, $E_{k_i}$ (they were used to denote the corresponding points in the 2-d gadget in Section~\ref{sec:gad}).We still want to make sure the $w$ coordinate correctly ``computes'' the sum of these numbers. As a first step we want to show that the $x_i$ of all points in $T'_i$ must be 1 (we need this because the Constraints AB and CE are only strict when $x_i = 1$).

\begin{lemma}\label{lemma:setx}
For each point $Z \in T'_i$, $x_i(Z) = 1$
\end{lemma}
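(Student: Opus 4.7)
The plan is to read off the conclusion from any convex combination expressing the single ``center'' point $Q$ of $S$; the other points in $S$ will not be needed. Write
\[
Q \;=\; \lambda_O\, O + \lambda_W\, W + \sum_{Z\in T'}\lambda_Z\, Z
\]
with nonnegative weights summing to $1$. Since both $O$ and $W$ vanish in every gadget coordinate, and since every $Y\in T'_k$ with $k\ne j$ satisfies $x_j(Y)=y_j(Y)=z_j(Y)=0$ by the definition of $T'_k$, only points of $T'_j$ contribute to the $j$-th gadget's coordinates. Reading off the $x_j$-coordinate of $Q$ gives $\sum_{Z\in T'_j}\lambda_Z\,x_j(Z) = 1/d$, and the Box constraint $x_j(Z)\le 1$ then yields $\sum_{Z\in T'_j}\lambda_Z \ge 1/d$. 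Summing over $j$ and using $\lambda_O+\lambda_W+\sum_{Z\in T'}\lambda_Z=1$ forces equality throughout, so $\lambda_O=\lambda_W=0$ and $\sum_{Z\in T'_j}\lambda_Z = 1/d$ for every $j$. Subtracting the two identities for $T'_i$ gives $\sum_{Z\in T'_i}\lambda_Z(1-x_i(Z))=0$, and since every summand is nonnegative we obtain $\lambda_Z(1-x_i(Z))=0$ for every $Z\in T'_i$.

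It now suffices to show that $\lambda_Z>0$ for every $Z\in T'_i$, for then $x_i(Z)=1$ follows immediately. Here I will use the $(y_i,z_i)$-coordinates of $Q$ together with the geometry of the gadget. Because $H$ is a cone with apex at the origin and hexagonal base at $x_i=1$, every $Z\in T'_i$ satisfies $x_i(Z)>0$ (otherwise $y_i(Z)=z_i(Z)=0$ as well, contradicting the definition of $T'_i$) and
\[
(x_i(Z),y_i(Z),z_i(Z)) \;=\; x_i(Z)\cdot\bigl(1,\,y(ext(Z)),\,z(ext(Z))\bigr).
\]
Setting $\mu_Z = d\,\lambda_Z\,x_i(Z)$ and reading off the $x_i,y_i,z_i$ coordinates of $Q=(1/d)(1,y(M),z(M))$ gives
\[
\sum_{Z\in T'_i}\mu_Z \;=\; 1,\qquad \sum_{Z\in T'_i}\mu_Z\cdot ext(Z) \;=\; \bigl(1,\,y(M),\,z(M)\bigr).
\]
By the corollary just above (based on Lemma~\ref{lem:gadget}), the set $\{ext(Z):Z\in T'_i\}$ is exactly the vertex set $\{A_{k_i},C_{k_i},E_{k_i}\}$ of a nondegenerate triangle. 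By the $2\pi/3$-rotational symmetry of the gadget about $M$, the triangle $A_{k_i}C_{k_i}E_{k_i}$ has $M$ as its centroid, so $M$ lies in its strict interior. The displayed equations are therefore the unique barycentric representation of an interior point, so every $\mu_Z$ is strictly positive; combined with $x_i(Z)>0$ this forces $\lambda_Z>0$, and the first step then delivers $x_i(Z)=1$.

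The main obstacle is recognizing that $Q$, rather than any of the $I^i_k$'s, is the right test vector: $Q$ forces positive mass in every one of the $d$ gadgets simultaneously, so summing the lower bound $\sum_{Z\in T'_j}\lambda_Z\ge 1/d$ across $j$ exactly saturates the total weight $1$ and makes every inequality tight. After that, the argument reduces to the uniqueness of barycentric coordinates of an interior point, which follows directly from the rotational symmetry built into the gadget construction; the $w$-coordinate constraints are not needed for this lemma.
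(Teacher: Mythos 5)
Your proof is correct and rests on the same two ingredients as the paper's: tight weight accounting in the $x_j$-coordinates of $Q$ (since $x_j\le 1$ for all points of $T$ and $\sum_j x_j(Q)=1$), and the fact that $M$ is strictly interior to the triangle $A_{k_i}C_{k_i}E_{k_i}$. The paper organizes this as a contradiction — if $x_i(Z)<1$ then $Z$ must receive zero weight in any representation of $Q$, so the two remaining points of $T'_i$ would have to contain $M$ in their convex hull, which fails — whereas you argue directly by normalizing to $\mu_Z=d\lambda_Z x_i(Z)$ and invoking uniqueness/positivity of barycentric coordinates of the centroid $M$; these are the same geometric fact used from two ends. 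Your version is a bit more explicit about why the radial projection $ext$ is the right object to feed into Lemma~\ref{lem:gadget}, which the paper's ``projection to the $y_i,z_i$ plane'' glosses over, but there is no substantive difference in the argument.
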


\begin{proof}
Suppose, for the sake of contradiction, that $x_i(Z) < 1$ (for $Z \in T'_i$). Then consider the point $Q$. Since $\sum_{i \in [d]} x_i(Q) = 1$, and for any point in $T$ $\sum_{i \in [d]} x_i \le 1$, there is no convex combination of points in $T$ that places non-zero weight on $Z$ and equals $Q$. 

Let $T''_i$ be $T'_i\backslash\{Z\}$, we observe that the points in $T''_i$ are the only points in $T$ that have any contribution to $(x_i,y_i,z_i)$ when we want to represent $Q$ (using a convex combination). For now we restrict our attention to these three dimensions.
 When trying to represent $Q$ we must have $1/d$ weight in the set $T''_i$ (because of the contribution in $x_i$ coordinate). The $y_i$, $z_i$ coordinates of $Q$ are $y(M)/d$, $z(M)/d$ respectively. This means if we take projection to $y_i,z_i$ plane $M$ must be in the convex hull of $T''_i$. However that is impossible because no two points in $A_kC_kE_k$  contain $M$ in their convex hull. This contradiction implies the Lemma. 
\end{proof}

\begin{lemma}
\label{lem:equalweight}
Any convex combination of points in $T$ that equals the point $Q$ must place equal weight on all points in $T'$. 
\end{lemma}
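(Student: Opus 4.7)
The plan is to combine the constraint from the $x_i$ coordinates (which pins down the total weight on each $T'_i$) with the constraint from the $(y_i, z_i)$ coordinates (which pins down how that weight must be split inside $T'_i$).

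First, I would fix $i \in [d]$ and look at the $x_i$ coordinate of $Q$. Note that $x_i(O) = x_i(W) = 0$, and for every $Z \in T'_j$ with $j \neq i$ we also have $x_i(Z) = 0$ by the definition of $T'_j$. By the preceding lemma, every point $Z \in T'_i$ has $x_i(Z) = 1$. Since $x_i(Q) = 1/d$, any convex combination of points in $T$ yielding $Q$ must place total weight exactly $1/d$ on the set $T'_i$.

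Next, I would look at the $(y_i, z_i)$ coordinates. All points of $T \setminus T'_i$ are zero in these two coordinates, and $Q$ has $(y_i, z_i) = (y(M)/d, z(M)/d)$. Dividing by the total weight $1/d$ on $T'_i$, the (now normalized) convex combination of the three points in $T'_i$, projected onto the $(y_i, z_i)$-plane, must equal $M$. By the previous corollary, after projection these three points are exactly $A_{k_i}, C_{k_i}, E_{k_i}$, and they are affinely independent (they are the vertices of a non-degenerate triangle in $G$), so the barycentric coordinates of $M$ with respect to this triangle are uniquely determined.

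The key observation — and the only non-routine step — is that $M$ is the centroid of $A_{k_i}C_{k_i}E_{k_i}$, so these unique barycentric coordinates are $(1/3, 1/3, 1/3)$. This follows from the $2\pi/3$-rotational symmetry of the hexagon around $M$: the rotation cyclically permutes $A, C, E$ (and hence $A_{k_i}, C_{k_i}, E_{k_i}$, since the offsets $\epsilon s_{k_i}$ along $AB$, $CD$, $EF$ are equal), while fixing $M$. Therefore the weight on each of $A_{k_i}, C_{k_i}, E_{k_i}$ must be $(1/d) \cdot (1/3) = 1/(3d)$.

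Finally, I would conclude by ranging over all $i \in [d]$: every one of the $3d$ points of $T'$ receives weight $1/(3d)$, so all points in $T'$ carry equal weight, as claimed. The potential pitfall to guard against is that one of the three points in $T'_i$ might, a priori, lie on the apex of the cone (i.e.\ be the origin), which would make the projection ill-defined; but Lemma~\ref{lemma:setx} rules this out since $x_i(Z) = 1$ for all $Z \in T'_i$, so the $ext$ operation is the identity on the $(y_i, z_i)$ coordinates and the above argument goes through verbatim.
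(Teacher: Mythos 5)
Your proof is correct and follows essentially the same route as the paper's (the paper's own argument is a two-sentence sketch asserting the $1/3,1/3,1/3$ combination without elaboration). The one thing you add — the rotational-symmetry argument showing that $M$ is the centroid of $A_{k_i}C_{k_i}E_{k_i}$, and the observation that Lemma~\ref{lemma:setx} justifies projecting by $ext$ — is exactly the detail the paper leaves implicit, so this is a faithful and slightly more careful version of the same proof.
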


\begin{proof}
Using Lemma~\ref{lemma:setx}, we conclude that the total weight on points in $T'_i$ is exactly $1/d$, and there is a unique convex combination of the points $T'_i$ (restricted to $y_i, z_i$) that recover the point $M$ which is the $1/3, 1/3, 1/3$ combination. This implies the Lemma. 
\end{proof}

Now we are ready to compute the $w$ value of the point $Q$ and show the sum of $s_{k_i}$ is indeed $d/2$.

\begin{lemma}[Soundness]
When $\epsilon < N^{-Cd}$ for some large enough constant $C$, if there is a solution to the Intermediate Simplex instance, then there is a choice of $d$ values that sum up to exactly $d/2$. 
\end{lemma}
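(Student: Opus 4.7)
The plan is to read off the sum $\sum_i s_{k_i}$ from the $w$-coordinate of $Q$, using the rigidity of Constraints AB and CE once the solution is pinned to the plane $x_i=1$. First I would invoke Lemma~\ref{lem:equalweight}: any convex combination of $T$ equal to $Q$ assigns weight exactly $1/(3d)$ to each of the $3d$ points of $T'=\bigcup_i T'_i$, and since these weights already sum to $1$, weight $0$ to $O$ and $W$. Matching the $w$-coordinate on both sides,
\[
\frac{1}{6} \;=\; w(Q) \;=\; \frac{1}{3d}\sum_{Z \in T'} w(Z),
\qquad\text{so}\qquad \sum_{Z \in T'} w(Z) \;=\; \frac{d}{2}.
\]

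Next I would identify the three points of each $T'_i$ with vertices of a gadget triangle. By the Corollary following Lemma~\ref{lemma:geom}, $ext(T'_i)$ equals a triangle $A_{k_i}C_{k_i}E_{k_i}$ for some $s_{k_i}$, and Lemma~\ref{lemma:setx} forces $x_i=1$ on all of $T'_i$; hence each point of $T'_i$ coincides with its extension and literally sits at the gadget coordinates of $A_{k_i}$, $C_{k_i}$, or $E_{k_i}$. For the point on $AB$ we have $y_i=0$ and $z_i=z(A)+\eps s_{k_i}$, so the tolerance $\tfrac{10}{\eps}y_i+(1-x_i)$ in Constraint AB collapses to $0$, forcing $w=(z_i-z(A))/\eps=s_{k_i}$ exactly. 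For the points on $CD$ and $EF$, the hexagon geometry of Section~\ref{sec:gad} gives $y_i\ge 1-O(\eps)$, so Constraint CE (with $1-x_i=0$) yields $w\le O(\eps)$, while the Box Constraint gives $w\ge 0$.

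Summing these per-gadget estimates and comparing with the total $d/2$,
\[
\sum_{i=1}^d s_{k_i} \;=\; \frac{d}{2} \;-\; \sum_{i=1}^d\bigl(w(C_{k_i})+w(E_{k_i})\bigr) \;\in\; \bigl[\tfrac{d}{2}-O(d\eps),\;\tfrac{d}{2}\bigr].
\]
Since the Pătraşcu--Williams instance uses $O(d\log N)$-bit rational $s_j$'s, any $d$-term sum of them is a multiple of a grid of spacing at least $N^{-c'd}$ for some fixed $c'$. Choosing $C$ large enough that $\eps<N^{-Cd}$ makes the slack $O(d\eps)$ strictly smaller than one grid unit, forcing $\sum_i s_{k_i}=d/2$ on the nose; the indices $k_1,\ldots,k_d$ then constitute the $d$-SUM solution.

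The main obstacle I anticipate is the middle step: verifying quantitatively that every $y_i$-coordinate realized on $CD$ or $EF$ really lies within $O(\eps)$ of $1$ in the chosen placement of the hexagon (so that CE genuinely pins $w$ near $0$), and that the tolerance in Constraint AB truly vanishes at the $AB$ point rather than merely being small (so that $w$ is an \emph{exact} readout of $s_{k_i}$). Once those constants are settled, the remainder is three lines of arithmetic plus the choice of $C$.
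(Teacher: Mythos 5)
Your proof follows the paper's argument essentially verbatim: invoke Lemma~\ref{lem:equalweight} to pin the representation of $Q$ to equal $1/(3d)$ weights on $T'$ (and zero on $O,W$), read off $w(A_{k_i})=s_{k_i}$ from Constraint AB and $w(C_{k_i}),w(E_{k_i})=O(\eps)$ from Constraint CE once Lemma~\ref{lemma:setx} forces $x_i=1$, then close with the bit-complexity argument. You fill in a bit more detail than the paper on why the AB tolerance vanishes exactly and why $y_i$ is near $1$ on $CD/EF$, but the route is the same.
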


\begin{proof}
As we showed in previous Lemmas, the solution to the Intermediate Simplex problem must contain $O$, $W$, and for each gadget $i$ the solution has 3 points $T'_i$ that correspond to one of the solutions of the gadget. Suppose for gadget $i$ the triangle we choose is $A_{k_i}C_{k_i}E_{k_i}$. By Constraints $AB$ we know $w(A_{k_i}) = s_{k_i}$, by Constraints $CE$ we know $w(C_{k_i}) \le \epsilon$ and $w(E_{k_i}) \le \epsilon$.

By Lemma~\ref{lem:equalweight} there is only one way to represent $Q$, and $w(Q) = \frac{1}{3d} \sum_{i=1}^d[ w(A_{k_i})+w(C_{k_i})+w(E_{k_i})] = 1/6$. 

\begin{equation}
\sum_{i=1}^d s_{k_i} = \sum_{i=1}^d w(A_{k_i}) = \frac{d}{2} - \sum_{i=1}^d [w(C_{k_i}) + w(E_{k_i})].
\end{equation}

Since $w(C_{k_i})$ and $w(E_{k_i})$'s are small, we have $\sum_{i=1}^d s_{k_i} \in [d/2 - 2d\epsilon, d/2]$. However the numbers only have $O(d\log N)$ bits and $\epsilon$ is so small, the only valid value in the range is $d/2$. Hence the sum $\sum_{i=1}^d s_{k_i}$ must be equal to $d/2$. 
\end{proof}



\section{Fully-Efficient Factorization under Separability}\label{sec:sep}

Earlier, we gave algorithms for NMF, and presented evidence that no $(nm)^{o(r)}$ time algorithm exists for determining if a matrix $M$ has nonnegative rank at most $r$. Here we consider conditions on the input that allow the factorization to be found in time polynomial in $n$, $m$ and $r$. (In Section~\ref{subsec:separablenoise}, we give a noise-tolerant version of this algorithm). To the best of our knowledge this is the first example of an algorithm (that runs in time poly$(n,m,r)$) and provably works under a non-trivial condition on the input. Donoho and Stodden~\cite{DS} in a widely-cited paper identified sufficient conditions for the factorization to be unique (motivated by applications of NMF to a database of images) but gave no algorithm for this task.
We give an algorithm that runs in time poly$(n,m,r)$ and assumes only one of their conditions is met (separability). We note that this separability condition is quite natural in its own right, since it is usually satisfied \cite{Blei} by model parameters fitted to various generative models (e.g. LDA \cite{LDA} in information retrieval).



\begin{definition}[Separability]
A nonnegative factorization $M = AW$ is called {\em separable} if for each $i$  there is some row $ f(i)$ of $A$ that has a single nonzero entry and this entry is in the $i^{th}$ column. 
\end{definition}

Let us understand this condition at an intuitive level in  context of clustering documents by topic, which was discussed in the introduction. Recall that there a column of $M$ corresponds to a document. Each column of $A$  represents a topic and its entries specify the probability that a  word occurs in that  topic.  
 The NMF thus ``explains'' the $i^{th}$ document as 
$A W_i$ where the column vector $W_i$ has (nonnegative) coordinates summing to one---in other words,  $W_i$ represents a convex combination of topics. In practice, the total number of words $n$ may number in the thousands or tens of thousands, and the number of topics in the dozens. Thus it is not unusual to find factorizations in which each topic is flagged by a word that appears only in that topic and not in the other topics~\cite{Blei}. The separability condition asserts that this happens for every topic\footnote{More realistically, the word may appear in other topics only with negligible property instead of zero probability. This is allowed in our noise-tolerant algorithm later.}.

For simplicity we assume without loss of generality that the rows of $M$ are normalized to have unit $\ell_1$-norm. After normalizing $M$, we can still normalize $W$ (while preserving the factorization) by re-writing the factorization as $M = AW = (AD) (D^{-1} W)$ for some $r \times r$ nonnegative matrix $D$. By setting $D_{i,i} = \norm{W^i}_1$ the rows of $D^{-1} W$ will all have $l_1$ norm 1. When rows of $M$ and $W$ are all normalized the rows of $A$ must also have unit $\ell_1$-norm because 

$$1 = \norm{M^i}_1 = \norm{\sum_{j=1}^r A_{i,j} W^j}_1 = \sum_{j=1}^r A_{i,j} \norm{W^j}_1 = \sum_{j=1}^r A_{i,j}.$$

The third equality uses the nonnegativity of $W$. Notice that after this normalization, if a row of $A$ has a unique nonzero entry (the rows in Separability), that particular entry must be one.

We also assume $W$ is a simplicial matrix defined as below.


\begin{definition}[simplicial matrix]
A nonnegative matrix $W$ is {\em simplicial} if no row in $W$ can be represented in the convex hull of the remaining rows in $W$. 
\end{definition}
The next lemma shows that without loss of generality we may assume $W$ is simplicial.

\begin{lemma}\label{lemma:alt}
If a nonnegative matrix $M$ has a separable factorization $AW$ of inner-dimension at most $r$ then there is one in which $W$ is simplicial. 
\end{lemma}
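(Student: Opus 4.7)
The plan is to show that whenever $W$ is \emph{not} simplicial, we can strictly reduce the inner dimension while preserving both nonnegativity and separability; iterating this at most $r$ times produces a separable factorization with simplicial $W$.

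First, I would normalize so that rows of $M$, $A$, and $W$ all have unit $\ell_1$-norm (this is already justified in the preceding paragraph, and by the remark following the normalization discussion each ``anchor'' entry in $A$ equals $1$). Suppose $W$ is not simplicial, so some row $W^j$ lies in the convex hull of the remaining rows: there exist $c_k \ge 0$ with $\sum_{k \neq j} c_k = 1$ such that
\begin{equation*}
W^j \;=\; \sum_{k \neq j} c_k \, W^k.
\end{equation*}
(Taking $\ell_1$-norms of any nonnegative representation automatically forces $\sum c_k = 1$ given the unit-norm rows, so ``conic'' and ``convex'' combinations coincide here.)

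Next, define a new factorization of inner dimension $r-1$ by deleting column $j$ from $A$ and row $j$ from $W$, after redistributing $A_j$: set
\begin{equation*}
A'_i \;=\; A_i + c_i \, A_j \quad (i \neq j), \qquad W' \;=\; \text{the matrix whose rows are } W^i, \ i \neq j.
\end{equation*}
A direct calculation gives $A'W' = \sum_{i\neq j} (A_i + c_i A_j) W^i = \sum_{i\neq j} A_i W^i + A_j \sum_{i\neq j} c_i W^i = \sum_{i\neq j} A_i W^i + A_j W^j = AW = M$, and $A'$ is nonnegative as a nonnegative combination of nonnegative columns.

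The key step—and the only place where separability could plausibly be lost—is verifying that $A'$ is still separable. For each $i \neq j$, let $f(i)$ be the row of $A$ that witnesses separability for column $i$, so $A_{f(i),i}=1$ and $A_{f(i),k}=0$ for all $k \neq i$; in particular $A_{f(i),j}=0$. Then for every $k \neq j$,
\begin{equation*}
A'_{f(i),k} \;=\; A_{f(i),k} + c_k \, A_{f(i),j} \;=\; A_{f(i),k},
\end{equation*}
which is $1$ when $k=i$ and $0$ otherwise. So row $f(i)$ of $A'$ still has a unique nonzero entry (equal to $1$) in its $i$-th column, confirming separability of $A'$. Finally, iterate: if $W'$ is still not simplicial, repeat the reduction. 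Each step strictly decreases the inner dimension, so the process terminates after at most $r$ steps with a separable factorization whose $W$-matrix is simplicial. The main conceptual point (and the only thing that could go wrong) is that the anchor rows of $A$ are preserved by the update $A_i \mapsto A_i + c_i A_j$, which is immediate because anchor rows of $A$ have a $0$ in column $j$.
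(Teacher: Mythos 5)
Your proof is correct and takes essentially the same approach as the paper: delete row $j$ of $W$ and column $j$ of $A$ after redistributing $A_j$ according to the convex coefficients $c_k$, then observe that anchor rows $f(i)$ with $i \neq j$ are untouched because they have a zero in column $j$. Your column-wise update $A'_i = A_i + c_i A_j$ is just a transposed restatement of the paper's row-wise update (add $A^{j'}_j \vec{u}$ to row $A^{j'}$), and the iteration and termination argument matches as well.
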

\begin{proof}
Suppose $W$ is not simplicial, and let the $j^{th}$ row $W^j$  be in the convex hull of the remaining rows. Then we can represent $W^j = \vec{u}^T W$ where $\vec{u}$ is a nonnegative vector with $|\vec{u}|_1 = 1$ and the $j^{th}$ coordinate is 0.

Now modify $A$ as follows. For each row $A^{j'}$ in $A$ that has a non-zero $j^{th}$ coordinate, we zero out the $j^{th}$ coordinate and add $A^{j'}_j \vec{u}$ to the row $A^{j'}$. At the end the matrix  is still nonnegative but whose $j^{th}$ column is all zeros. So delete the $j^{th}$ column and let the resulting $n \times (r-1)$ matrix be $A'$. Let $W'$ be the matrix obtained by deleting the $j^{th}$ row of $W$. Then by construction we have $M = A' W'$.  Now we claim $A'$ is separable.

Since $A$ was originally separable, for each column index $i$  there is some row, say the $f(i)^{th}$ row, that  has a non-zero entry in the $i^{th}$ column and zeros everywhere else. 
If $i\neq j$ then by definition the above operation does not change  the $f(i)^{th}$ row of $A$. If $i=j$  the $j^{th}$ index is deleted at the end. In either case the final matrix $A'$ satisfies the separability condition.

Repeating the above operation for all violations of the simplicial condition we end with a separable factorization of $M$ (again with inner-dimension at most $r$) where $W$ is simplicial. 
\end{proof}

\begin{theorem}
There is an algorithm that  runs in time polynomial in $n$, $m$ and $r$   and given a  matrix $M$ outputs a separable factorization with inner-dimension at most $r$
(if  one exists).
\end{theorem}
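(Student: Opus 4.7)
The plan is to exploit a structural consequence of separability combined with the simplicial assumption on $W$: after $\ell_1$-normalization of the rows of $M$, the rows of $W$ are literally a subset of the rows of $M$. Indeed, the separability condition states that for each $i \in [r]$ there is a row index $f(i)$ with $A^{f(i)} = e_i^T$ (after normalization, the unique nonzero entry is $1$), so $M^{f(i)} = e_i^T W = W^i$. Every other row of $M$ is a convex combination of the $W^i$'s, since after normalization $A^j \geq 0$ and $\sum_{i} A_{j,i} = 1$. So the task reduces to the combinatorial one of identifying the $r$ ``vertex rows'' of $M$, after which $A$ can be recovered row-by-row by linear programming.

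First I would normalize $M$ to have unit-$\ell_1$ rows and invoke Lemma~\ref{lemma:alt} so that the target factorization has $W$ simplicial and unit-$\ell_1$ rows. I would then remove duplicate rows of $M$, leaving a set $\mathcal{M}$; each $W^i$ still appears (as $M^{f(i)}$), but each distinct row value appears only once. The key claim is: a row $x \in \mathcal{M}$ equals some $W^i$ if and only if $x$ does \emph{not} lie in the convex hull of $\mathcal{M} \setminus \{x\}$. The ``if'' direction is immediate, since any row $x \neq W^i$ is a convex combination $x = \sum_i A^x_i W^i = \sum_i A^x_i M^{f(i)}$ of other rows of $\mathcal{M}$. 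For the ``only if'' direction, suppose $W^i = \sum_{y \in \mathcal{M} \setminus \{x\}} c_y\, y$ with $c_y \geq 0$, $\sum_y c_y = 1$; writing each $y$ as $\sum_{i'} A^y_{i'} W^{i'}$ and collecting gives $W^i = \sum_{i'} \alpha_{i'} W^{i'}$ with $\alpha_{i'} \geq 0$ and $\sum_{i'} \alpha_{i'} = 1$. The coefficient $\alpha_i = \sum_y c_y A^y_i$ must be strictly less than $1$ (else some $y$ with $c_y > 0$ has $A^y_i = 1$, i.e.\ $y = W^i = x$, excluded by assumption), and then dividing through by $1 - \alpha_i$ expresses $W^i$ as a convex combination of the other $W^{i'}$'s, contradicting simplicial-ness.

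The algorithm is now immediate. For each $x \in \mathcal{M}$, solve the linear program that tests membership of $x$ in $\mathrm{conv}(\mathcal{M} \setminus \{x\})$; keep exactly the rows that fail the test. By the claim above, this produces precisely the $r$ rows of $W$ (assembled into a matrix $W$ in any order). Then for each $j \in [n]$ solve the linear program: find $A^j \geq 0$ with $A^j W = M^j$. Feasibility is guaranteed whenever a separable factorization exists, and infeasibility anywhere (or recovering fewer than $r$, or more than $r$, vertex rows) lets the algorithm correctly reject. Each step is a linear program of size polynomial in $n$, $m$, and $r$, so the total running time is $\mathrm{poly}(n, m, r)$.

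The main obstacle is the equivalence between ``vertex row'' and ``row of $W$,'' and that is where the two hypotheses—separability (guaranteeing each $W^i$ appears verbatim in $M$) and simplicial-ness (guaranteeing none of the $W^i$'s sneaks into the convex hull of the rest)—pull their weight. Handling duplicate rows is a small but essential bookkeeping point: without collapsing duplicates, a $W^i$ that happens to coincide with another row of $M$ would trivially lie in the convex hull of ``the other rows'' (by putting weight $1$ on its duplicate), and the vertex test would misclassify it. Once the $W$ matrix is in hand, recovering $A$ is straightforward LP feasibility, and correctness of the overall algorithm follows.
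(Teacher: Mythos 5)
Your proposal is correct and takes essentially the same route as the paper: normalize rows, reduce via Lemma~\ref{lemma:alt} to $W$ simplicial, characterize the rows of $W$ as exactly the ``vertex'' rows of $M$ (what the paper calls loners), detect them by LP membership tests, and recover $A$ by LP. The only cosmetic difference is that you collapse duplicate rows up front while the paper builds the duplicate-handling into its definition of a loner; these are equivalent.
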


\begin{proof}
We can apply Lemma~\ref{lemma:alt} and assume without loss of generality that there is a factorization $M = AW$ where $A$ is separable and $W$ is simplicial. The separability condition implies that every row of $W$ appears among the rows of $M$. Thus $W$ is hiding in plain sight in $M$; we now show how to find it. 

Say a row $M^j$ is a {\em loner} if (ignoring other rows that are copies of $M^j$) it is not in the convex hull of the remaining rows.
The simplicial condition implies that the rows of $M$ that correspond to rows of $W$ are loners.

\begin{claim}
A row $M^j$ is a loner iff $M^j$ is equal to some row $W^{i}$
\end{claim}

\begin{proof}
Suppose (for contradiction) that a row in $M^j$ is not a loner and but it is equal to some row $W^i$. Then there is a set  $S$ of rows of $M$ so that $M^j$ is in their convex hull and furthermore for all $j' \in S$, $M^{j'}$ is not equal to $M^j$. Thus there is a nonnegative vector $u \in \Re^n$ that is 0 at the $j^{th}$ coordinate and positive on indices in $S$
such that $u^T M =M^j$. 

Hence $u^T A W = M^j = W^i$, but $u^T A$ must have unit $\ell_1$-norm (because $\norm{u}_1 = 1$, all rows of $A$ have unit $\ell_1$-norm and are all nonnegative), also $u^T A$ is non-zero at position $j'$. Consequently $W^i$ is in the convex hull of the other rows of $W$, which yields a contradiction.

Conversely if a row $M^j$ is not equal to any row in $W$, we conclude that $M^j$ is in the convex hull of the rows of $W$. Each row of $W$ appears as a row of $A$ (due to the separability condition). Hence $M^j$ is not a loner  because $M^j$ is in the convex hull of rows of $M$ that are equivalent to $M^j$ itself. 
\end{proof}

Using linear programming, we can determine which rows $M^j$ are loners. Due to separability there will be exactly $r$ different loner rows, each corresponds to one of the $W^i$. Thus we are able to recover $W'$ that is equal to $W$ after permutation over rows.
We can compute a nonnegative $A'$ such that $A' W' = M$, and such solution $A'$ is necessarily separable (since it is just  equal to $A$ after permutation over columns). 
\end{proof}





\subsection{Adding Noise}
\label{subsec:separablenoise}
In any practical setting the data matrix $M$ will not have an exact NMF of low inner dimension since its entries are invariably subject to noise. Here we consider how to extend our separability-based algorithm to work in presence of noise.
We assume that the input matrix  $M'$ is obtained by perturbing each row of $M$ by adding a vector of $\ell_1$-norm at most $\epsilon$, where 
$M$ has a separable factorization of inner-dimension $r$. Alternatively, $\norm{M'^i - M^i}_1 \le \epsilon$ for all $i$. Notice that the case in which the separability condition is only approximately satisfied is a subcase of this: If for each column there is some row in which that column's entry  is at least $1-\epsilon$ and the sum of the other row entries is less than $\epsilon$ then the matrix $M'$ will satisfy the condition stated above. 
(Note that $M, A, W$ have been scaled as discussed above.)

Our algorithm will require one more condition -- namely, we require the
unknown matrix $W$ to be ``robustly'' simplicial instead of just simplicial. 



\begin{definition}[$\alpha$-robust simplicial] We call $W$ {\em $\alpha$-robust simplicial} if no row in $W$ has $\ell_1$ distance smaller than $\alpha$ to the convex hull of the remaining rows in $W$. (Here all rows have unit $\ell_1$-norm.)
\end{definition}

Recall from Lemma~\ref{lemma:alt} that the simplicial condition can be assumed without loss of generality under separability. In general $\alpha$-robust simplicial condition does not follow from separability. However, any reasonable generative model would surely posit that the matrix $W$ ---whose columns after all represents distributions---satisfies the condition above. For instance, if columns of $W$ are picked randomly from the unit $\ell_1$ ball then after normalization $\alpha$ is more than $1/10$. Regardless of whether or not one self-identifies as a bayesian, it seems reasonable that any suitably generic way of picking column vectors would tend to satisfy the $\alpha$-robust-simplicial property.

\begin{theorem}
\label{thm:separablenoise}
Suppose $M=AW$ where 
$A$ is separable and $W$ is $\alpha$-robust simplicial. Let $\epsilon$ satisfy 
$20\epsilon/\alpha+13\epsilon < \alpha$. Then  there is a polynomial time algorithm that given $M'$ such that for all rows $\norm{M'^i - M^i}_1<\epsilon$, finds a nonnegative matrix factorization
$A'W'$ of the same inner dimension such that the $\ell_1$ norm of each row of $M' -A'W' $ is at most $10\epsilon/\alpha + 7\epsilon$.
\end{theorem}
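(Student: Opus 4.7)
The plan is to generalize the noiseless algorithm via a robust version of the ``loner'' test, using carefully chosen thresholds. The key structural observation is still available: by separability, every row $W^i$ of $W$ appears as some row $M^{f(i)}$ of $M$, and in the noisy input $\|M'^{f(i)} - W^i\|_1 \le \epsilon$. Conversely, any row $M^j$ lies in the convex hull of $\{W^i\}_i = \{M^{f(i)}\}_i$, with coefficients given by $A^j$.

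First I would define, for each row index $j$, the linear program
\[
\mathrm{LP}(j) \;=\; \min\Big\{\, \big\|M'^j - \textstyle\sum_{j'\neq j} c_{j'} M'^{j'}\big\|_1 \;:\; c\ge 0,\ \textstyle\sum c_{j'} = 1,\ c_{j'} = 0 \text{ whenever } \|M'^{j'}-M'^j\|_1 < \gamma \,\Big\},
\]
for a cluster-radius threshold $\gamma$ chosen so that $4\epsilon < \gamma < \alpha - O(\epsilon)$; the hypothesis $20\epsilon/\alpha + 13\epsilon < \alpha$ guarantees that such a window exists. I would then declare $j$ a \emph{robust loner} if $\mathrm{LP}(j) > \tau$ for an appropriately chosen $\tau$ of order $\epsilon/\alpha$. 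The soundness direction shows that if $M^j = W^i$ (i.e.\ $j$ is an ``exact'' loner), every row kept by the LP has $\|M^{j'} - W^i\|_1 \ge \gamma - 2\epsilon$, so by $\alpha$-robust simpliciality its weight on $W^i$ is bounded away from $1$, forcing any convex combination (pulled back to the noiseless picture) to be at $\ell_1$-distance at least $(\gamma-2\epsilon)\alpha/2 - 2\epsilon$ from $M'^j$; this exceeds $\tau$. The completeness direction shows that if $M^j$ is genuinely interior, it already lies in $\mathrm{conv}\{M^{f(i)}\}$, and all the $M'^{f(i)}$ that are ``far enough'' from $M'^j$ survive the exclusion rule, so $\mathrm{LP}(j) \le 2\epsilon < \tau$.

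Next I would cluster the surviving robust loners: place two in the same cluster if their $\ell_1$-distance is at most $\gamma$. The upper bound $\gamma < \alpha - O(\epsilon)$ prevents any two $M'^{f(i)}$ and $M'^{f(i')}$ with $i\neq i'$ from ending up in the same cluster (since $\|W^i - W^{i'}\|_1 \ge \alpha$ by robust simpliciality), while the lower bound $\gamma > 4\epsilon$ ensures all rows within $O(\epsilon)$ of a fixed $W^i$ cluster together. Exactly $r$ clusters result; choose one representative per cluster to form the rows of $W'$, so that after some permutation $\pi$, $\|W'^i - W^{\pi(i)}\|_1 \le 2\epsilon$. Finally, for each row $j$ solve
\[
A'^j \;=\; \arg\min_{u \ge 0,\; \sum_k u_k = 1} \|\, uW' - M'^j\,\|_1,
\]
which is an LP. The true row $A^{\pi(j)}$ is feasible and achieves objective at most $\|A^j W - M^j\|_1 + \|A^j(W'-W)\|_1 + \epsilon \le 0 + 2\epsilon + \epsilon$, plus the $O(\epsilon/\alpha)$ slack coming from propagating $W' \neq W$ through the robust-simplicial inequality. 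Summing the per-row guarantees gives the claimed per-row error $10\epsilon/\alpha + 7\epsilon$.

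The main obstacle is the quantitative tuning: choosing $\gamma$ and $\tau$ so that every true $W$-representative passes the loner test while every interior row fails it, and simultaneously so that the cluster threshold separates the $r$ distinct $W^i$'s but merges their noisy copies. This is precisely what the hypothesis $20\epsilon/\alpha + 13\epsilon < \alpha$ enforces; the specific constants $10/\alpha$ and $7$ in the final error then emerge from tracking the triangle-inequality losses at each of the three noise-sensitive steps (LP value for loner detection, cluster-representative approximation to $W$, and the final LP for $A'$).
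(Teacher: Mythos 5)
Your proposal follows essentially the same route as the paper's proof: identify the canonical rows $M'^{f(i)}$ (each within $\epsilon$ of some $W^i$), run a ``robust loner'' linear program that excludes rows within a fixed $\ell_1$-radius and tests whether the remaining hull is far, argue via $\alpha$-robust simpliciality that canonical rows pass and interior rows fail, cluster the loners by $\ell_1$-distance to recover $r$ representatives $W'$, and finish with a per-row LP (or convex-hull argument) to get $A'$. The only differences are cosmetic: the paper commits to the specific exclusion radius $d = 5\epsilon/\alpha + 2\epsilon$, loner threshold $2\epsilon$, and cluster radius $2(d+\epsilon)$, and traces those exact constants through to the final $10\epsilon/\alpha + 7\epsilon$, whereas you leave $\gamma$ and $\tau$ floating and acknowledge that pinning them down is the remaining work; your soundness step also makes explicit the factor-of-two conversion $\norm{M^j - W^i}_1 \le 2(1 - A_{j,i})$ that the paper uses implicitly. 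Same structure, same key lemmas.
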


\begin{proof}
Separability implies that for any column index $i$ there is a  row $f(i)$ in $A$ whose only nonzero entry is in the $i^{th}$ column.
Then $M^{f(i)} = W^i$ and consequently $\norm{M'^{f(i)} - W^i}_1 < \epsilon$. Let us call these rows $M'^{f(i)}$ for all $i$ the {\em canonical rows}. 
From the above description the following claim is clear since 
the rows of $M$ can be expressed as a convex combination of $W^i$'s. 

\begin{claim}
\label{claim:repsentationwitherror}
Every row $M'^j$ has $\ell_1$-distance at most $2\epsilon$ to the convex hull of canonical rows.
\end{claim}
\begin{proof}
$$\norm{M'^j - \sum_{k=1}^r A_{j,k} M'^{f(k)}}_1 \le \norm{M'^j-M^j}_1 + \norm{M^j - \sum_{k=1}^r A_{j,k} M^{f(k)}}_1+\norm{\sum_{k=1}^r A_{j,k} (M^{f(k)} - M'^{f(k)})}_1$$
and we can bound the right hand side by $2 \epsilon$.
\end{proof} 

Next, we show how to find the canonical rows.  For a row $M'^j$, we call it a {\em robust-loner}
 if upon ignoring rows whose $\ell_1$ distance to $M'^j$ is less than $d = 5\epsilon/\alpha+2\epsilon$,  the $\ell_1$-distance
of $M'^j$  to the convex hull of the remaining rows  is more than $2\epsilon$. Note that we can identify robust-loner rows using
linear programming.

The following two claims establish that a row  of $M'^j$ is a  robust-loner if and only if it is close to some row $W^i$.

\begin{claim}
If $M'^j$ has distance more than $d+\epsilon$ to all of the $W^i$'s, then it cannot be a robust loner.
\end{claim}
\begin{proof}
Such an $M'^j$ has distance at least $d$ to each of the canonical rows.   The previous claim shows $M'^j$ is close to the convex hull of the canonical rows and thus by definition it cannot be a robust-loner.
\end{proof}

\begin{claim}
All canonical rows are robust-loners.
\end{claim}

\begin{proof}
Since $\norm{M'^{f(i)}-W^i}_1 \le \epsilon$, when we check if $M'^{f(i)}$ is a robust-loner (using linear programming), we leave out of consideration all rows that have $\ell_1$-distance at most $5\epsilon/\alpha + \epsilon$ to $W^i$. In particular, this omits any  row $M'^j$ such that $M^j = \sum_{k=1}^r A_{j,k} W^k$ and $A_{j,i} \ge 1 - 5\epsilon/\alpha$. All remaining rows have $A_{j,i} \le 1-5\epsilon/\alpha$, and hence the $\ell_1$ distance of $W^i$ to 
$conv(W\backslash W^i)$ is at least  $ \alpha$ (by the $\alpha$-robust simplicial property), we conclude that the distance between $W^i$ and the convex hull of remaining $M^j$'s must be at least $5\epsilon/\alpha * \alpha = 5\epsilon$. Since $M'$ is close to $M$ the $\ell_1$-distance between $M'^{f(i)}$ and the convex hull of remaining rows $M'^j$'s must be at least $5\epsilon-2\epsilon = 3\epsilon$. Therefore $M'^{f(i)}$ is a robust-loner.
\end{proof}

The previous claim implies that each robust-loner row is within $\ell_1$-distance $d+\epsilon$ to some $W^i$ 
and conversely, for every $W^i$ there is at least one robust-loner row that is close to it. Since the $\ell_1$-distances between $W^i$'s are at least $4(d+\epsilon)$, we can apply distance based clustering
on the robust-loner rows: place two robust-loner rows into the same cluster if and only if these rows are within $\ell_1$-distance at most $2(d+\epsilon)$. Clearly we will obtain $r$ clusters, one corresponding to each of the $W^i$'s. Choose one row from each of the cluster, and using similar argument as Claim~\ref{claim:repsentationwitherror} we deduce that every row of $M'$ is within $2(d+\epsilon)+\epsilon = 10\epsilon/\alpha+7\epsilon$ to the convex hull of the rows we selected. Therefore these rows form a nonnegative $W'$ and we can find $A'$ so that $\norm{M'^j- (A'W')^j}_1 \le 10\epsilon/\alpha+7\epsilon$ for all $j$.
\end{proof}

\section{Approximate Nonnegative Matrix Factorization}\label{sec:adv}

Here we consider the case in which the given
matrix does not have an exact low-rank NMF but rather
can be approximated by a nonnegative factorization with small inner-dimension. 
We refer to this as {\em Approximate} NMF. Unlike the algorithm in Theorem~\ref{thm:separablenoise}, the algorithm here works with general nonnegative matrix factorization: we do not make any assumptions on matrices $A$ and $W$. Throughout this section we will use $\norm{}_F$ to denote the Froebenius norm, $\norm{}_2$ to
denote the spectral norm and $\norm{}$ applied to a vector will denote the standard Euclidean norm. 

\begin{theorem}
Let $M$ be an $n\times m$ nonnegative matrix
such that there is a factorization $AW$ satisfying
 $\norm{M-AW}_F \le \epsilon \norm{M}_F$, where $A$ and $W$ are
 nonnegative and have inner-dimension $r$. There
 is an algorithm that computes $A'$ and $W'$ satisfying 
 $$\norm{M-A'W'}_F \le O(\epsilon^{1/2}r^{1/4})\norm{M}_F$$ in
  time $2^{\poly(r\log(1/\epsilon))}\poly(n,m)$.
\end{theorem}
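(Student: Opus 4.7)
The plan is to use the SVD to reduce to a low-dimensional search over an $r \times r$ parameter, enumerate that parameter over a net, and on each candidate solve two convex (nonnegative least-squares) subproblems to recover a candidate $(A',W')$.

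First, compute the SVD of $M$ and let $U_r$ denote its top-$r$ left singular vectors. Since $AW$ has rank at most $r$ and $U_r U_r^T M$ is the best rank-$r$ approximation to $M$ in Frobenius norm, we have $\norm{M - U_r U_r^T M}_F \le \norm{M - AW}_F \le \epsilon \norm{M}_F$. Set $\tilde{M} := U_r^T M \in \R^{r \times m}$. Applying $U_r^T$ to the promised factorization gives $\norm{\tilde{A} W - \tilde{M}}_F \le \epsilon \norm{M}_F$, where $\tilde{A} := U_r^T A \in \R^{r \times r}$. After rescaling so that each column of $A$ has unit Euclidean norm and pushing the scaling into $W$, we may assume $\norm{\tilde{A}}_F \le \sqrt{r}$, so the hidden parameter $\tilde{A}$ lives in a bounded region of a space of dimension $r^2$.

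Next, build an $\eta$-net $\mathcal{N}$ of the Frobenius ball of radius $\sqrt{r}$ in $\R^{r \times r}$, for $\eta$ a small polynomial in $\epsilon$ and $1/r$. Standard volume bounds give $\abs{\mathcal{N}} = (O(\sqrt{r}/\eta))^{r^2} = 2^{\poly(r \log(1/\epsilon))}$. For each candidate $\widehat{A} \in \mathcal{N}$, solve two convex programs: (i) $W' := \arg\min_{W \geq 0} \norm{\widehat{A} W - \tilde{M}}_F$, which decouples column-by-column; (ii) $A' := \arg\min_{A \geq 0} \norm{A W' - M}_F$, which decouples row-by-row. Output the pair minimizing $\norm{M - A'W'}_F$. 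Since we try $2^{\poly(r \log(1/\epsilon))}$ candidates and each subproblem is poly-time in $n,m$ and the bit complexity, the overall runtime matches the statement.

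For correctness, fix $\widehat{A} \in \mathcal{N}$ within $\eta$ of the true $\tilde{A}$. Then
\[
\norm{\widehat{A} W - \tilde{M}}_F \;\le\; \norm{\tilde{A}W - \tilde{M}}_F + \norm{(\widehat{A}-\tilde{A}) W}_F \;\le\; \epsilon\norm{M}_F + \eta\norm{W}_F,
\]
so $W'$ attains at least this error. To lift back, I would use the true $A$ as a feasible witness in subproblem (ii) and write $AW' - M = A(W' - W) + (AW - M)$. The second term is bounded by $\epsilon \norm{M}_F$; the first is controlled by propagating how far $W'$ drifts from the true $W$ through (a pseudoinverse of) $\widehat{A}$. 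The main technical obstacle is precisely this error-propagation step: a naive triangle inequality gives only $O(\epsilon \kappa)$ loss where $\kappa$ is a condition number of $\widehat{A}$, which can be unbounded. The expected remedy is a spectral case analysis --- in directions where $\widehat{A}$ is well-conditioned we pay only $O(\epsilon)$, while in directions with small singular value we truncate and pay a square-root-style loss --- and balancing the two regimes across the $r$ singular directions of $\widehat{A}$ and aggregating column-wise residuals through a Cauchy--Schwarz over the $r$-dimensional reconstruction produces the claimed Frobenius bound of $O(\epsilon^{1/2} r^{1/4}) \norm{M}_F$.
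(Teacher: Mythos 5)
Your high-level plan shares the paper's skeleton --- SVD-reduce to a rank-$r$ problem, enumerate an $r^2$-dimensional object over a net, then solve convex subproblems for each candidate --- but the object you enumerate and, more importantly, the convex program you solve are different, and the difference is where the proof actually lives.

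The genuine gap is the one you flag yourself and then wave at: after fixing $\widehat{A}$ and solving the plain nonnegative least-squares problem $W' := \arg\min_{W\ge 0}\norm{\widehat{A}W-\tilde{M}}_F$, you have no control over $W'$ in the near-kernel directions of $\widehat{A}$. A small residual $\norm{\widehat{A}W'-\tilde{M}}_F$ says nothing about $\norm{W'-W}$ along singular directions of $\widehat{A}$ with small singular value, so $\norm{A(W'-W)}_F$ (the quantity you need to bound to lift the estimate back to $M$) can be arbitrarily large. Writing ``spectral case analysis'' does not repair this: it is not that the analysis of your algorithm is incomplete, it is that the convex program (i), as stated, does not produce a $W'$ for which any spectral case analysis can succeed. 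You need the \emph{algorithm}, not just the analysis, to distinguish well-conditioned from ill-conditioned directions of $A$, and plain NNLS does not.

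The paper's fix is precisely to build that distinction into the search. It splits $W = W_0 + W_1$ where $W_0$ is the projection of $W$ onto singular directions of $A$ with $\sigma_t \ge \delta\norm{M}_F$, shows $W_0$ is within $O(\epsilon/\delta)$ of the truncated pseudoinverse $A^+M$ (hence lives in the row span of $M$, an $r$-dimensional space, and can be enumerated over a net), and then recovers $W_1$ not via NNLS but via a convex program whose objective
\[
\norm{A}_F^2 \sum_{t\le t_0}\norm{v_t^T Z}^2 + \delta^2\norm{M}_F^2 \sum_{t>t_0}\norm{v_t^T Z}^2
\]
explicitly penalizes $Z = W_1'$ differently along well-conditioned versus ill-conditioned directions. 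That weighted objective, combined with the nonnegativity constraint $W_0''+Z\ge0$, is the mechanism that forces the ill-conditioned part of $W'$ to be no larger than necessary, and it is what makes the $O(\epsilon^{1/2}r^{1/4})$ bound (after optimizing $\delta=\sqrt{\epsilon}/r^{1/4}$) come out. Also note the paper enumerates $W_0''$, the singular vectors $v_t$ of $A$, the threshold $t_0$, and $\norm{A}_F$ --- not $\tilde{A}=U_r^T A$ itself. Enumerating $\tilde{A}$ directly is plausible, but you would still need to replace your step (i) with something that sees $\widehat{A}$'s spectrum; the paper's convex program is the concrete answer to the question you leave open.
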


Note that the matrix $M$ need not have low rank, but we will be able to assume $M$ has rank at 
most $r$ without loss of generality: Let $M'$ be the best rank at most $r$ approximation (in terms of Frobenius norm) to $M$. 
This can be computed using a truncated singular value decomposition (see e.g. \cite{GV}). Since $A$ and $W$ have inner-dimension $r$, we get:

\begin{claim}
$\norm{M' - M}_F \leq \norm{M-AW}_F$
\end{claim}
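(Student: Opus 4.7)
The plan is very short: the claim follows immediately from the extremal property of the truncated SVD (the Eckart–Young theorem) together with the trivial rank bound on a product $AW$.

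First I would observe that since $A$ is $n \times r$ and $W$ is $r \times m$, the product $AW$ has rank at most $r$. Consequently $AW$ is one particular candidate in the set of all real (not necessarily nonnegative) matrices of rank at most $r$ that one might use to approximate $M$ in Frobenius norm.

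Next, by definition $M'$ is the best rank-at-most-$r$ approximation to $M$ in Frobenius norm, obtained from the truncated singular value decomposition. The Eckart–Young theorem states that $M'$ minimizes $\norm{M - X}_F$ over all matrices $X$ with $\mathrm{rank}(X) \leq r$. In particular, taking $X = AW$ (which is a valid competitor by the previous paragraph) yields
\[
\norm{M - M'}_F \;\le\; \norm{M - AW}_F,
\]
which is exactly the claim. The only potential subtlety (not really an obstacle) is that we are not requiring $M'$ itself to be nonnegative — but the claim makes no such requirement, so this is fine; the subsequent algorithm will have to deal separately with reconstructing a nonnegative factorization from a rank-$r$ (possibly signed) surrogate $M'$.
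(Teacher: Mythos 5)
Your proof is correct and is exactly the argument the paper intends: $AW$ has rank at most $r$ (since the inner dimension is $r$), and $M'$ is defined as the Frobenius-optimal rank-$\le r$ approximation via the truncated SVD, so Eckart--Young gives the inequality immediately. The paper leaves this one-line argument implicit, merely noting that $M'$ is the best rank-at-most-$r$ approximation and that $A,W$ have inner-dimension $r$; you have simply written it out.
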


Throughout this section, we will assume that the input matrix $M$ has rank at most $r$ - since otherwise
we can compute $M'$ and solve the problem for $M'$. Then using the triangle inequality, 
any good approximation to $M'$ will also be a good approximation to $M$.


Throughout this section, we will use the notation $A_t$ to denote the $t^{th}$ column of $A$
and $W^t$ to denote the $t^{th}$ row of $W$. 
Note that $W^t$ is a row vector so we will frequently use $A_t W^t$ to denote an outer-product. 
Next, we apply a simple re-normalization that will
allow us to state the main steps in our algorithm in a more friendly notation. 


\begin{lemma}\label{lemma:normalization}
We can assume without loss of generality that for all $t$
\begin{align}
\norm{W^t}& = 1\; \label{|Yt|} \\
\norm{A_t}&\leq  (1 + \epsilon) \norm{M}_F\; \label{|Xt|}
\end{align}
and furthermore $\norm{A}_F \le (1 + \epsilon) \norm{M}_F$.
\end{lemma}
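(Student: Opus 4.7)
The plan is to rescale $A$ and $W$ by a diagonal matrix so that every row of $W$ has unit norm, and then exploit nonnegativity to bound the Frobenius norm of the rescaled $A$.

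First I would handle the normalization: let $D$ be the $r \times r$ diagonal matrix with $D_{tt} = \|W^t\|$. If $D_{tt} = 0$ for some index $t$, then the rank-one term $A_t W^t$ is identically zero and can be harmlessly dropped (or replaced by a standard basis vector outer product with itself if we wish to preserve the inner dimension), so assume $D$ is invertible on the support of nonzero rows. Writing $AW = (AD)(D^{-1}W)$, set $A' = AD$ and $W' = D^{-1}W$. Then $A'$ and $W'$ are nonnegative, $A'W' = AW$, and each row $W'^t = W^t / \|W^t\|$ has unit Euclidean norm. Rename $A', W'$ back to $A, W$; condition \eqref{|Yt|} is then satisfied.

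The main step is bounding $\|A\|_F$ after rescaling. The key observation is that because $A$ and $W$ are nonnegative, the rank-one outer products $A_t W^t$ are entrywise nonnegative, and hence for every entry $(i,j)$,
\[
(AW)_{ij}^{2} = \Bigl(\sum_{t=1}^{r} (A_t W^t)_{ij}\Bigr)^{2} \;\geq\; \sum_{t=1}^{r} (A_t W^t)_{ij}^{2}.
\]
Summing over $(i,j)$ yields $\|AW\|_F^{2} \geq \sum_{t} \|A_t W^t\|_F^{2} = \sum_t \|A_t\|^{2}\,\|W^t\|^{2} = \sum_t \|A_t\|^{2} = \|A\|_F^{2}$, where I used $\|W^t\|=1$ in the next-to-last equality.

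Finally, applying the triangle inequality to the hypothesis $\|M - AW\|_F \le \epsilon\|M\|_F$ gives $\|AW\|_F \leq \|M\|_F + \|M-AW\|_F \leq (1+\epsilon)\|M\|_F$. Combining with the previous display,
\[
\|A\|_F \;\leq\; \|AW\|_F \;\leq\; (1+\epsilon)\|M\|_F,
\]
which is the Frobenius bound on $A$. Since $\|A_t\| \le \|A\|_F$ for every column $t$, condition \eqref{|Xt|} follows immediately. The only subtlety to watch is the degenerate case where some $W^t = 0$, which is cleanly dispatched by dropping that component; beyond that, the proof is just rescaling plus the nonnegativity trick used to compare $\|A\|_F^2$ with $\|AW\|_F^2$.
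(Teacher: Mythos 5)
Your proof is correct and follows essentially the same route as the paper's: rescale so each $\|W^t\|=1$, apply the triangle inequality to get $\|AW\|_F \le (1+\epsilon)\|M\|_F$, and use entrywise nonnegativity to drop the cross terms in $\|AW\|_F^2 = \sum_{i,j}\bigl(\sum_t (A_tW^t)_{ij}\bigr)^2 \ge \sum_t \|A_tW^t\|_F^2 = \|A\|_F^2$. The only cosmetic differences are that you derive the column bound $\|A_t\| \le (1+\epsilon)\|M\|_F$ as a corollary of the Frobenius bound rather than directly from $\|AW\|_F \ge \|A_tW^t\|_F = \|A_t\|$ as the paper does, and you explicitly handle the $W^t=0$ degenerate case, which the paper leaves implicit.
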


\begin{proof}
We can write $AW= \sum_{t=1}^r A_t W^t.$ So we may scale $A_t,W^t$ to ensure that
$\norm{W^t}=1$. Next, since $A$ and $W$ are nonnegative we have
$\norm{AW}_F\geq \norm{A_t W^t}_F = \norm{A_t}\norm{W^t}$ and $\norm{AW}_F\leq  (1 + \epsilon) \norm{M}_F$ and this implies the first condition in the lemma. 

Next we observe
\[
\norm{AW}_F^2 = \sum_{i=1}^n \sum_{j=1}^m \Big [\sum_{t=1}^r (A_t W^t)_{i,j} \Big ]^2 \ge \sum_{t=1}^r\sum_{i=1}^n \sum_{j=1}^m [A_t W^t]_{i,j}^2 = \norm{A}_F^2.
\]
where the inequality follows because all entries in $A$ and $W$ are nonnegative, and the last equality follows because $\norm{W^t} = 1$. 
\end{proof}

Note that this lemma immediately implies that $\norm{W}_F \leq \sqrt{r}$. 

The intuition behind our algorithm  is to decompose the unknown matrix $W$ as 
the sum of two parts: $W= W_0+W_1$. The first part $W_0$ is responsible for 
 how good $AW$ is as an approximation to $M$ (i.e., $\norm{M-AW_0}_F$ is small) but could be
negative; the second part $W_1$ has little effect on the approximation
 but is important in ensuring the sum $W_0+W_1$ is nonnegative. 
The algorithm will find good approximations to $W_0, W_1$.

What are $W_0, W_1$? Since removing $W_1$ has little effect on
how good $AW$ is as an approximation to $M$, this matrix should be roughly the projection of $W$ onto the ``less significant'' 
singular vectors of $A$. Namely, let the singular value decomposition of $A$ be
\begin{align} A & = \sum_{t=1}^r \sigma_t u_t
v_t^T. \label{eqn:svdA}
\end{align}
and suppose that $\sigma_1 \geq \sigma_2 .... \geq \sigma_r$. Let $t_0$ be the largest $t$ for which $|\sigma_t|\ge
\delta \norm{M}_F$ (where $\delta$ is a constant that is polynomially
related to $r$ and $\epsilon$ and will be specified later). Then set
\begin{align}W_0 &=
\sum_{t=1}^{t_0} (v_t v_t^T) W; \qquad 
W_1 = \sum_{t=t_0+1}^r (v_tv_t^T) W.
\end{align} 

\begin{lemma} \label{lemma:w0}
$\norm{M - AW_0}_F \le \epsilon \norm{M}_F + \delta \sqrt{r} \norm{M}_F$
\end{lemma}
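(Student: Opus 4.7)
The plan is to decompose the error via the triangle inequality:
\[
\norm{M - AW_0}_F \;\le\; \norm{M - AW}_F \;+\; \norm{AW - AW_0}_F \;=\; \norm{M - AW}_F + \norm{AW_1}_F.
\]
The first term is controlled by hypothesis: $\norm{M - AW}_F \le \epsilon \norm{M}_F$. So everything reduces to bounding $\norm{AW_1}_F$ in terms of $\delta$, $\sqrt{r}$, and $\norm{M}_F$.

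For the second term, I would substitute the SVD decomposition from (\ref{eqn:svdA}). Writing $W_1 = \sum_{t > t_0} (v_t v_t^T) W$ and using $A v_s = \sigma_s u_s$, we obtain
\[
AW_1 \;=\; \sum_{t > t_0} \sigma_t\, u_t (v_t^T W).
\]
The key observation is that the rank-one summands $\sigma_t u_t (v_t^T W)$ are pairwise orthogonal in the Frobenius inner product: for $s \ne t$, $\langle \sigma_s u_s(v_s^T W),\, \sigma_t u_t(v_t^T W)\rangle_F = \sigma_s\sigma_t \langle u_s,u_t\rangle \langle v_s^T W, v_t^T W\rangle = 0$ since the $u_t$'s are orthonormal. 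Hence
\[
\norm{AW_1}_F^2 \;=\; \sum_{t > t_0} \sigma_t^2 \, \norm{v_t^T W}^2.
\]

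Now I would apply the two defining bounds. By the choice of $t_0$, we have $\sigma_t^2 \le \delta^2 \norm{M}_F^2$ for every $t > t_0$. Moreover, since $\{v_t\}$ are orthonormal,
\[
\sum_{t > t_0} \norm{v_t^T W}^2 \;\le\; \sum_{t=1}^{r} \norm{v_t^T W}^2 \;=\; \norm{W}_F^2 \;\le\; r,
\]
where the last inequality follows from the normalization in Lemma~\ref{lemma:normalization} (each row $W^t$ has unit norm, and there are at most $r$ such rows). Combining,
\[
\norm{AW_1}_F^2 \;\le\; \delta^2 \norm{M}_F^2 \cdot r,
\]
so $\norm{AW_1}_F \le \delta\sqrt{r}\,\norm{M}_F$, and the claimed bound follows. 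There is no real obstacle here; the only ingredient worth double-checking is the Frobenius-orthogonality of the rank-one summands, which makes the sum of squares identity work cleanly and prevents an extra factor of $r$ from creeping in.
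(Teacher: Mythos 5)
Your proof is correct and follows the same high-level plan as the paper: triangle inequality to reduce to bounding $\norm{AW_1}_F$, then use $AW_1 = \sum_{t>t_0}\sigma_t u_t(v_t^T W)$, $\sigma_t\le\delta\norm{M}_F$ for $t>t_0$, and $\norm{W}_F\le\sqrt r$. The only (minor) difference is the last step: the paper invokes the product inequality $\norm{\Sigma W}_F\le\norm{\Sigma}_2\norm{W}_F$ with $\Sigma=\sum_{t>t_0}\sigma_t u_t v_t^T$, whereas you compute $\norm{AW_1}_F^2$ exactly via Frobenius-orthogonality of the rank-one summands and then bound term by term — which is in effect a direct proof of that inequality in this case. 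Both are valid; yours is marginally more explicit and makes the absence of an extra $r$ factor transparent.
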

\begin{proof}
By the triangle inequality $\norm{M - AW_0}_F \le \norm{M - AW}_F +
\norm{AW_1}_F$. Also $AW_1 = \sum_{t = t_0+1}^r \sigma_t (u_t
v_t^T) W$, so we have

\[
\norm{AW_1}_F = \norm{\sum_{t = t_0+1}^r \sigma_t (u_t v_t^T) W}_F \le  \norm{\sum_{t = t_0+1}^r \sigma_t (u_t v_t^T)}_2 \norm{W}_F \le \delta  \norm{M}_F \sqrt{r},
\]
where the last inequality follows because $\norm{W}_F \leq \sqrt{r}$ and the spectral norm of $\sum_{t = t_0+1}^r (u_t v_t^T)$ is one. 
\end{proof}

Next, we establish a lemma that will be useful when searching for (an approximation to) $W_0$:

\begin{lemma} \label{lem:w0pp}
There is an $r \times m$ matrix $W_0'$  such that each row is in the
span of the rows of $M$ and which satisfies $\norm{W_0' -W_0}_F \leq
2\epsilon / \delta.$
\end{lemma}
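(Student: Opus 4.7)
The plan is to produce $W_0'$ as a fixed linear image of $M$, chosen so that the same linear map applied to $AW$ returns $W_0$ exactly; the error estimate then follows by propagating $AW - M$ through that map. Using the SVD of $A$ from (\ref{eqn:svdA}), I introduce the orthogonal projections $P = \sum_{t=1}^{t_0} v_t v_t^T$ on $\R^r$ and $Q = \sum_{t=1}^{t_0} u_t u_t^T$ on $\R^n$. The truncation $QA = \sum_{t=1}^{t_0} \sigma_t u_t v_t^T$ is a rank-$t_0$ matrix with Moore--Penrose pseudoinverse $(QA)^+ = \sum_{t=1}^{t_0} \sigma_t^{-1} v_t u_t^T$, and a direct computation gives $(QA)^+ Q A = P$, so
$$W_0 \;=\; PW \;=\; (QA)^+ Q A W.$$

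This identity motivates the definition $W_0' := (QA)^+ Q M$. Since $(QA)^+ Q$ is an $r \times n$ matrix multiplying $M$ from the left, every row of $W_0'$ is automatically a linear combination of the rows of $M$, giving the required structural property.

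For the error bound I would write $W_0 - W_0' = (QA)^+ Q (AW - M)$ and estimate
$$\norm{W_0 - W_0'}_F \;\le\; \norm{(QA)^+ Q}_2 \cdot \norm{AW - M}_F.$$
The operator $(QA)^+ Q$ simplifies to $\sum_{t=1}^{t_0} \sigma_t^{-1} v_t u_t^T$, so its spectral norm equals $1/\sigma_{t_0}$, which is at most $1/(\delta \norm{M}_F)$ by the defining property of the cutoff $t_0$. Combined with $\norm{AW - M}_F \leq \epsilon \norm{M}_F$ (absorbing at worst the factor $2$ from the initial reduction to the rank-$r$ projection of $M$ noted at the top of the section), this yields $\norm{W_0 - W_0'}_F \leq 2\epsilon/\delta$.

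The only nonroutine step is spotting the identity $W_0 = (QA)^+ Q A W$: inverting $A$ on its high-significance singular components recovers exactly the component of $W$ that $W_0$ keeps, so substituting $M$ for $AW$ in this formula produces a matrix whose rows lie in the row span of $M$ and whose distance from $W_0$ is controlled purely by the quality of the approximation $M \approx AW$ and the singular-value cutoff $\delta$.
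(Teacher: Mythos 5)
Your proof is correct and is essentially the paper's argument in lightly different dress: the operator $(QA)^+Q = \sum_{t=1}^{t_0}\sigma_t^{-1}v_t u_t^T$ that you construct is precisely the truncated pseudo-inverse $A^+$ the paper uses (since $u_t^T u_s = \delta_{ts}$ makes the factor $Q$ collapse), so your $W_0' = (QA)^+ Q M$ and the paper's $W_0' = A^+ M$ are the same matrix, and the spectral-norm bound $1/\sigma_{t_0} \le 1/(\delta\norm{M}_F)$ together with $\norm{M-AW}_F \le 2\epsilon\norm{M}_F$ gives the identical error estimate.
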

\begin{proof}
Consider the matrix  $A^+ =\sum_{t=1}^{t_0} \frac{1}{\sigma_t}v_t
u_t^T$. Thus $A^+$ is a pseudo-inverse of the truncated SVD of $A$.
Note that $W_0 = A^+AW$ and the spectral norm $\norm{A^+}_2$
is at most $1/(\delta\norm{M}_F)$.
Then we can choose $W_0' = A^+ M$. Clearly, each row of $W_0'$ is
in the span of the rows of $M$. Furthermore, we have
\[
\norm{W_0' - W_0}_F = \norm{A^+(M - AW)}_F \le \norm{A^+}_2 \norm{M-AW}_F \le \frac{1}{\delta\norm{M}_F}\cdot 2\epsilon \norm{M}_F \le \frac{2\epsilon }{\delta}.
\]
\end{proof}

\begin{lemma}
\label{lemma:w1}
There is an algorithm that in time
$2^{\poly(r\log(1/\epsilon))}\poly(n,m)$  finds $W_0'', W_1'$ and $A'$ such that $W_0''+W_1' \ge 0$, $A'\ge 0$ and

\[
\norm{M - A'(W_0''+W_1')}_F \le O(\frac{\epsilon}{\delta} \norm{A}_F + \epsilon \norm{M}_F + \delta \sqrt{r} \norm{M}_F).
\]
\end{lemma}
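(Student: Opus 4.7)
The plan is to reduce the problem to an enumeration over a low-dimensional search space followed by convex optimization. The key observation is that by Lemma~\ref{lem:w0pp}, every row of the target $W_0'$ lies in the row span $V_M$ of $M$, which has dimension at most $r$ (since $M$ is rank at most $r$, without loss of generality). Fixing an orthonormal basis $B \in \R^{r\times m}$ of $V_M$ (computable by SVD), one can parameterize any candidate $W_0''$ with rows in $V_M$ as $W_0'' = TB$ for some $T \in \R^{r\times r}$; moreover the target satisfies $\|T\|_F = \|W_0'\|_F \le \sqrt{r} + 2\epsilon/\delta$ by Lemma~\ref{lem:w0pp} and the normalization $\|W\|_F \le \sqrt{r}$. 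Discretizing $T$ on an $\eta$-grid inside $[-2\sqrt{r}, 2\sqrt{r}]^{r\times r}$ with precision $\eta = \epsilon/(\delta r)$ yields at most $(r\delta/\epsilon)^{r^2} = 2^{\poly(r\log(1/\epsilon))}$ candidates (using that $\delta$ is polynomial in $r$ and $\epsilon$), and the closest grid point $W_0^*$ to $W_0'$ satisfies $\|W_0^* - W_0'\|_F \le O(\epsilon/\delta)$.

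For each candidate $W_0''$, I would also enumerate a candidate $A'$ in an analogous discretization: when $\mathrm{rank}(M) = r$ the columns of $A$ lie in the column span of $M$, so $A'$ is parameterized by an $r\times r$ matrix of basis coefficients, giving another $2^{\poly(r\log(1/\epsilon))}$-sized grid of magnitudes bounded by $(1+\epsilon)\|M\|_F$. For each pair $(A', W_0'')$ with $A' \ge 0$, I would then solve the convex quadratic program $\min_{W_1'}\|M - A'(W_0'' + W_1')\|_F^2$ subject to $W_0'' + W_1' \ge 0$ in $\poly(n,m,r)$ time. To verify correctness, fix the grid pair $(A^*, W_0^*)$ closest to $(A, W_0')$, set $W_1^* := W - W_0^*$ (which has rows in $V_M$ since $W$ does under $\mathrm{rank}(M)=r$), and observe $W_0^* + W_1^* = W \ge 0$ is a feasible point with objective
\begin{align*}
\|M - A^*(W_0^* + W_1^*)\|_F &= \|M - A^* W\|_F \\
&\le \|M - AW\|_F + \|A - A^*\|_F \|W\|_F \\
&\le \epsilon\|M\|_F + O(\epsilon/\delta)\|A\|_F,
\end{align*}
which is within the claimed bound (the $\delta\sqrt{r}\|M\|_F$ term is slack). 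Since the quadratic program achieves error at most this value for this pair, some enumerated choice yields an output satisfying the stated bound.

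The main obstacle is the edge case $\mathrm{rank}(M) < r$: then rows of $W$ need not lie in $V_M$, so enumeration over $V_M$ misses an orthogonal component $W_\perp$ of $W$. However, the effect of this component when multiplied by $A$ is precisely controlled by $\|AW_1\|_F \le \delta\sqrt{r}\|M\|_F$ from Lemma~\ref{lemma:w0}, and this is exactly the third term in the claimed error bound. To handle it cleanly, I would either enumerate additional directions orthogonal to $V_M$ with magnitudes calibrated by Lemma~\ref{lemma:w0}, or allow $W_1'$ to range outside $V_M$ via a more flexible parameterization (for instance, permitting $W_1'$ to be any nonnegative matrix completing $W_0''$, with the QP absorbing the discrepancy). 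A secondary concern is carefully balancing the two discretization precisions $\eta_A$ and $\eta_W$ so that the accumulated error $\|A^*(W_0^* + W_1^*) - AW\|_F$ correctly decomposes into the required $O(\epsilon/\delta)\|A\|_F + O(\epsilon)\|M\|_F$ contributions; this is routine but requires attention due to the different scales of $\|A\|_F$ and $\|W\|_F$.
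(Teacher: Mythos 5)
Your overall plan — reduce to a low‑dimensional enumeration, then finish with a convex program — is the right shape, and your handling of $W_0''$ via an $\epsilon_1$‑net in the $r$‑dimensional row span of $M$ matches the paper (Lemma~\ref{lem:w0pp}). But the step where you enumerate $A'$ has a genuine gap, and it is not the edge case $\mathrm{rank}(M)<r$ that you flag at the end; it fails already in the ``generic'' case.

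The claim that ``when $\mathrm{rank}(M)=r$ the columns of $A$ lie in the column span of $M$'' is false: $M$ is only \emph{approximately} equal to $AW$, so $A$ can have a substantial component orthogonal to $\mathrm{col}(M)$ (this component is killed by directions of $\mathbb{R}^r$ where $W$ is nearly singular, so it need not contribute to $\|M-AW\|_F$ and is \emph{not} controlled by $\epsilon\|M\|_F$). Projecting, i.e.\ replacing $A$ by $P_M A$, does satisfy $\|M - (P_M A)W\|_F\le\|M-AW\|_F$, but $P_M A$ can have large negative entries, and rounding to $\max(P_M A,0)$ then moves you by roughly $\|(I-P_M)A\|_F$, which is not bounded by anything useful. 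So there is no known grid point $A^*\ge 0$ with $\|A-A^*\|_F$ small (or even with $\|M-A^*W\|_F$ small), and the existence of a good enumerated pair $(A^*,W_0^*)$ — the crux of your argument — is unsupported. Note also that enumerating $A'$ entrywise is not an option ($n r$ variables), so the low‑dimensional parameterization is essential and it is precisely what breaks.

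The paper's proof avoids enumerating $A'$ altogether. It enumerates only the $r$ right singular vectors $v_1,\dots,v_r\in\mathbb{R}^r$ of $A$ and the scalar $\|A\|_F$ ($O(r^2)$ parameters, all in $\mathbb{R}^r$), and solves, for each fixed $W_0''$,
\[
\min_Z \ \|A\|_F^2 \sum_{t\le t_0}\|v_t^\top Z\|^2 + \delta^2\|M\|_F^2\sum_{t>t_0}\|v_t^\top Z\|^2
\quad \text{s.t. } W_0''+Z\ge 0,
\]
whose objective is a computable upper bound on $\|A Z\|_F^2$. This surrogate is the key technical device your proposal is missing: it lets you pin down $W_1'$ without knowing $A$. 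Only after $W'=W_0''+W_1'$ is fixed does the paper recover $A'$, by solving the (convex, $\mathrm{poly}(n,m,r)$‑time) nonnegative least–squares problem $\min_{A'\ge 0}\|M-A'W'\|_F$, for which the true $A$ is a feasible comparator. If you want to salvage your write‑up, replace the enumeration of $A'$ by an enumeration of $\{v_t\}$ and $\|A\|_F$, use the surrogate quadratic program to produce $W_1'$, and move the computation of $A'$ to a final NNLS step.
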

\begin{proof}
We use exhaustive enumeration to find a close
approximation to the matrix  $W_0'$ of Lemma~\ref{lem:w0pp}, and then we
use convex programming to find $W_1', A'$:

The exhaustive enumeration is simple: try all vectors that lie in some
 $\epsilon_1$-net in the span of the rows of $M$, where $\epsilon_1 =
 \epsilon/\delta$ .  Such an $\epsilon_1$-net is easily enumerated in
 the provided time since the row vectors are  smaller than $\norm{W}_F
 = \sqrt{r}$ and their span is $r$-dimensional.
Contained in this net there must an $W_0''$ such that $\norm{A^+M -
  W_0''}_F \le \epsilon_1$. Using Lemma~\ref{lem:w0pp},
$\norm{W_0-W_0'}_2 \le 2\epsilon /\delta,$ so the triangle
inequality implies $\norm{W_0 -W_0''}_F \leq 2\epsilon / \delta
+ \epsilon_1   \le 4\epsilon /\delta$. 

Next, we give a method to find 
suitable substitutes $W_1', A'$  for $W_1, A$ respectively so 
that $W_0'+W_1' \ge 0$ and $A'(W_0'+W_1')$ is a good approximation to $M$.

Let us assume we know the vectors $v_i$ appearing in the SVD
expression~(\ref{eqn:svdA})  and $\norm{A}_F$. This is easy to guarantee since
we can enumerate over all choices of the $v_i$'s (which are unit vectors in
$\Re^r$) using a suitable $\epsilon_2$-net where
$\epsilon_2 = \min\{ \frac{\epsilon}{\delta r}, 0.1\}$. Also, $\norm{A}_F$ is a scalar value that can be easily guessed within multiplicative factor $1.01$.

Let $W_1' = Z$ be
the optimal solution to the following convex program:

\begin{align}
\min \quad &\norm{A}_F^2 \sum_{t=1}^{t_0} \norm{v_i^T Z}^2 + \delta^2
\norm{M}_F^2 \sum_{t=t_0+1}^r \norm{v_i^T Z}^2 \label{eqn:convexopt}\\
s.t. \quad &W_0''+Z \ge 0.
\end{align}

This is optimization problem is convex since the constraints are linear and the
objective function is quadratic but convex. (In fact this optimization problem can be separated into $m$ smaller convex programs because the constraints between different columns of $W_1'$ are independent).

When the vectors we enumerated (denoted as $\{v_i'\}$) are close enough to the true values $\{v_i\}$, that is, when $\sum_{i=1}^r \norm{v_i'-v_i}^2 \le \min\{\frac{\epsilon^2}{\delta^2 r}, 0.01\}$, the value of the objective function after substituting $v$ by $v'$ can only change by at most $O(\frac{\epsilon^2}{\delta^2} \norm{A}_F^2 + r \delta^2 \norm{M}_F^2)$.  From now on we work with the true values of $\{v_i\}$. The Claim below and arguments after will still be true although the vectors are not exact.

\begin{claim}
The optimal value of this convex program is at most $O(\frac{\epsilon^2}{\delta^2} \norm{A}_F^2 + r \delta^2 \norm{M}_F^2)$.
\end{claim}
\begin{proof} We prove that $W_1' = W-W_0'' =
  (W_0-W_0'')+W_1$ is a feasible solution and that the objective value of this solution is the value claimed in the lemma. 

 Since $W_1 = \sum_{t=t_0+1}^r (v_t v_t^T) W$ only
  contributes to the second term of the objective function
in (\ref{eqn:convexopt}), we can upper bound the objective as
 $$\norm{W_0-W_0''}_F^2 \norm{A}_F^2 +
 (\norm{W_0-W_0''}_F+\norm{W_1}_F)^2 \delta^2\norm{M}_F^2.$$
 The proof is completed because $\norm{W_0-W_0''}_F = O(\frac{\epsilon}{\delta})$ and $\norm{W_1}_F \le \norm{W}_F = \sqrt{r}$. 
\end{proof}

After solving the convex program, we obtaine a candidate 
$W_1'$. Let $W' = W_0''+W_1'$. To get the right $A'$ (since $W'$ is fixed) we can find the $A'$ that minimizes
$\norm{M-A'W'}_F^2$ by solving a least-squares problem. Clearly such an $A'$ satisfies
$\norm{M - A'(W_0''+W_1')}_F\le \norm{M - A(W_0''+W_1')}_F $
and the latter quantity is bounded  by
$\norm{M - AW_0}_F + \norm{A(W_0-W_0'')}_F + \norm{AW_1'}_F.$

Lemma~\ref{lemma:w0} bounds the first term and Lemma~\ref{lem:w0pp} bounds the second term. The
square of the last term is bounded by the objective function of
the convex program. 
\end{proof}

Finally, by choosing $\delta =\frac{ \sqrt{\epsilon}}{r^{1/4}}$ we get $A'$, $W' = W_0''+W_1'$ such that $\norm{M-A'W'}_F \le O(\epsilon^{1/2} r^{1/4}) \norm{M}_F$.

\section*{Concluding Remarks}

Here, we initiated a rigorous study of nonnegative matrix factorization. 
Our hardness result rules out significant improvements over our worst-case
results for fixed inner-dimension $r$. We believe that our $\mbox{poly}(m,n,r)$-time algorithm for finding separable factorizations may point the way for future work. What other plausible conditions can one impose on the factors in real-life applications? We also hope our work promotes further theoretical study of nonnegative rank. 

This work is part of a broader agenda of bringing greater rigor to the analysis of algorithms used in machine learning. Currently,  heuristic approaches are popular because the solution concepts are believed to be intractable. Our results, for example our algorithm for NMF under the separability condition, raise hope that sometimes the solution concepts may not be intractable after all.

\section*{Acknowledgements}

We thank David Blei and Saugata Basu for useful discussions. 

\newpage

\newpage
\appendix

\section{Extended Discussion}\label{sec:appendix:remarks}

Here we explain an application of NMF in detail. Perhaps the best approach is to contrast the NMF problem with a more well-known matrix factorization, the singular value decomposition (SVD): A $n \times m$ matrix $M$ can be written as $M = \sum_i \sigma_i u_i v_i^T$ where the set $\{u_i\}_i$ and the set $\{v_i\}_i$ are orthonormal and $\sigma_1 \geq \sigma_2 .... \geq \sigma_r > 0$ (see e.g. \cite{GV}). In a number of applications, we imagine that the columns of $M$ represent examples and the rows of $M$ represent observed variables. In the context of information retrieval one often forms $M$ as a "term-by-document" matrix where the $(i, j)^{th}$ entry in $M$ is the frequency of occurrence of the $i^{th}$ term in the $j^{th}$ document in the database. The SVD of $M$ (e.g. in Latent Semantic Indexing (LSI) \cite{LSI}) is often interpreted as a method to extract "topics" in the database: The set of vectors $\{u_i\}_i$ (in a truncated SVD) is the subspace that contains the maximum variance of the documents, and projecting columns of $M$ (i.e. documents) onto this basis is interpreted as a decomposition of each document into constituent topics. Documents can then be compared based on an inner-product in this space. 

In some sense, the decomposition into "topics" generated via SVD is inconsistent with our intuitive notion of what a topic is. The vectors $\{u_i\}_i$ have both positive and negative values -- these vectors are orthogonal. For example, imagine some documents are about cars and some others are about the weather. These "topics" would both be negatively correlated with mentioning the word "elephant" -- i.e. documents about either topic are unlikely to use this word. What this means is that when we compute the similarity of a pair of documents, the documents will be judged to be more similar if both omit the word "elephant". But this is not consistent with our intuitive model, and would lead to spurious latent relationships. We would expect similarity to be based on positive occurrences only. 

Hofman introduced a related approach (Probabilistic Latent Semantic Indexing \cite{Hof}) in which each document is normalized to be a distribution on words, and the goal is to compute a small set of $r$ topics (which are each distributions on words) and represent each document as a distribution on topics. This is equivalent (after an appropriate renormalization) to computing a nonnegative factorization of the term-by-document matrix $M$ into $AW$, where the columns of $A$ represent a set of $r$ topics and each column of $W$ expresses the corresponding document as a distribution on topics. The advantage of requiring this factorization to be nonnegative is that in Hofman's PLSI documents are judged to be similar based on words that they both contain. In LSI, documents can also be judged to be similar based on words they both omit. Arguably, Hofman's model is more consistent with our intuition and maybe this helps explain why (computational issues aside) a nonnegative factorization is, in many cases, preferred over an unrestricted one. 

\end{document}